\newtheorem{lemma}{Lemma}
\newtheorem{theorem}{Theorem}
\theoremstyle{definition}
\newtheorem{definition}{Definition}
\newtheorem{observation}{Observation}
\def\N{{\bf N}}
\begin{document}
	\title{The Slotted Online One-Sided Crossing Minimization Problem on $2$-Regular Graphs}
	\author{
		Elisabet Burjons
		\and
		Janosch Fuchs
		\and
		Henri Lotze
	}
	%
	\maketitle 
	\begin{abstract}
        In the area of graph drawing, the One-Sided Crossing
        Minimization Problem (OSCM) is defined on a bipartite graph
        with both vertex sets aligned parallel to each other and all edges
        being drawn as straight lines.  The task is to find a
        permutation of one of the node sets such that the total number
        of all edge-edge intersections, called \emph{crossings}, is
        minimized.  Usually, the degree of the nodes of one set is
        limited by some constant $k$, with the problem then
        abbreviated to OSCM-$k$.

        In this work, we study an online variant of this problem, in
        which one of the node sets is already given. The other node
        set and the incident edges are revealed iteratively and each node has to be inserted
        into placeholders, which we call \emph{slots}. The
        goal is again to minimize the number of crossings in the final
        graph. 
        Minimizing crossings in an online way is related to the more empirical field of dynamic graph drawing. 
        Note the \emph{slotted} OSCM problem makes instances harder to solve for an online algorithm 
        but in the offline case it is equivalent to the version without slots.

        We show that the online slotted OSCM-$k$ is not competitive
        for any $k\geq 2$ and subsequently limit the graph class to
        that of $2$-regular graphs, for which we show a lower bound of
        $4/3$ and an upper bound of $5$ on the competitive ratio.
	\end{abstract}

\section{Introduction}

Online algorithms were introduced by Sleator and Tarjan~\cite{DBLP:journals/cacm/SleatorT85}
to solve problems for which the instance is piecewise revealed to an algorithm,
which must make some irrevocable decision before the next element of the instance
is presented.
Online algorithms are classically analyzed using competitive analysis, where
the performance of an online algorithm is compared to that
of an optimal offline algorithm working on the same instance. 
The worst case ratio between any online algorithm and the optimal
offline solution is the competitive ratio of a problem.
For a deeper introduction to online algorithms and competitive analysis
we refer the reader to the reference books~\cite{DBLP:books/daglib/0097013, DBLP:series/txtcs/Komm16}.

In graph drawing problems, given a graph, one usually wants to embed
the graph into some space with limited dimensions. The most common 
and practical examples are on the Euclidean plane.
It is also usual to try to embed such graphs in a way that minimizes the 
number of edges that cross each other, i.e., their depictions overlap 
in a point that is not occupied by a vertex. If a graph can be embedded in
the Euclidean plane without any crossings, we say the graph is planar.
A survey on graph drawing and crossing minimization can be found in~\cite{DBLP:journals/comgeo/BattistaETT94, schaefer2012graph}. 

One common way to depict bipartite graphs is by arranging the 
vertices in each partition on a straight (horizontal) line, making the lines
for the two partition sides parallel. In this scenario, the edges
are drawn vertically from one side of the partition to the other 
as straight segments. 
The problem of minimizing the crossings in this scenario is reduced,
thus, to properly ordering the vertices in each partition.
However, in some practical applications it is enough to restrict ourselves 
to ordering one set of the partition (the free side), while the other set remains fixed (the fixed side).
It is also usual to restrict the degree of the vertices in the free side ~\cite{walter,DBLP:journals/ipl/LiS02}
This (one sided) problem is formally defined as follows. 
	
\begin{definition} Given a bipartite Graph $G = (S \dot\cup V, E)$.
  Let the nodes of $S$ and $V$ be aligned in some ordering on
  straight lines parallel to each other, where $S$ is on the top
  line and $V$ on the bottom line. Let the edges $E$ be drawn as
  straight lines only. Let the degree of the nodes of $S$ be bound
  by some $k \in \N$. The \emph{One-Sided Crossing Minimization
  Problem (OSCM-$k$)} is defined as the problem of finding a total
  ordering of the nodes of $S$ such that the number of resulting
  edge crossings in the graph is minimized.
\end{definition}
	
We will assume that the ordering of $V$ is part of the instance and
fixed, such that we can label and reference the nodes of $V$ with
ascending natural numbers, starting from the ``left''. If $|S| = |V|$, we sometimes speak of nodes ``above'' and ``below'' one another, by assuming that the nodes on both lines are drawn equidistantly.
	
\subsection{Related Work}
The OSCM problem has already been extensively studied in the past under different names, 
such as \emph{bipartite crossing number}~\cite{garey1983crossing,schaefer2012graph}, 
\emph{crossing problem}~\cite{DBLP:journals/algorithmica/EadesW94}, 
\emph{fixed-layer bipartite crossing minimization}~\cite{DBLP:journals/ipl/LiS02} and others. 
Eades and Wormald \cite{DBLP:journals/algorithmica/EadesW94} showed that the OSCM problem
is NP-complete for dense graphs, while Mu{\~n}oz et al.~\cite{walter} showed NP-completeness for sparse graphs. 
Mu{\~n}oz et al. also introduced the OSCM-$k$ and showed that the OSCM-$2$ can be solved optimally using the
barycenter heuristic. 

Li and Stallmann~\cite{DBLP:journals/ipl/LiS02} showed that the approximation ratio of the barycenter heuristic is 
in $\Omega(\sqrt{n})$ on general bipartite graphs and
also proved that OSCM-$k$ admits a tight $k-1$ approximation. 
Nagamochi presented a randomized approximation algorithm for general graphs ~\cite{DBLP:journals/dcg/Nagamochi05} and another approximation algorithm for bipartite graphs of large degree~\cite{DBLP:journals/tcs/Nagamochi05}. 

Further researching the complexity, Dujmovi{\'c} and Whitesides~\cite{DBLP:journals/algorithmica/DujmovicW04} first showed that OSCM is fixed parameter tractable, i.e., it can be solved in $f(k)n^{\mathcal{O}(1)}$,
where the parameter $k$ is the number of crossings. 
The currently best known FPT running time is $\mathcal{O}(3^{\sqrt{2k} + n})$ and was given by Kobayashi and Tamaki~\cite{DBLP:journals/algorithmica/KobayashiT15}.  

To the best of our knowledge, the field of online analysis on crossing minimization is hardly researched.  
A closely related problem arises in the field of graph drawing, called \emph{dynamic graph drawing}. 
Here, the task is to visually arrange a graph that is iteratively expanded over time. 
The visualization follows certain empirical criteria to make the data comprehensible, where crossing minimization is one of these criteria. 
For a survey regarding dynamic graph drawing, see~\cite{shannon2007considerations}. 
Dynamic graph drawing has many applications, for instance, 
Frishman and Tal~\cite{DBLP:journals/tvcg/FrishmanT08} present an algorithm to compute online layouts for a 
sequence of graphs and its application in discussion thread visualization and social network visualization. 
In another example, 
North and Woodhull~\cite{DBLP:conf/gd/NorthW01} focus on hierarchical graph drawing,
a more restricted graph class that needs to be visualized in a tree-like fashion, 
which overlaps with our topic regarding applications. 
While one of the most mentioned applications of the offline OSCM is wire crossing minimization in VLSI this is 
arguably less applicable when looking at an online version of the problem. 
However, the results of an online analysis can be helpful for the application fields of graph drawing, e.g., software visualization, decision support systems and interactive graph editors.

While dynamic graph drawing and online graph problems are similar in that parts of the graph are revealed in an iterative fashion and not previously known, a central difference is that in dynamic graph drawing the manipulation of previous decisions is usually allowed. 
This is not the case in the classical online model. 
Thus, while theory and practice are looking at similar problems, and are following the same goal of aesthetic graph drawings, the methods to achieve this goal are different.

\subsection{Our Contribution}

In this paper, we look at the online version of the OSCM-$k$ problem. 
Observe, that the online version of OSCM-$k$ can be defined in two different ways.
The first version is the online \emph{free} OSCM-$k$, where given a bipartite graph $(S\dot\cup V, E)$, 
an algorithm initially sees a fixed set of vertices $V$, and then, in each
step a request appears for a subset of
vertices $R_i\subseteq V$, which must be made adjacent to a vertex in $S$. 
Thus, after the arrival of the request $R_i$, one has to place a vertex $s_i \in S$ on the top line and adjacent to the vertices in $R_i$. 
In this version, one chooses the partial ordering of $s_i$ with respect to the other vertices already present in $S$.

The online free OSCM-$k$ problem is solvable with a competitive ratio of at most $k-1$, using the same barycenter algorithm as in the offline case~\cite{DBLP:journals/ipl/LiS02}. 

In this paper, we focus on a different version of this problem,
which we call the online \emph{slotted} OSCM-$k$, which is formally defined as follows.

\begin{definition}
  Given a vertex set $V$,
  a request sequence for online slotted OSCM-$k$ is a sequence $R_1,\hdots, R_n$ of subsets
  of $V$, each of size $k$. The set of vertices $S$ is initiated as $S=\{s_1,\hdots, s_n\}$. 
  Initially there are no edges between $S$ and $V$. Once a request $R_j\subseteq V$
  arrives, an online algorithm solving online slotted OSCM-$k$ chooses a vertex $s_i$ 
  without any edges, and places an edge between
  $s_i$ and every vertex in $R_j$. 
  The goal is to minimize total number of crossings.
\end{definition}

The slotted OSCM-$k$ is a model that follows the aesthetic paradigms of the area of dynamic graph drawing, where the so-called \emph{mental map} and human readability is sustained. 
The term \emph{mental map} describes the goal to make current visualization of the graph recognizable in later iterations of the graph. 
Compared to the free OSCM-$k$ no upper or lower bound on the competitive ratio is known. 

We call the vertices $s_i\in S$ {\em slots} moving forward. 
If a request $R_j\subseteq V$ is fulfilled by adding edges between every vertex
in $R_j$ and slot $s_i$ we say that request $R_j$ is {\em assigned} to slot $s_i$.
Moreover, we call a slot $s_i$ {\em unfulfilled} or {\em free} if no request has been satisfied using this slot,
thus the slot has no edges yet. Correspondingly, a {\em fulfilled} slot $s_i$ is a slot in $S$ with edges to 
a subset $R_j\subseteq V$.

Online slotted OSCM-$k$ has the advantage of knowing in advance the number of requests.
However, one has the distinct constraint that, once two consecutive slots are fulfilled, 
the algorithm will not be able to assign any request to a vertex between the fulfilled slots,
as such a vertex does not exist.

We prove, that online slotted OSCM-$k$ is not competitive for any $k\ge 2$
in general graph classes.
However,
if we focus on $2$-regular graphs, we prove that this problem has a constant competitive ratio.
In particular, we prove a lower bound of $4/3$ in this case, and then present an algorithm
with a competitive ratio of at most $5$ as an upper bound. 

\section{Lower Bounds on General Graphs}

We begin by looking at online slotted OSCM-$k$ on general graphs,
and show that for every non-trivial value of $k$, i.e., $k\ge 2$,
there is no algorithm with a constant competitive ratio.

\begin{theorem}\label{thm:oscmunrestricted}
    There is no online algorithm with a constant competitive ratio
    for online slotted OSCM-$k$, for any $k\ge 2$.
\end{theorem}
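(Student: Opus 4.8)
The plan is to construct, for each $k \ge 2$ and each online algorithm $\mathtt{A}$, a request sequence on which $\mathtt{A}$ incurs a positive number of crossings while the optimal offline solution incurs zero, which immediately rules out any finite competitive ratio. To make the optimum zero, I would work with a vertex set $V$ large enough that the requests can in principle be arranged without any crossings — for instance, by designing the instance so that the assigned subsets $R_1, \dots, R_n$ are pairwise ``nested or disjoint'' as intervals of $V$ (a laminar family), a configuration that always admits a planar drawing. The adversary's job is then to force $\mathtt{A}$ away from such a planar assignment.

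The core mechanism I would use is the rigidity constraint highlighted in the paper: once two consecutive slots $s_i, s_{i+1}$ are both fulfilled, nothing can ever be placed between them. So the first move is to bait $\mathtt{A}$. Present a request, say $R_1 = \{v, v'\}$ padded out to size $k$ with a block of $k$ consecutive ``low'' vertices of $V$; wherever $\mathtt{A}$ puts it, the adversary now reveals a second request whose only crossing-free placement would have been strictly between two slots that $\mathtt{A}$ has effectively already walled off — concretely, if $\mathtt{A}$ used slot $s_i$, the adversary picks a request ``interleaving'' $R_1$ (its vertex set in $V$ straddles $R_1$'s in the nesting order) and times earlier requests so that both $s_{i-1}$ and $s_{i+1}$ are occupied, or so that the only free slots lie entirely on one side of $s_i$. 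Either way, the new request must be assigned to a slot that forces its edges to cross $R_1$'s edges. Because $\mathtt{A}$ is online and the adversary reveals the ``trap'' request only after seeing where $R_1$ went, $\mathtt{A}$ cannot avoid this; meanwhile the offline optimum, seeing the whole sequence, assigns the two requests to the correct relative slot positions and pays nothing, and the remaining (padding) requests are assigned to keep the whole family laminar, also for free.

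To upgrade ``at least one crossing'' to ``unbounded ratio,'' I would repeat the trap on disjoint vertex blocks of $V$: partition a sufficiently large $V$ into $m$ far-apart intervals, run an independent copy of the two-request gadget on each, and observe that crossings from different blocks do not interact, so $\mathtt{A}$ pays at least $m$ while $\mathrm{OPT} = 0$. Since $m$ is arbitrary and $\mathrm{OPT}=0$, no constant — indeed no function of the instance — bounds the ratio, which is the definition of not competitive. For $k > 2$ the only change is cosmetic: each request carries $k-2$ extra padding vertices drawn from a private ``reservoir'' sub-block of its interval, chosen so they never create crossings in the offline solution and do not affect the nesting structure that drives the argument.

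The main obstacle I anticipate is the case analysis in the baiting step: I must make the gadget robust to \emph{every} possible slot choice of $\mathtt{A}$, including pathological ones like placing $R_1$ at the extreme left or right slot, or clustering fulfilled slots so that large contiguous regions are dead. The fix is to keep several ``trap'' requests in reserve and let the adversary choose which to fire based on $\mathtt{A}$'s response, and to use enough slots $n$ per block that after $\mathtt{A}$'s first move there is always both a still-reachable region (so the offline optimum stays planar) and a forced crossing for $\mathtt{A}$; verifying that these two requirements can be met simultaneously for all of $\mathtt{A}$'s choices is the crux, and I would handle it by a short exhaustive argument on the position of $R_1$ relative to the already-fulfilled slots in its block.
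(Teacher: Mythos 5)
Your plan hinges on producing instances with $\mathrm{OPT}=0$, and there are two concrete problems with how you set this up. First, a laminar (``nested or disjoint'') family does \emph{not} admit a crossing-free two-layer drawing: two requests $\{x_1,x_2\}$ and $\{y_1,y_2\}$ with $x_1<y_1<y_2<x_2$ cross at least twice no matter which slots they receive (this is exactly the 2-2 case in Figure~\ref{fig:node_exchange_uncritical}(f)); only pairwise non-overlapping intervals, possibly chained at shared endpoints, have zero unavoidable crossings. For the same reason, the ``trap'' request you describe as \emph{interleaving} $R_1$ would itself unavoidably cross $R_1$ and destroy $\mathrm{OPT}=0$. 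Second, and more seriously: a crossing-free family of size-$k$ requests on $V$ has at most $(|V|-1)/(k-1)$ members (each request occupies at least $k-1$ of the $|V|-1$ gaps, and non-overlapping requests occupy disjoint gap sets), while the problem demands exactly $n=|S|$ requests. So $\mathrm{OPT}=0$ is achievable only if the adversary may take $|V|$ substantially larger than $n$. The paper's definition leaves $|V|$ formally unconstrained, but every construction in the paper takes $|V|=|S|=n$, and under that convention $\mathrm{OPT}\ge 1$ on \emph{every} instance, so your entire strategy is unavailable. The paper instead accepts $\mathrm{OPT}=1$: it presents the path $\{v_1,v_2\},\dots,\{v_{n-1},v_n\}$, which can be placed crossing-free using $n-1$ of the $n$ slots, and then fires a duplicate of whichever end-request lies far from the single remaining free slot, forcing $\Theta(n)$ crossings against an optimum of $1$.

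If one does allow $|V|\gg n$, your approach can be repaired, but not by the mechanism you describe (walling off consecutive slots); the working version is a pigeonhole argument. Present one request $R_1=\{v_a,v_{a+1}\}$ with $a$ chosen near the middle of a large $V$, observe the slot $s_j$ it receives, and note that only $j-1$ free slots remain to its left and $n-j$ to its right. Whichever side is shorter, flood it: present $j$ (respectively $n-j+1$) pairwise disjoint requests whose intervals lie entirely on that side of $R_1$'s interval. One of them must be assigned to a slot on the wrong side of $s_j$, incurring four crossings with $R_1$ (the 4-0 configuration), while all intervals remain pairwise disjoint so $\mathrm{OPT}=0$; repeating on disjoint blocks then defeats any additive constant as well. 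That counting step --- verifying that for every possible $j$ one side overflows while enough vertices remain to keep all $n$ requests disjoint --- is precisely the ``crux'' you defer to an unspecified exhaustive argument, and without it, and without correcting the laminar/interleaving confusion and addressing the $|V|$ issue, the proposal does not yet constitute a proof.
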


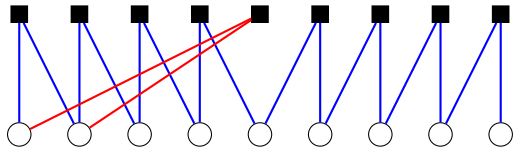
\begin{figure}[h]
	\centering
	\begin{tikzpicture}[node distance=1cm,
	slot/.style={draw,rectangle},
	vertex/.style={draw,circle},
	scale=0.8,every node/.style={scale=0.8}
	]
	\node[vertex] (x_1) at (0,0) {};
	\node[vertex] (x_2) at (1,0) {};
	\node[vertex] (x_3) at (2,0) {};
	\node[vertex] (x_4) at (3,0) {};
	\node[vertex] (x_5) at (4,0) {};
	\node[vertex] (x_6) at (5,0) {};
	\node[vertex] (x_7) at (6,0) {};
	\node[vertex] (x_8) at (7,0) {};
	\node[vertex] (x_9) at (8,0) {};

	\node[slot,fill=black] (s_1) at (0,2) {};
	\node[slot,fill=black] (s_2) at (1,2) {};
	\node[slot,fill=black] (s_3) at (2,2) {};
	\node[slot,fill=black] (s_4) at (3,2) {};
	\node[slot,fill=black] (s_5) at (4,2) {};
	\node[slot,fill=black] (s_6) at (5,2) {};
	\node[slot,fill=black] (s_7) at (6,2) {};
	\node[slot,fill=black] (s_8) at (7,2) {};
	\node[slot,fill=black] (s_9) at (8,2) {};
	
	\draw[blue,thick] (x_1) edge (s_1);
	\draw[blue,thick] (x_2) edge (s_1);
	\draw[blue,thick] (x_2) edge (s_2);
	\draw[blue,thick] (x_3) edge (s_2);
	\draw[blue,thick] (x_3) edge (s_3);
	\draw[blue,thick] (x_4) edge (s_3);
	\draw[blue,thick] (x_4) edge (s_4);
	\draw[blue,thick] (x_5) edge (s_4);
	\draw[blue,thick] (x_5) edge (s_6);
	\draw[blue,thick] (x_6) edge (s_6);
	\draw[blue,thick] (x_6) edge (s_7);
	\draw[blue,thick] (x_7) edge (s_7);
	\draw[blue,thick] (x_7) edge (s_8);
	\draw[blue,thick] (x_8) edge (s_8);
	\draw[blue,thick] (x_8) edge (s_9);
	\draw[blue,thick] (x_9) edge (s_9);
	
	\draw[red,thick] (x_1) edge (s_5);
	\draw[red,thick] (x_2) edge (s_5);
	\end{tikzpicture}
	\caption{\Cref{thm:oscmunrestricted}: An algorithm is presented the requests colored in blue first. Some slot has to be left open for which the request associated with the red edges is given.}
	\label{fig:oscmunrestricted}
\end{figure}
\begin{proof}
    Let us consider an algorithm $A$ solving  online slotted OSCM-$k$.
    Given the initial sets of vertices $V=\{v_1,\hdots, v_n\}$ and slots
    $S=\{s_1,\hdots,s_n\}$, $A$ is presented
    the following request sequence:
    $\{v_1,v_2\},\{v_2,v_3\},\ldots,\{v_{n-1},v_n\}$. 
    Assume without loss of generality that
    $A$ has assigned these requests to slots in $S$ without producing a single
    crossing.  Since we have $n$ requests to fill $n$ slots with, and
    $A$ has only one unfulfilled slot $s_i$ for some $i \in
    \{1,...,n\}$, the last request will be assigned to $s_i$. We assume,
    without loss of generality, that $i\le \lceil\frac{n}{2}\rceil$.
    The adversary now presents the request $\{v_{n-1},v_n\}$ as the
    last request of the input.  This results in at least $2\cdot
    2\cdot(\frac{n}{2} - 1)$ crossings as opposed to the optimal
    solution, which only results in a single crossing
    as depicted in \Cref{fig:oscmunrestricted}.  The
    competitive ratio is thus at least
    $\frac{2\cdot2\cdot(\frac{n}{2} - 1)}{1} = 2n-4$ and therefore not
    bounded by any fixed constant $c$.
\end{proof}

If we closely look at the proof, we see that the proof relies on
the adversary being able to freely choose the degrees of the vertices
in $V$. If we would require the degree of the vertices in $V$ to be 
defined in advance, the same strategy would not work. Thus, 
it makes sense to look at graph classes where the degree of
the vertices in the graph is fixed, in particular, 
regular graphs.

In what follows we focus, on online slotted OSCM-$2$ on
$2$-regular graphs, as this particular case is already hard to analyze,
and we prove that the competitive ratio is within the range between
$4/3$ and $5$. 

We conjecture that for any higher degree, online slotted OSCM-$k$ on $k$-regular graphs would also have a constant
competitive ratio, with the constant depending on $k$.
One can observe, that a higher vertex degree means that even optimal solutions
must have a lot of crossings. Thus, even when an online algorithm makes a sub-optimal
choice, the number of crossings of the optimal solution that it is compared to
should compensate for the mistakes.

\section{Lower Bound for \boldmath 2\unboldmath-Regular Graphs}

We begin by proving a lower bound for the competitive ratio of 
online slotted OSCM-$2$ on $2$-regular graphs.

It is important to note that an offline algorithm can find an optimal
solution in a greedy fashion, as we will see in Lemma~\ref{lem:node_exchange_uncritical}.
In the following lower bound, we prove that online algorithms cannot find an optimal solution,
greedily or otherwise. 
The difficulty is that a request cannot be assigned in between two consecutive fulfilled slots. 
Thus, an online algorithm has to fulfill a request by assigning it to a sub-optimal slot.
An example of such a situation is depicted in \Cref{fig:no_placement_optimal}. 
We can use this fact to construct a lower bound
for online slotted OSCM-$2$ on $2$-regular graphs as follows.

\begin{figure}[h]
    \centering
        \begin{tikzpicture}[node distance=1cm,
            slot/.style={draw,rectangle},
            vertex/.style={draw,circle},
    scale=0.8,every node/.style={scale=0.8}
            ]
            \node[vertex, label=below:{$x_1$}] (x_1) at (0,0) {};
            \node[vertex, label=below:{$x_2$}] (x_2) at (1,0) {};
            \node[vertex, label=below:{$y_1$}] (y_1) at (2,0) {};
            \node[vertex, label=below:{$y_2$}] (y_2) at (3,0) {};
            \node[slot,fill=black, label=above:$s_x$] (s_x) at (1,2) {};
            \node[slot,fill=black, label=above:$s_y$] (s_y) at (2,2) {};
            \node[slot] (dums) at (0,2) {};
            \node[slot] (dums2) at (3,2) {};
            
            \draw[red,thick] (x_1) edge (s_x);
            \draw[red,thick] (x_2) edge (s_x);
            \draw[blue,thick] (y_1) edge (s_y);
            \draw[blue,thick] (y_2) edge (s_y);
        \end{tikzpicture}
    \caption{In this graph, a new request $R_i = \{x_2,y_1\}$ appears. 
    This request cannot be fulfilled optimally. An assignment between $s_x$ and $s_y$ would
    be optimal, but there is no free slot between them.}
    \label{fig:no_placement_optimal}
\end{figure}
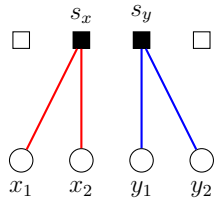

\begin{theorem}\label{thm:lower_bound}
    Every deterministic online algorithm, solving the slotted
    OSCM-$2$ on $2$-regular graphs, has a competitive ratio of at least
    $4/3 - \varepsilon$.
\end{theorem}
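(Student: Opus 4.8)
The plan is to exhibit an adversarial request sequence on which every deterministic online algorithm is forced, at some point, to assign a request to a slot that creates crossings, while the optimal offline solution for the same sequence creates strictly fewer crossings; the ratio of the two will tend to $4/3$. The core gadget is the configuration in \Cref{fig:no_placement_optimal}: two fulfilled slots $s_x, s_y$ sitting on adjacent slot positions, with $s_x$ joined to $\{x_1,x_2\}$ and $s_y$ joined to $\{y_1,y_2\}$, where $x_1 < x_2 < y_1 < y_2$ on the bottom line. When the request $\{x_2, y_1\}$ arrives, the unique crossing-free placement would be a slot strictly between $s_x$ and $s_y$, which by construction does not exist, so the algorithm must place this request either to the left of $s_x$ or to the right of $s_y$, incurring crossings in either case.

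First I would force the algorithm into such a configuration. I would feed it a sequence of ``easy'' requests — disjoint or near-disjoint pairs that, if it wants to stay competitive, it must route so as to avoid crossings — and argue by a case analysis (or a pigeonhole/potential argument over the pattern of used vs.\ free slots) that after a bounded number of rounds there is a pair of adjacent slots in the above bad pattern, with a ``trap'' request still to come. Since the graph must remain $2$-regular, I need to make sure every vertex of $V$ ends up with degree exactly $2$; this is arranged by choosing the later requests to complete the degrees, and the instance size is taken large enough (or the gadget repeated enough times along the line) that the additive constant $\varepsilon$ absorbs boundary effects. The optimal offline cost is computed by hand for this fixed sequence: offline, knowing all requests in advance, one simply orders the slots so the trap request sits between its neighbors, paying only the few unavoidable crossings, yielding a small constant $c_{\mathrm{OPT}}$; the online cost is at least $c_{\mathrm{OPT}} + (\text{extra crossings from the forced bad placement})$, and by tuning the gadget (e.g.\ making several independent copies so the ratio is an average, not dominated by one cheap copy) the ratio $c_{\mathrm{ALG}}/c_{\mathrm{OPT}}$ is pushed down to $4/3 - \varepsilon$.

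Concretely, I expect the tight constant to come from a three-request gadget whose optimal realization costs $3$ crossings and whose best online realization costs $4$: the adversary reveals two requests that the algorithm must (to stay within any hope of competitiveness) place on adjacent slots in the $x_1<x_2<y_1<y_2$ pattern, then reveals the trap $\{x_2,y_1\}$ plus whatever filler requests are needed to restore $2$-regularity; offline these three can be drawn with $3$ crossings total, while online the trap contributes an unavoidable extra crossing against one of $s_x,s_y$'s edges, giving $4$. Repeating this gadget $m$ times gives online cost $\ge 4m$ and optimal cost $3m$, hence ratio $\to 4/3$. Replacing exact counts by $4/3 - \varepsilon$ covers the slack from the finitely many slots at the ends of the line that cannot be packed into full gadgets.

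The main obstacle is the forcing step: showing that no clever deterministic algorithm can avoid ever being trapped. The algorithm might try to keep its free slots ``spread out'' so that a free slot is always available between any two fulfilled ones. The key observation to defeat this is the one highlighted in the paper — once two consecutive slots are both fulfilled, nothing can ever go between them — so the adversary's job is to detect, after each response, a pair of fulfilled adjacent slots whose bottom-neighborhoods form the bad pattern, and if the algorithm has been careful enough to avoid that, then it has necessarily ``wasted'' free slots elsewhere, which the adversary exploits by a different continuation. Making this dichotomy airtight — i.e., proving that every response either creates an exploitable adjacent pair now or forces one unavoidably within a bounded number of further rounds — is where the real work lies; I would handle it by a finite case distinction on the relative order in which the algorithm's first two or three placed slots appear on the top line, using the freedom to choose which vertices of $V$ the next request targets.
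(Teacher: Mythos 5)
Your overall plan---an adaptive adversary that repeats a local gadget along the line, with each gadget costing the algorithm $4$ crossings against an optimal $3$, plus a constant number of boundary crossings absorbed by $\varepsilon$---is exactly the architecture of the paper's proof, and your identification of the key obstruction (no request can ever be placed between two consecutive fulfilled slots) is the right one. However, the proposal has a genuine gap precisely where you yourself locate ``the real work'': the forcing step is never carried out, and the one concrete claim you make about it is false as stated. You assert that the algorithm ``must (to stay within any hope of competitiveness) place [the first two requests] on adjacent slots in the $x_1<x_2<y_1<y_2$ pattern.'' It must not: a deterministic algorithm can always leave a free slot between any two fulfilled slots during the early rounds, at no immediate cost, and thereby never enter the trap configuration of \Cref{fig:no_placement_optimal}. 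Competitiveness alone does not force adjacency; what forces a loss is a branching adversary that punishes \emph{every} response, and constructing that branching is the entire content of the theorem.

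The paper resolves this with an explicit two-way case distinction on the very first placement. The adversary opens with $\{v_3,v_4\}$ on the five leftmost free vertices. If the algorithm answers with slot $s_3$ or lower, the continuation is $\{v_1,v_2\},\{v_2,v_4\},\{v_1,v_3\}$ (a $4$-cycle on four vertices, four requests), and an enumeration over the three possible sets of remaining free slots shows every completion costs at least $4$ crossings versus an optimal $3$. If the algorithm answers with $s_4$ or higher, the continuation is instead $\{v_4,v_5\},\{v_3,v_5\},\{v_1,v_2\},\{v_1,v_2\}$ (a triangle plus a doubled edge, five requests), again costing at least $4$ versus $3$. Note that neither branch is the single ``trap request'' gadget you describe; the adversary adapts the \emph{entire subsequent request set} to the first answer, and $2$-regularity is restored differently in each branch. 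Your proposal would need this dichotomy (or an equivalent one) made airtight, including the verification that the two continuations really dominate all placements of the later requests; without it, the claimed $4$-vs-$3$ count rests on an assumption about the algorithm's behaviour that the algorithm is free to violate.
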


    \begin{figure}\begin{center}
        \begin{tikzpicture}[->,>=stealth,level distance=4em]
            \tikzstyle{level 1}=[sibling distance=8cm]
            \tikzstyle{level 2}=[sibling distance=4cm]
            \tikzstyle{level 3}=[sibling distance=2cm]
            \node {$\{v_3,v_4\}$}
                child { node {$\{v_1,v_2\},\{v_2,v_4\},\{v_1,v_3\}$}
                    child{ node{$c\ge 4/3$}
                      edge from parent node [right] {any placement}
                    }
                    edge from parent node [above left] {$s_3$ or lower}
                }
                child { node {$\{v_4,v_5\}, \{v_3,v_5\}, \{v_1,v_2\}, \{v_1,v_2\}$}
                    child{ node{$c\ge 4/3$}
                      edge from parent node [right] {any placement}
                    }
                    edge from parent node [above right] {$s_4$ or or higher}
                };
        \end{tikzpicture} 
        \caption{\Cref{thm:lower_bound}: All behaviors of any algorithm,
        when presented with request $\{v_3,v_4\}$, and the resulting competitive ratio of each decision branch, when confronted with this adversarial requests.}
        \label{fig:lower_bound}
    \end{center}\end{figure}
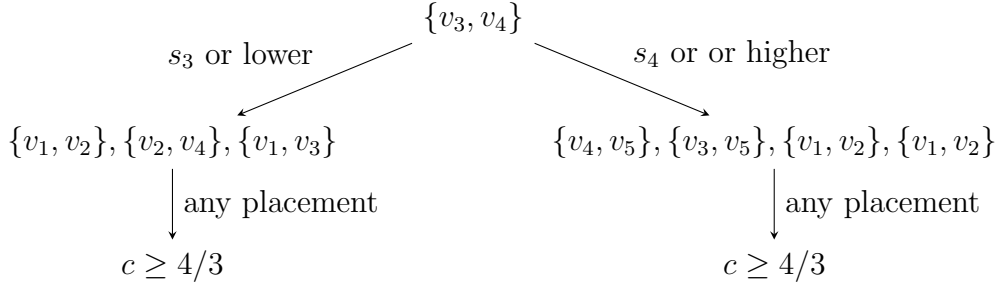

\begin{proof}
    For every node, every algorithm only has a finite amount of slots
    to insert it into. 
    Given an
    empty graph of size $n>6$, the adversary will repeat its strategy
    on the set of the five leftmost free nodes, filling up the graph
    from left to right, until $6$ or fewer free nodes are left in the
    graph. Given five free slots, the adversary will repeat the
    strategy depicted in \Cref{fig:lower_bound} which we will
    now describe in detail. For ease of notation, we will denote the
    five left-most free slots as $s_1,\hdots, s_5$ and the five 
    left-most edge-free vertices as $v_1,\hdots, v_5$.

    The adversary starts by presenting the request pair $\{v_3,v_4\}$.
    We consider two possibilities, either an algorithm places this request
    at $s_3$ or smaller, or $s_4$ or higher.

    \textbf{Case 1} (Algorithm assigns $\{v_3,v_4\}$ to $s_3$ or smaller): The adversary presents
    the request $\{v_1,v_2\}$. 
    We assume, that any reasonable algorithm places the second request to the left of the first one
    (on a smaller slot).
    If an algorithm would place the second request to the right it directly incurs $4$ crossings
    instead of none. 
    
    We branch on three possibilities depending on the free slots after the first two placements. 
    So, the free slots are either $\{s_1,s_4,s_5\}$, $\{s_2,s_4,s_5\}$, or $\{s_3,s_4,s_5\}$.
   
    The adversary presents the request $\{v_2,v_4\}$ and subsequently the request $\{v_1,v_3\}$.
    
    If the free slots are $\{s_3,s_4,s_5\}$ any assignment of $\{v_2,v_4\}$ and $\{v_1,v_3\}$ results in
    at least $7$ crossings.
    
    If the free slots are $\{s_2,s_4,s_5\}$ any assignment of $\{v_2,v_4\}$ and $\{v_1,v_3\}$ results in
    at least $4$ crossings.
  
    If the free slots are $\{s_1,s_4,s_5\}$ any assignment of $\{v_2,v_4\}$ and $\{v_1,v_3\}$ results in 
    at least $5$ crossings.
    
    However, an assignment of $\{v_3,v_4\}$ to $s_4$, $\{v_1,v_2\}$ to $s_1$, $\{v_2,v_4\}$ to $s_3$ and
    $\{v_1,v_3\}$ to $s_2$ results in only $3$ crossings, where the considered algorithms
    all have at least four crossings. Thus, the competitive ratio of any algorithm
    assigning request $\{v_3,v_4\}$ to $s_3$ or lower is at least $4/3$-competitive on these four requests.

    \textbf{Case 2} (Algorithm assigns $\{v_3,v_4\}$ to $s_4$ or larger): 
    The adversary presents the request $\{v_4,v_5\}$ followed by 
    $\{v_3,v_5\}$ and the two identical requests $\{v_1,v_2\}$. 
    
    The assignment of the last two requests to slots $s_1$ and $s_2$ is optimal 
    and generates one crossing.
    
    The optimal assignment places $\{v_3,v_4\}$ to $s_3$, $\{v_4,v_5\}$ to 
    $s_5$ and $\{v_3,v_5\}$ to $s_4$ resulting in two crossings, $3$ in total with the 
    last two requests.
    
    However, the algorithms we consider cannot assign $\{v_3,v_4\}$ to $s_3$, they assign it
    to $s_4$ or higher, these algorithms incur at least $3$ crossings for the first three requests,
    and $4$ crossings in total for the five requests. Thus, they have a competitive ratio 
    of at least $4/3$ on these five requests too,
    as expected.  
    
    Note that in Case 1, the adversary presents only
    four nodes in total, while in Case 2, five nodes
    are used. Independent of which case is used, the adversary can now
    use the five left-most free slots and edge-free vertices to repeat this tactic.
    Once $r \leq 6$ slots are left, the remaining slots are filled
    up as follows. 
    The adversary presents the following $r$ requests:
    $\{v_{n-r},v_{n-r+1}\},\ldots,\{v_{n-1},v_{n}\}$. 
    One slot is still free after presenting these requests, and the last request is $\{v_{n-r},v_{n}\}$.
    This results in $r-1 \leq 5$ additional, unavoidable crossings.

    From the case distinction above and the argument to fill up the
    rest of the graph, one can easily verify that the competitive
    ratio of every algorithm tends to $4/3$ for growing $n$.
\end{proof}

This lower bound proves that no online algorithm for online slotted OSCM-$2$ on $2$-regular
graphs can perform optimally on all instances. In the following, we introduce some notions that
are used to prove an upper bound for the competitive ratio in the same setting.

\section{Preliminaries and Notation}

In order to prove upper bounds for online slotted OSCM-$2$ on $2$-regular
graphs, we need to first extract some structural properties of this problem.
First, we introduce the notion of \emph{propagation arrows}, which helps
us to lower bound the total number of crossings of the remaining graph if we only
have a partial request sequence.
Then, we observe that finding an optimal placement, involves only looking
at the placement of every pair of requests relative to each other.


The number of crossings of an optimal assignment for a request sequence 
is the \emph{number of unavoidable crossings} of the request sequence. 
The difference between the number of crossings incurred by an algorithm $A$,
and the number of unavoidable crossings is consequently the \emph{number of avoidable
crossings} of $A$ on that request sequence.

Consider a $2$-regular instance for online slotted OSCM-$k$
with slots $S=\{s_1, \hdots, s_n\}$ and vertices $V=\{v_1,\hdots, v_n\}$,
and a request sequence $R_1,\hdots, R_n$. 
Let us assume that at some point after the $k$-th request has been fulfilled by algorithm $A$,
there are fulfilled slots, and the vertices in $V$ have degree 2, 1 or 0, depending on how many times these vertices have
appeared in requests. Because we know that the final graph will be $2$-regular, for those vertices in $V$ with degree less than two we are still expecting
a request that contains the vertex, and for any unfulfilled slot, there will be a request which will be fulfilled using this slot.

Intuitively, we use propagation arrows to greedily match unfulfilled vertices to available slots in a way that minimizes the number of crossings. 
We can see this through an illustration in Figure~\ref{fig:arrows}. For instance, in an
empty graph every vertex $v_i$ in $V$ will have two propagation arrows to the slot $s_i$,
but once some slots are occupied we take the leftmost vertex with degree less than two and
assign a propagation arrow to the left-most unfulfilled slot. We know that the instance is $2$-regular,
so for every pair of missing edges of vertices in $V$ there must be an empty slot.
We can define the propagation arrows formally as follows.

 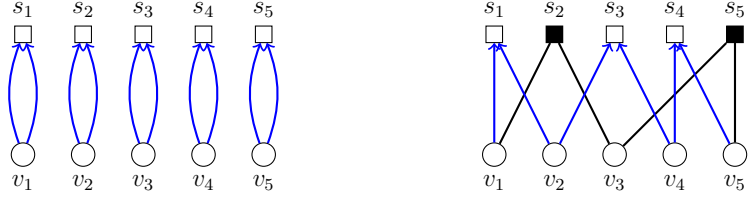
\begin{figure}
 \begin{center}
 \subfigure{
  \begin{tikzpicture}[node distance=1cm,
  slot/.style={draw,rectangle},
  vertex/.style={draw,circle},
  scale=0.8,every node/.style={scale=0.8}
  ]
      \foreach \t in {1,...,5}{
         \node[vertex, label=below:$v_\t$] (w_\t) at (\t,0) {};
         \node[slot, label=above:$s_\t$] (s_\t) at (\t,2) {};
         \draw[blue,thick,->] (w_\t) edge [bend left=20](s_\t);
         \draw[blue,thick,->] (w_\t) edge [bend right=20](s_\t);
      }
  \end{tikzpicture}
  }
  \hspace*{2cm}
  \subfigure{
  \begin{tikzpicture}[node distance=1cm,
  slot/.style={draw,rectangle},
  vertex/.style={draw,circle},
  scale=0.8,every node/.style={scale=0.8}
  ]
      \foreach \t in {1,...,5}{
         \node[vertex, label=below:$v_\t$] (w_\t) at (\t,0) {};
      }
      \node[slot, label=above:$s_1$] (s_1) at (1,2) {};
      \node[slot,fill=black, label=above:$s_2$] (s_2) at (2,2) {};
      \node[slot, label=above:$s_3$] (s_3) at (3,2) {};
      \node[slot, label=above:$s_4$] (s_4) at (4,2) {};
      \node[slot,fill=black, label=above:$s_5$] (s_5) at (5,2) {};
      \draw[thick] (s_2) edge (w_1)
		   (s_2) edge (w_3)
		   (s_5) edge (w_5)
		   (s_5) edge (w_3);
		   
      \draw[blue,thick,->] (w_1) edge (s_1);
      \draw[blue,thick,->] (w_2) edge (s_1);
      \draw[blue,thick,->] (w_2) edge (s_3);
      \draw[blue,thick,->] (w_4) edge (s_3);
      \draw[blue,thick,->] (w_4) edge (s_4);
      \draw[blue,thick,->] (w_5) edge (s_4);
  \end{tikzpicture}
  }
  \caption{Propagation arrows before the first instance and after part of the instance is fulfilled.}
  \label{fig:arrows}
  \end{center}
 \end{figure}

First,
we know that after $k$ requests for a $2$-regular graph, there are 
$n-k$ unfulfilled requests, which corresponds to $2(n-k)$ missing edges.
We will double count the missing edges with the following two lists.

The \emph{list of unfulfilled vertices} $L_V$ of an instance
after the $k$-th request, is an ordered list that contains every vertex
$v_i\in V$ from smallest to largest at most twice.
$L_V$ will contain no copies of a vertex $v_i\in V$ if it already has appeared
twice in the request sequence $R_1,\hdots, R_k$, i.e., if $v_i$ has degree $2$,
$L_V$ will contain $v_i\in V$ once if $v_i$ has appeared
only once in $R_1,\hdots, R_k$, i.e., if $v_i$ has degree $1$ in the partially fulfilled graph,
finally, $L_V$ contains a vertex $v_i$ twice if $v_i$ does not appear in $R_1,\hdots, R_k$, and 
thus has degree $0$ at that point.


We can, thus, analogously consider the \emph{list of unfulfilled slots} $L_S$ as an
ordered list that contains each unfulfilled slot twice, again from smallest to largest.
From the previous observation it should be clear that $|L_V|=|L_S|$.

\begin{definition}\label{def:prop_arrows}
Consider a $2$-regular instance for online slotted OSCM-$k$
with slots $S = \{s_1, \hdots, s_n\}$ and vertices $V = \{v_1,\hdots, v_n\}$,
and a request sequence $R_1,\hdots, R_n$. Let $A$ be an algorithm that has fulfilled
$k$ requests. Let us consider the corresponding $L_V$ and $L_S$ for this request.
There is a \emph{propagation arrow} from vertex $v$ to slot $s$ if both occupy
the same place in the ordered lists $L_V$ and $L_S$, i.e., if $v$ is the $i$-th
element of $L_V$ and $s$ is  $i$-th element of $L_S$ for some $i\in[2(n-k)]$.
\end{definition}

Observe that propagation arrows do not cross one another by construction.
So, if we count the crossings of a partial graph including the crossings between
graph edges and propagation arrows, we have a lower bound on the number of crossings
that the graph will have after the request sequence is completely fulfilled.

In the following, we want to observe, that an instance is optimally solved
if an only if, for every pair of requests, the relative order of their slot
assignments is optimal, i.e., if the placement of these two requests is such
that there are fewer crossings between them than otherwise.
This basically means, that a crossing is unavoidable, if and only if, the 
relative order of the two requests involved in this crossing is optimal, 
regardless of any other placement of any other request within the graph.
This provides us with a very powerful tool to analyze the performance
of online algorithms solving online slotted OSCM-$2$ on $2$-regular graphs. 

In order to prove the aforementioned statement, we first need the following lemma.

\begin{figure}
	\centering
	\subfigure{
		\begin{tikzpicture}[node distance=1cm,
		slot/.style={draw,rectangle},
		vertex/.style={draw,circle},
		scale=0.3,every node/.style={scale=0.4}
		]
		\node[vertex] (x_1) at (0,0) {};
		\node[vertex] (dum) at (1,0) {};
		\node[vertex] (x_2) at (2,0) {};
		\node[slot,fill=black] (s_x) at (1,2) {};
		\node[slot,fill=black] (s_y) at (0,2) {};
		\node[slot] (dums) at (2,2) {};

		\draw[red,thick] (x_1) edge (s_x);
		\draw[red,thick] (x_2) edge (s_x);
		\draw[blue,thick] (x_1) edge (s_y);
		\draw[blue,thick] (x_2) edge (s_y);
		
		\node[vertex] (x_1) at (3.5,0) {};
		\node[vertex] (dum) at (4.5,0) {};
		\node[vertex] (x_2) at (5.5,0) {};
		\node[slot,fill=black] (s_x) at (3.5,2) {};
		\node[slot,fill=black] (s_y) at (4.5,2) {};
		\node[slot] (dums) at (5.5,2) {};

		\draw[red,thick] (x_1) edge (s_x);
		\draw[red,thick] (x_2) edge (s_x);
		\draw[blue,thick] (x_1) edge (s_y);
		\draw[blue,thick] (x_2) edge (s_y);
		
		\node[scale=1.75] (a) at (2.75,-1) {(a) 1-1};
		\end{tikzpicture}
	}
	\hspace*{1.5cm}
	\subfigure{
		\begin{tikzpicture}[node distance=1cm,
		slot/.style={draw,rectangle},
		vertex/.style={draw,circle},
		scale=0.3,every node/.style={scale=0.4}
		]
		\node[vertex] (x_1) at (0,0) {};
		\node[vertex] (y_1) at (1,0) {};
		\node[vertex] (x_2) at (2,0) {};
		\node[vertex] (y_2) at (3,0) {};
		\node[slot,fill=black, label=above:] (s_x) at (1,2) {};
		\node[slot,fill=black, label=above:] (s_y) at (0,2) {};
		\node[slot] (dums) at (2,2) {};
		\node[slot] (dums2) at (3,2) {};
		
		\draw[red,thick] (x_1) edge (s_x);
		\draw[red,thick] (x_2) edge (s_x);
		\draw[blue,thick] (y_1) edge (s_y);
		\draw[blue,thick] (y_2) edge (s_y);
		
		\node[vertex] (x_1) at (4.5,0) {};
		\node[vertex] (y_1) at (5.5,0) {};
		\node[vertex] (x_2) at (6.5,0) {};
		\node[vertex] (y_2) at (7.5,0) {};
		\node[slot,fill=black] (s_x) at (4.5,2) {};
		\node[slot,fill=black] (s_y) at (5.5,2) {};
		\node[slot] (dums) at (6.5,2) {};
		\node[slot] (dums2) at (7.5,2) {};

		\draw[red,thick] (x_1) edge (s_x);
		\draw[red,thick] (x_2) edge (s_x);
		\draw[blue,thick] (y_1) edge (s_y);
		\draw[blue,thick] (y_2) edge (s_y);
		
		\node[scale=1.75] (a) at (3.75,-1) {(d) 3-1};
		\end{tikzpicture}
	}
	\hspace*{1.5cm}
	\subfigure{
		\begin{tikzpicture}[node distance=1cm,
		slot/.style={draw,rectangle},
		vertex/.style={draw,circle},
		scale=0.3,every node/.style={scale=0.4}
		]
		\node[vertex] (x_1) at (0,0) {};
		\node[vertex] (x_2) at (1,0) {};
		\node[vertex] (y_2) at (2,0) {};
		\node[slot,fill=black] (s_x) at (1,2) {};
		\node[slot,fill=black] (s_y) at (0,2) {};
		\node[slot] (dums) at (2,2) {};

		\draw[red,thick] (x_1) edge (s_x);
		\draw[red,thick] (x_2) edge (s_x);
		\draw[blue,thick] (x_1) edge (s_y);
		\draw[blue,thick] (y_2) edge (s_y);
		
		\node[vertex, ] (x_1) at (3.5,0) {};
		\node[vertex, ] (x_2) at (4.5,0) {};
		\node[vertex, ] (y_2) at (5.5,0) {};
		\node[slot,fill=black] (s_x) at (3.5,2) {};
		\node[slot,fill=black] (s_y) at (4.5,2) {};
		\node[slot] (dums) at (5.5,2) {};

		\draw[red,thick] (x_1) edge (s_x);
		\draw[red,thick] (x_2) edge (s_x);
		\draw[blue,thick] (x_1) edge (s_y);
		\draw[blue,thick] (y_2) edge (s_y);
		
		\node[scale=1.75] (a) at (2.75,-1) {(b) 2-1};
		\end{tikzpicture}
	}
	\\    
	\hspace*{0.1cm}
	\subfigure{
		\begin{tikzpicture}[node distance=1cm,
		slot/.style={draw,rectangle},
		vertex/.style={draw,circle},
		scale=0.3,every node/.style={scale=0.4}
		]
		\node[vertex] (x_1) at (0,0) {};
		\node[vertex] (y_1) at (2,0) {};
		\node[vertex] (x_2) at (1,0) {};
		\node[vertex] (y_2) at (3,0) {};
		\node[slot,fill=black] (s_x) at (1,2) {};
		\node[slot,fill=black] (s_y) at (0,2) {};
		\node[slot] (dums) at (2,2) {};
		\node[slot] (dums2) at (3,2) {};
		
		\draw[red,thick] (x_1) edge (s_x);
		\draw[red,thick] (x_2) edge (s_x);
		\draw[blue,thick] (y_1) edge (s_y);
		\draw[blue,thick] (y_2) edge (s_y);
		
		\node[vertex] (x_1) at (4.5,0) {};
		\node[vertex] (y_1) at (6.5,0) {};
		\node[vertex] (x_2) at (5.5,0) {};
		\node[vertex] (y_2) at (7.5,0) {};
		\node[slot,fill=black] (s_x) at (4.5,2) {};
		\node[slot,fill=black] (s_y) at (5.5,2) {};
		\node[slot] (dums) at (6.5,2) {};
		\node[slot] (dums2) at (7.5,2) {};

		\draw[red,thick] (x_1) edge (s_x);
		\draw[red,thick] (x_2) edge (s_x);
		\draw[blue,thick] (y_1) edge (s_y);
		\draw[blue,thick] (y_2) edge (s_y);
		
		\node[scale=1.75] (a) at (3.75,-1) {(e) 4-0};
		\end{tikzpicture}
	}
	\hspace*{1.5cm}
	\subfigure{
		\begin{tikzpicture}[node distance=1cm,
		slot/.style={draw,rectangle},
		vertex/.style={draw,circle},
		scale=0.3,every node/.style={scale=0.4}
		]
		\node[vertex] (x_1) at (0,0) {};
		\node[vertex] (x_2) at (1,0) {};
		\node[vertex] (y_2) at (2,0) {};
		\node[slot,fill=black] (s_x) at (1,2) {};
		\node[slot,fill=black] (s_y) at (0,2) {};
		\node[slot] (dums) at (2,2) {};

		\draw[red,thick] (x_1) edge (s_x);
		\draw[red,thick] (x_2) edge (s_x);
		\draw[blue,thick] (x_2) edge (s_y);
		\draw[blue,thick] (y_2) edge (s_y);
		
		\node[vertex] (x_1) at (3.5,0) {};
		\node[vertex] (x_2) at (4.5,0) {};
		\node[vertex] (y_2) at (5.5,0) {};
		\node[slot,fill=black] (s_x) at (3.5,2) {};
		\node[slot,fill=black] (s_y) at (4.5,2) {};
		\node[slot] (dums) at (5.5,2) {};

		\draw[red,thick] (x_1) edge (s_x);
		\draw[red,thick] (x_2) edge (s_x);
		\draw[blue,thick] (x_2) edge (s_y);
		\draw[blue,thick] (y_2) edge (s_y);
		
		\node[scale=1.75] (a) at (2.75,-1) {(c) 3-0};
		\end{tikzpicture}
	}
	\hspace*{1.5cm}
	\subfigure{
		\begin{tikzpicture}[node distance=1cm,
		slot/.style={draw,rectangle},
		vertex/.style={draw,circle},
		scale=0.3,every node/.style={scale=0.4}
		]
		\node[vertex] (x_1) at (0,0) {};
		\node[vertex] (y_1) at (1,0) {};
		\node[vertex] (y_2) at (2,0) {};
		\node[vertex] (x_2) at (3,0) {};
		\node[slot,fill=black] (s_x) at (1,2) {};
		\node[slot,fill=black] (s_y) at (0,2) {};
		\node[slot] (dums) at (2,2) {};
		\node[slot] (dums2) at (3,2) {};
		
		\draw[red,thick] (x_1) edge (s_x);
		\draw[red,thick] (x_2) edge (s_x);
		\draw[blue,thick] (y_1) edge (s_y);
		\draw[blue,thick] (y_2) edge (s_y);
		
		\node[vertex] (x_1) at (4.5,0) {};
		\node[vertex] (y_1) at (6.5,0) {};
		\node[vertex] (y_2) at (5.5,0) {};
		\node[vertex] (x_2) at (7.5,0) {};
		\node[slot,fill=black] (s_x) at (4.5,2) {};
		\node[slot,fill=black] (s_y) at (5.5,2) {};
		\node[slot] (dums) at (6.5,2) {};
		\node[slot] (dums2) at (7.5,2) {};

		\draw[red,thick] (x_1) edge (s_x);
		\draw[red,thick] (x_2) edge (s_x);
		\draw[blue,thick] (y_1) edge (s_y);
		\draw[blue,thick] (y_2) edge (s_y);
		
		\node[scale=1.75] (a) at (3.75,-1) {(f) 2-2};
		\end{tikzpicture}
	}
	\caption{Case distinction for step one of \Cref{lem:node_exchange_uncritical}. Each case is depicted before and after the untangling. The request $s_x$ is drawn in red and $s_y$ in blue.}
	\label{fig:node_exchange_uncritical}
\end{figure}
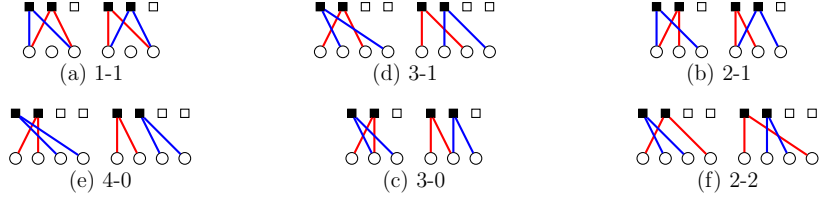

\begin{lemma}\label{lem:node_exchange_uncritical}
    Given two requests $R_x = \{x_1,x_2\}$ and $R_y = \{y_1, y_2\}$
    assigned to slots $s_x$ and $s_y$. Without loss of generality
    assume that $x_1 \leq y_1$  and $x_2\le y_2$.
    An assignment where $s_x < s_y$ generates fewer or equally many crossings in the final graph
    than an assignment where $s_y < s_x$ if every other assigned
    slot remains unchanged.
\end{lemma}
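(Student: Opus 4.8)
The plan is to show that replacing the given assignment (with $s_x<s_y$; call it the \emph{sorted} one) by the assignment with $s_y<s_x$ (the \emph{swapped} one) never decreases the total number of crossings. Since every other request keeps its slot, only the two edges of $R_x$ and the two edges of $R_y$ move, so the only crossings whose presence can change are (i)~crossings between an edge of $R_x$ and an edge of $R_y$, and (ii)~crossings between an edge of $R_x\cup R_y$ and an edge $e=(p,q)$ of some third request (given by the slot-position $p$ it uses and the vertex position $q$); any crossing between two ``other'' edges is literally the same object in both assignments. I would then split (ii) according to the position $p$ of $e$'s slot relative to the two positions $a<b$ occupied by $\{s_x,s_y\}$: the cases $p<a$, $a<p<b$, and $p>b$.

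The workhorse is the standard fact that two edges $(p_1,q_1),(p_2,q_2)$ with $p_1<p_2$ cross iff $q_1>q_2$. For a fixed third edge $e=(p,q)$ with $p<a$ or $p>b$, the two edges of $R_x$ lie on the same side of $e$'s slot in both assignments (and likewise for $R_y$), so $e$ contributes equally to both and drops out. For $e$ with $a<p<b$, in the sorted assignment the $R_x$-edges come from the left of $p$ and the $R_y$-edges from its right, while in the swapped assignment it is the reverse; a one-line computation with the above fact shows that the change (swapped minus sorted) in crossings of $e$ with $R_x\cup R_y$ equals $\big(\mathrm{sign}(y_1-q)-\mathrm{sign}(x_1-q)\big)+\big(\mathrm{sign}(y_2-q)-\mathrm{sign}(x_2-q)\big)$, which is $\ge 0$ since $\mathrm{sign}$ is monotone non-decreasing and $x_1\le y_1$, $x_2\le y_2$. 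For part (i), the number of $R_x$--$R_y$ crossings is $\sum_{i,j}[x_i>y_j]$ in the sorted assignment and $\sum_{i,j}[y_j>x_i]$ in the swapped one, so the change is $\sum_{i,j}\mathrm{sign}(y_j-x_i)$; pairing $(x_1,y_1)$ with $(x_2,y_2)$ gives two non-negative summands, and for the cross terms one uses that the two vertices inside $R_y$ are distinct (so $y_1<y_2$): whenever $\mathrm{sign}(y_1-x_2)=-1$ one necessarily has $y_2>y_1\ge x_1$, hence $\mathrm{sign}(y_2-x_1)=+1$, so their sum is again $\ge 0$. Adding the contributions of (i) and of all $e$ of type (ii) shows the total number of crossings can only increase under the swap, which is the claim.

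I expect the only real obstacle to be bookkeeping: correctly enumerating which crossings change, staying consistent about the ``crosses iff $q_1>q_2$'' orientation, and keeping inequality directions straight across the sub-cases — plus the mildly annoying cross term $\mathrm{sign}(y_1-x_2)$ in part~(i), whose sign is not fixed by the hypotheses alone and needs the distinctness observation. An alternative, more elementary but longer route matching \Cref{fig:node_exchange_uncritical} is to skip the $\mathrm{sign}$ bookkeeping and instead do an explicit case analysis on the relative order of $x_1,x_2,y_1,y_2$ and on which of them coincide, verifying each of the few resulting local pictures directly; the $\mathrm{sign}$-argument is essentially a uniform repackaging of those cases.
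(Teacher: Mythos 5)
Your proposal is correct and uses the same overall decomposition as the paper's proof: separate the crossings between $R_x$ and $R_y$ from the crossings with each third edge $e=(p,q)$, and observe that a third edge whose slot lies outside the interval spanned by $s_x$ and $s_y$ contributes identically to both assignments, so only slots strictly between them matter. The difference is purely in how each piece is verified. The paper checks the $R_x$--$R_y$ crossings by the exhaustive picture case analysis of \Cref{fig:node_exchange_uncritical} and handles third edges by a proof by contradiction with a case distinction on the positions of $u$ and $s_u$ (\Cref{fig:no_extra_cross_by_swap}); you replace both steps with the single monotonicity identity based on ``$(p_1,q_1)$ and $(p_2,q_2)$ with $p_1<p_2$ cross iff $q_1>q_2$,'' reducing everything to sums of $\mathrm{sign}$ terms that are nonnegative because $x_1\le y_1$ and $x_2\le y_2$. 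This buys a uniform argument that also transparently handles degenerate coincidences ($x_i=y_j$, shared endpoints with the third edge), and your treatment of the third-edge case is, if anything, more airtight than the paper's contradiction argument; the one spot needing care, the cross term $\mathrm{sign}(y_1-x_2)+\mathrm{sign}(y_2-x_1)$, you resolve correctly by noting that $y_1<x_2\le y_2$ forces $y_2>x_1$. No gap.
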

\begin{proof}
    We separate this proof into two steps. In the first step, we show
    that the number of crossings between the two requests is always
    the same or smaller if $s_x<s_y$. This can be done with an
    exhaustive case distinction
    and is depicted in \Cref{fig:node_exchange_uncritical}.
    
    Now, for the second step, we need to show that the number of
    crossings in an overall graph is still smaller or equal if $s_x<s_y$.
    We prove it by means of a contradiction.
    Let us consider a 2-regular graph $G$ for which we have two such requests
    $\{x_1,x_2\}$ and $\{y_1, y_2\}$
    and an assignment  where $s_y < s_x$,
    with a total number of crossings $c_G$.
    And let us consider the graph $G'$ which is the same as $G$ except that the placement of $\{x_1,x_2\}$ and $\{y_1, y_2\}$ is exchanged, making $s_x<s_y$, with a number of crossings $c_{G'}$. 
    Assume that $c_G<c_{G'}$. We already
    know that this is not due to the number of crossings between edges to $s_x$
    and $s_y$, as such a case would be covered by \Cref{fig:node_exchange_uncritical}.
    Without loss of generality assume thus that (one of) the extra crossing(s)
    in $G'$ is between some edge $(u,s_u)$ and one of the modified edges $(x_t,s_i)$ with $t\in\{1,2\}$.
    
    In order for this pair of edges to produce an extra crossing in $G'$
    compared to $G$ at all, we know that $u \notin [x_1,x_2]$, as otherwise this
    crossing is unavoidable and thus the same in $G$ and $G'$. We make a case
    distinction over the remaining cases, which we depict
    in~\Cref{fig:no_extra_cross_by_swap}.

    Assume now that $x_2 < u$. Then, $s_u < s_x$ in order for the edges to cross
    at all. This positioning produces two crossings with $R_x$ in $G$ and
    possibly some crossings with $R_y$. However, since $R_x$ is only assigned
    further to the right in $G'$, we get the exact same number of crossings
    between $(u,s_u)$ and the edges of $R_x$ and $R_y$ in $G'$.

    Assume finally that $u < x_2$. Then, $s_x < s_u$ in order for the edges to
    cross at all. This positioning produces two crossings with $R_x$ in $G$ and
    possibly some crossings with $R_y$. We do a case distinction whether $s_u <
    s_y$ or $s_y < s_u$.

    If $s_u < s_y$, then $R_y$ and $R_x$ simply ``change roles'' in $G'$
    compared to $G$ and the number of crossings remains the same. If $s_y <
    s_u$, then $(u,s_u)$ crosses the edges of $R_x$ and of $R_y$ completely in
    both $G$ and $G'$.
    
    Thus, by swapping the slot assignment in this way one cannot reduce the
    number of crossings in the overall graph.
\end{proof}

\Cref{lem:node_exchange_uncritical} plainly states that for each pair of requests,
the optimal ordering gives the left-most request a slot that is to the left of the slot 
assigned to the right-most request.
The notion of left and right requests only means here, that if the requests are not for 
identical pairs of vertices, the left request contains the left-most distinguished vertex.


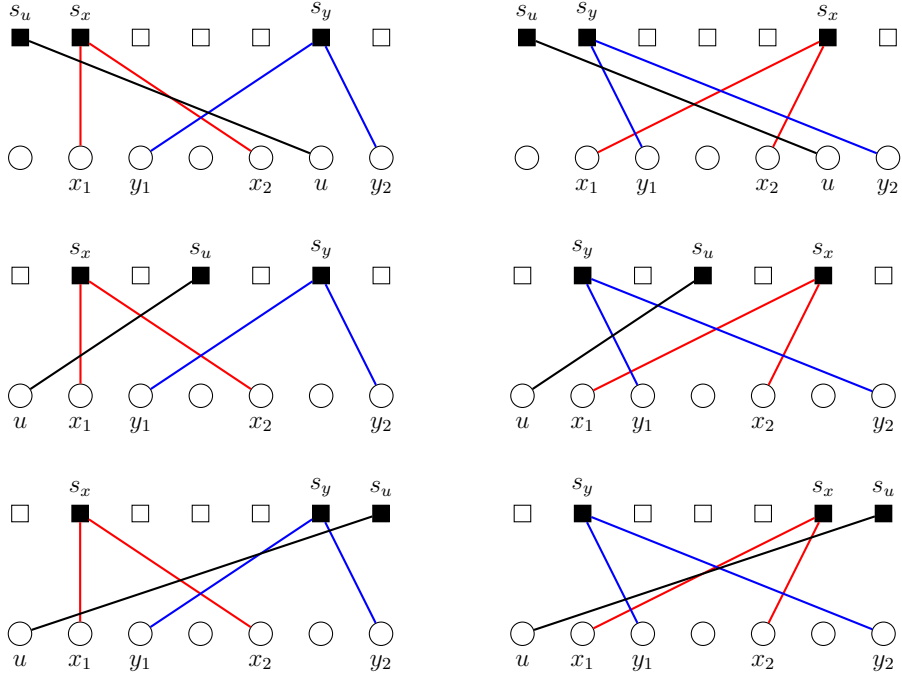
\begin{figure}
    \centering
    \subfigure{
    \begin{tikzpicture}[node distance=1cm,
        slot/.style={draw,rectangle},
        vertex/.style={draw,circle},
  scale=0.8,every node/.style={scale=0.8}
        ]
        \node[vertex, label=below:{}]      (a_1) at (0,0) {};
        \node[vertex, label=below:{$x_1$}] (x_1) at (1,0) {};
        \node[vertex, label=below:{$y_1$}]      (y_1) at (2,0) {};
        \node[vertex, label=below:{}]   (a_2)   at (3,0) {};
        \node[vertex, label=below:{$x_2$}] (x_2) at (4,0) {};
        \node[vertex, label=below:{$u$}]      (u) at (5,0) {};
        \node[vertex, label=below:{$y_2$}]      (y_2) at (6,0) {};
        \node[slot,fill=black, label=above:$s_x$] (s_x) at (1,2) {};
        \node[slot,fill=black, label=above:$s_y$] (s_y) at (5,2) {};
        \node[slot] (dums) at (2,2) {};
        \node[slot,fill=black, label=above:$s_u$] (s_u) at (0,2) {};
        \node[slot] (dums3) at (4,2) {};
        \node[slot] (dums4) at (3,2) {};
        \node[slot] (dums3) at (6,2) {};
        
        \draw[red,thick] (x_1) edge (s_x);
        \draw[red,thick] (x_2) edge (s_x);
        \draw[blue,thick] (y_1) edge (s_y);
        \draw[blue,thick] (y_2) edge (s_y);
        \draw[black,thick] (u) edge (s_u);
    \end{tikzpicture}
    }\hspace*{1cm}
    \subfigure{
    \begin{tikzpicture}[node distance=1cm,
        slot/.style={draw,rectangle},
        vertex/.style={draw,circle},
  scale=0.8,every node/.style={scale=0.8}
        ]
        \node[vertex, label=below:{}]      (a_1) at (0,0) {};
        \node[vertex, label=below:{$x_1$}] (x_1) at (1,0) {};
        \node[vertex, label=below:{$y_1$}]      (y_1) at (2,0) {};
        \node[vertex, label=below:{}]   (a_2)   at (3,0) {};
        \node[vertex, label=below:{$x_2$}] (x_2) at (4,0) {};
        \node[vertex, label=below:{$u$}]      (u) at (5,0) {};
        \node[vertex, label=below:{$y_2$}]      (y_2) at (6,0) {};
        \node[slot,fill=black, label=above:$s_y$] (s_y) at (1,2) {};
        \node[slot,fill=black, label=above:$s_x$] (s_x) at (5,2) {};
        \node[slot] (dums) at (2,2) {};
        \node[slot,fill=black, label=above:$s_u$] (s_u) at (0,2) {};
        \node[slot] (dums3) at (4,2) {};
        \node[slot] (dums4) at (3,2) {};
        \node[slot] (dums3) at (6,2) {};
        
        \draw[red,thick] (x_1) edge (s_x);
        \draw[red,thick] (x_2) edge (s_x);
        \draw[blue,thick] (y_1) edge (s_y);
        \draw[blue,thick] (y_2) edge (s_y);
        \draw[black,thick] (u) edge (s_u);
    \end{tikzpicture}
    }\\
    \subfigure{
    \begin{tikzpicture}[node distance=1cm,
        slot/.style={draw,rectangle},
        vertex/.style={draw,circle},
  scale=0.8,every node/.style={scale=0.8}
        ]
        \node[vertex, label=below:{$u$}]      (u) at (0,0) {};
        \node[vertex, label=below:{$x_1$}] (x_1) at (1,0) {};
        \node[vertex, label=below:{$y_1$}]      (y_1) at (2,0) {};
        \node[vertex, label=below:{}]   (a_2)   at (3,0) {};
        \node[vertex, label=below:{$x_2$}] (x_2) at (4,0) {};
        \node[vertex, label=below:{}]      (a_3) at (5,0) {};
        \node[vertex, label=below:{$y_2$}]      (y_2) at (6,0) {};
        \node[slot,fill=black, label=above:$s_x$] (s_x) at (1,2) {};
        \node[slot,fill=black, label=above:$s_y$] (s_y) at (5,2) {};
        \node[slot] (dums) at (2,2) {};
        \node[slot,fill=black, label=above:$s_u$] (s_u) at (3,2) {};
        \node[slot] (dums3) at (4,2) {};
        \node[slot] (dums4) at (0,2) {};
        \node[slot] (dums3) at (6,2) {};
        
        \draw[red,thick] (x_1) edge (s_x);
        \draw[red,thick] (x_2) edge (s_x);
        \draw[blue,thick] (y_1) edge (s_y);
        \draw[blue,thick] (y_2) edge (s_y);
        \draw[black,thick] (u) edge (s_u);
    \end{tikzpicture}
    }\hspace*{1cm}
    \subfigure{
    \begin{tikzpicture}[node distance=1cm,
        slot/.style={draw,rectangle},
        vertex/.style={draw,circle},
  scale=0.8,every node/.style={scale=0.8}
        ]
        \node[vertex, label=below:{$u$}]      (u) at (0,0) {};
        \node[vertex, label=below:{$x_1$}] (x_1) at (1,0) {};
        \node[vertex, label=below:{$y_1$}]      (y_1) at (2,0) {};
        \node[vertex, label=below:{}]   (a_2)   at (3,0) {};
        \node[vertex, label=below:{$x_2$}] (x_2) at (4,0) {};
        \node[vertex, label=below:{}]      (a_3) at (5,0) {};
        \node[vertex, label=below:{$y_2$}]      (y_2) at (6,0) {};
        \node[slot,fill=black, label=above:$s_y$] (s_y) at (1,2) {};
        \node[slot,fill=black, label=above:$s_x$] (s_x) at (5,2) {};
        \node[slot] (dums) at (2,2) {};
        \node[slot,fill=black, label=above:$s_u$] (s_u) at (3,2) {};
        \node[slot] (dums3) at (4,2) {};
        \node[slot] (dums4) at (0,2) {};
        \node[slot] (dums3) at (6,2) {};
        
        \draw[red,thick] (x_1) edge (s_x);
        \draw[red,thick] (x_2) edge (s_x);
        \draw[blue,thick] (y_1) edge (s_y);
        \draw[blue,thick] (y_2) edge (s_y);
        \draw[black,thick] (u) edge (s_u);
    \end{tikzpicture}
    }\\
    \subfigure{
    \begin{tikzpicture}[node distance=1cm,
        slot/.style={draw,rectangle},
        vertex/.style={draw,circle},
  scale=0.8,every node/.style={scale=0.8}
        ]
        \node[vertex, label=below:{$u$}]      (u) at (0,0) {};
        \node[vertex, label=below:{$x_1$}] (x_1) at (1,0) {};
        \node[vertex, label=below:{$y_1$}]      (y_1) at (2,0) {};
        \node[vertex, label=below:{}]   (a_2)   at (3,0) {};
        \node[vertex, label=below:{$x_2$}] (x_2) at (4,0) {};
        \node[vertex, label=below:{}]      (a_3) at (5,0) {};
        \node[vertex, label=below:{$y_2$}]      (y_2) at (6,0) {};
        \node[slot,fill=black, label=above:$s_x$] (s_x) at (1,2) {};
        \node[slot,fill=black, label=above:$s_y$] (s_y) at (5,2) {};
        \node[slot] (dums) at (2,2) {};
        \node[slot,fill=black, label=above:$s_u$] (s_u) at (6,2) {};
        \node[slot] (dums3) at (4,2) {};
        \node[slot] (dums4) at (0,2) {};
        \node[slot] (dums3) at (3,2) {};
        
        \draw[red,thick] (x_1) edge (s_x);
        \draw[red,thick] (x_2) edge (s_x);
        \draw[blue,thick] (y_1) edge (s_y);
        \draw[blue,thick] (y_2) edge (s_y);
        \draw[black,thick] (u) edge (s_u);
    \end{tikzpicture}
    }\hspace*{1cm}
    \subfigure{
    \begin{tikzpicture}[node distance=1cm,
        slot/.style={draw,rectangle},
        vertex/.style={draw,circle},
  scale=0.8,every node/.style={scale=0.8}
        ]
        \node[vertex, label=below:{$u$}]      (u) at (0,0) {};
        \node[vertex, label=below:{$x_1$}] (x_1) at (1,0) {};
        \node[vertex, label=below:{$y_1$}]      (y_1) at (2,0) {};
        \node[vertex, label=below:{}]   (a_2)   at (3,0) {};
        \node[vertex, label=below:{$x_2$}] (x_2) at (4,0) {};
        \node[vertex, label=below:{}]      (a_3) at (5,0) {};
        \node[vertex, label=below:{$y_2$}]      (y_2) at (6,0) {};
        \node[slot,fill=black, label=above:$s_y$] (s_y) at (1,2) {};
        \node[slot,fill=black, label=above:$s_x$] (s_x) at (5,2) {};
        \node[slot] (dums) at (2,2) {};
        \node[slot,fill=black, label=above:$s_u$] (s_u) at (6,2) {};
        \node[slot] (dums3) at (4,2) {};
        \node[slot] (dums4) at (0,2) {};
        \node[slot] (dums3) at (3,2) {};
        
        \draw[red,thick] (x_1) edge (s_x);
        \draw[red,thick] (x_2) edge (s_x);
        \draw[blue,thick] (y_1) edge (s_y);
        \draw[blue,thick] (y_2) edge (s_y);
        \draw[black,thick] (u) edge (s_u);
    \end{tikzpicture}
    }
    \caption{As shown in~\Cref{lem:node_exchange_uncritical}, there cannot be a
    crossing between $R_x$ and an edge $(u,s_u)$ that makes the ordering $s_y <
    s_x$ better than $s_x < s_y$.}
    \label{fig:no_extra_cross_by_swap}
\end{figure}

%

In order to find an upper bound on the competitive ratio, we only have to see
that any pair of requests is either placed optimally or otherwise bound the number of crossings
generated by that pair with the number of unavoidable crossings in the optimal solution.

\section{Upper Bound for \boldmath 2\unboldmath-Regular Graphs}

With these structural properties we are ready to present the algorithm that will provide us
with an upper bound of $5$ for the competitive ratio.

Neglecting to take the state of the graph into account when making
decisions regarding the insertion of requests seems to result in
relatively bad upper bounds. As an example, we take the simple
barycenter algorithm (\Cref{alg:barycenter}) proposed
in~\cite{walter}, which optimally solves the offline OSCM-2.
This algorithm, computes the average between the two requested vertices 
and assigns it to this particular point. In the case of the slotted version of
the problem, we have to adjust it to take the nearest free slot.

\begin{algorithm}[tb]
    \begin{algorithmic}[1]
        \State{\textit{free\_slots} = $\{1,\dots,n\}$;}
        \For{\textit{$\{x_1,x_2\}$} in \textit{input}}
            \State{$s := \lfloor\frac{x_1 + x_2}{2}\rfloor$;}
            \While{{\it$s$.isUsed()}}
                \State{$s := \{\, t \mid argmin_{\forall t\in S. \neg {\it t.isUsed()}}(t - s) \mid \,\}$ // take leftmost on tie}
            \EndWhile
            \State{Assign $\{x_1,x_2\}$ to $s$;}
        \EndFor
    \end{algorithmic}
    \caption{Barycenter algorithm from~\cite{walter}, adjusted for the slotted case.}
    \label{alg:barycenter}
\end{algorithm}

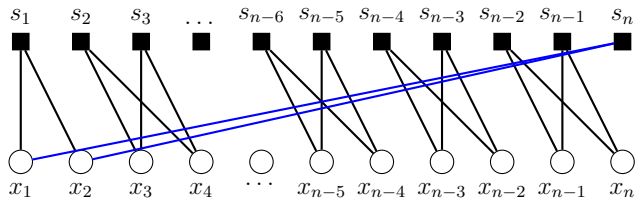
\begin{figure}
    \centering
    \begin{tikzpicture}[node distance=1cm,
        slot/.style={draw,rectangle},
        vertex/.style={draw,circle},
  scale=0.8,every node/.style={scale=0.8}
        ]
        \node[vertex, label=below:{$x_1$}] (x_1) at (0,0) {};
        \node[vertex, label=below:{$x_2$}] (x_2) at (1,0) {};
        \node[vertex, label=below:{$x_3$}] (x_3) at (2,0) {};
        \node[vertex, label=below:{$x_4$}] (x_4) at (3,0) {};
        \node[vertex, label=below:{$\dots$}] (x_dum) at (4,0) {};
        \node[vertex, label=below:{$x_{n-5}$}] (x_5) at (5,0) {};
        \node[vertex, label=below:{$x_{n-4}$}] (x_6) at (6,0) {};
        \node[vertex, label=below:{$x_{n-3}$}] (x_7) at (7,0) {};
        \node[vertex, label=below:{$x_{n-2}$}] (x_8) at (8,0) {};
        \node[vertex, label=below:{$x_{n-1}$}] (x_9) at (9,0) {};
        \node[vertex, label=below:{$x_n$}] (x_10) at (10,0) {};
        \node[slot,fill=black, label=above:$s_1$] (s_1) at (0,2) {};
        \node[slot,fill=black, label=above:$s_2$] (s_2) at (1,2) {};
        \node[slot,fill=black, label=above:$s_3$] (s_3) at (2,2) {};
        \node[slot,fill=black, label=above:$\dots$] (s_dum) at (3,2) {};
        \node[slot,fill=black, label=above:$s_{n-6}$] (s_4) at (4,2) {};
        \node[slot,fill=black, label=above:$s_{n-5}$] (s_5) at (5,2) {};
        \node[slot,fill=black, label=above:$s_{n-4}$] (s_6) at (6,2) {};
        \node[slot,fill=black, label=above:$s_{n-3}$] (s_7) at (7,2) {};
        \node[slot,fill=black, label=above:$s_{n-2}$] (s_8) at (8,2) {};
        \node[slot,fill=black, label=above:$s_{n-1}$] (s_9) at (9,2) {};
        \node[slot,fill=black, label=above:$s_{n}$] (s_10) at (10,2) {};
        
        \draw[black,thick] (x_1) edge (s_1);
        \draw[black,thick] (x_2) edge (s_1);
        \draw[black,thick] (x_3) edge (s_2);
        \draw[black,thick] (x_4) edge (s_2);
        \draw[black,thick] (x_3) edge (s_3);
        \draw[black,thick] (x_4) edge (s_3);
        \draw[black,thick] (x_5) edge (s_4);
        \draw[black,thick] (x_6) edge (s_4);
        \draw[black,thick] (x_5) edge (s_5);
        \draw[black,thick] (x_6) edge (s_5);
        \draw[black,thick] (x_7) edge (s_6);
        \draw[black,thick] (x_8) edge (s_6);
        \draw[black,thick] (x_7) edge (s_7);
        \draw[black,thick] (x_8) edge (s_7);
        \draw[black,thick] (x_9) edge (s_8);
        \draw[black,thick] (x_10) edge (s_8);
        \draw[black,thick] (x_9) edge (s_9);
        \draw[black,thick] (x_10) edge (s_9);

        \draw[blue,thick] (x_1) edge (s_10);
        \draw[blue,thick] (x_2) edge (s_10);
    \end{tikzpicture}
    \caption{The request sequence is $\{x_{n-1},x_n\},
    \{x_{n-1},x_n\},\allowbreak \{x_{n-3},x_{n-2}\},\allowbreak
    \{x_{n-3},x_{n-2}\}, \dots, \{x_1,x_2\}, \{x_1,x_2\}$. The last
    node crosses all others, resulting in roughly $4n$
    crossings compared to $\frac{n}{2}$ crossings in the optimal case.}
	\label{fig:barybad}
\end{figure}

Algorithm~\ref{alg:barycenter} is no better than 8-competitive, as the following simple example
illustrated in~\Cref{fig:barybad} shows. If we request the sequence $\{x_{n-1},x_n\},\allowbreak
\{x_{n-1},x_n\},\allowbreak \{x_{n-3},x_{n-2}\},\allowbreak \{x_{n-3},x_{n-2}\}, \dots, \{x_1,x_2\}, \{x_1,x_2\}$,
the two first requests are placed on slots $s_{n-1},s_{n-2}$ and then consecutively, the following
requests occupy slots to the left of those until the last request, which is assigned the only
available slot $s_n$. The last pair of edges crosses all others, resulting in roughly $4n$
crossings compared to $\frac{n}{2}$ crossings in the optimal case.

In order to achieve a good upper bound for the OSCM-2, we present
\Cref{alg:minallcrossings} that given a request, selects the slot
which minimizes the total number of crossings -- including crossings between edges and propagation
arrows -- among all available slots.
\begin{algorithm}[tb]
    \begin{algorithmic}[1]
        \State{\textit{free\_slots} = $\{1,\dots,n\}$;}
        \For{\textit{element} in \textit{input}}
            \State{\textit{least\_crossings} $:=\infty$;}
            \State{\textit{best\_slot} $:=0$;}
            \For{\textit{slot} in \textit{free\_slots}}
                \State{\textit{$G$.simulate\_node\_insertion}(\textit{slot}, \textit{element});}
                \State{\textit{new\_crossings}$=$\textit{$G$.edge\_edge\_crossings}() + \textit{$G$.edge\_prop\_crossings}();}
                \If{\textit{new\_crossings} $<$ \textit{least\_crossings}}
                    \State{\textit{least\_crossings} $=$ \textit{new\_crossings};}
                    \State{\textit{best\_slot} $=$ \textit{slot};}
                \EndIf
                \State{\textit{$G$.revert\_simulated\_insertion}(\textit{slot}, \textit{element});}
            \EndFor
            \State{\textit{$G$.insert\_node}(\textit{best\_slot}, \textit{element});}
            \State{\textit{free\_slots} $:=$ \textit{free\_slots} $\setminus$ \textit{best\_slot};}
        \EndFor
    \end{algorithmic}
    \caption{Chooses in each step  the insertion with the lowest number of additional edge-edge and edge-propagation arrow crossings.}
    \label{alg:minallcrossings}
\end{algorithm}

Note that analyzing an algorithm in this setting is not completely
trivial. Our approach is to show that the types of crossings between
two requests produced by our algorithm are good-natured. Specifically,
we look at pairs of requests for which the crossings can be completely avoided 
if they are appropriately ordered, i.e., 3-0 or 4-0 crossings as depicted in 
Figure~\ref{fig:node_exchange_uncritical} (c) and (e)
respectively.
This type of crossing, then, is either not
produced by Algorithm~\ref{alg:minallcrossings} or we can show that a number of unavoidable
crossings is necessary to produce this configuration. With this, we
can then upper bound the competitive ratio. Note that this relatively
rough estimate is most likely an overestimation of the actual
competitive ratio of the algorithm, but even such an estimate 
already requires a lot of structural analysis.

First, we present some lemmata outlining some relevant structural properties
of assignments made by \Cref{alg:minallcrossings}, then we consider each
type of critical crossing, 4-0 crossings and then 3-0 crossings and show 
that the competitive ratio is still bounded when these types of crossings appear.

\subsection{Structural Properties}
To start the analysis of \Cref{alg:minallcrossings} we first make a
few observations on the changes of the propagation arrows after a
request is fulfilled.
 
Consider a request $\{x_1,x_2\}$, which is assigned to slot $s_x$ by some algorithm. 
Before this request arrived, there were two propagation arrows from vertices $y_1$ and $y_2$ going to slot $s_x$ (note that it is possible that $y_1=y_2$). 
After the request is assigned to $s_x$ the propagation arrows pointing to $s_x$ have to be shifted, as slot is not available anymore. 
Simultaneously, one propagation arrow of each $x_1$ and $x_2$ disappears as the request is fulfilled. 
The rest of the propagation arrows have to reflect this movement out of $s_x$ and into the two empty positions
left by $x_1$ and $x_2$, and they do so in the following way. 
 
  \begin{figure}[h]
 \begin{center}
 \subfigure{
  \begin{tikzpicture}[node distance=1cm,
  slot/.style={draw,rectangle},
  vertex/.style={draw,circle},
  scale=0.8,every node/.style={scale=0.8}
  ]
      \node[vertex, label=below:$v_{i_1}$] (v_{i_1}) at (0,0) {};
      \node[vertex] (van1) at (1,0) {};
      \node[vertex, label=below:$v_{k_1}$] (v_{k_1}) at (2,0) {};
      \node[vertex, label=below:$v_{k_2}$] (v_{k_2}) at (3,0) {};
      \node[vertex] (van2) at (4,0) {};
      \node[vertex, label=below:$v_{i_2}$] (v_{i_2}) at (5,0) {};
      \node[slot,label=above:$s_j$] (s_j) at (2.5,2) {};
      \node[slot] (san1) at (1,2) {};
      \node[slot] (san2) at (4,2) {};
         \draw[blue,thick,->] (v_{i_1}) -- (san1);
         \draw[blue,thick,->] (van1) -- (san1);
         \draw[blue,thick,->] (v_{k_1}) -- (s_j);
         \draw[blue,thick,->] (v_{k_2}) -- (s_j);
         \draw[blue,thick,->] (van2) -- (san2);
         \draw[blue,thick,->] (v_{i_2}) -- (san2);
  \end{tikzpicture}
  }
  \hspace*{1cm}
  \subfigure{
  \begin{tikzpicture}[node distance=1cm,
  slot/.style={draw,rectangle},
  vertex/.style={draw,circle},
  scale=0.8,every node/.style={scale=0.8}
  ]
      \node[vertex, label=below:$v_{i_1}$] (v_{i_1}) at (0,0) {};
      \node[vertex] (van1) at (1,0) {};
      \node[vertex, label=below:$v_{k_1}$] (v_{k_1}) at (2,0) {};
      \node[vertex, label=below:$v_{k_2}$] (v_{k_2}) at (3,0) {};
      \node[vertex] (van2) at (4,0) {};
      \node[vertex, label=below:$v_{i_2}$] (v_{i_2}) at (5,0) {};
      \node[slot, fill=black,  label=above:$s_j$] (s_j) at (2.5,2) {};
      \node[slot] (san1) at (1,2) {};
      \node[slot] (san2) at (4,2) {};
         \draw[thick] (v_{i_1}) -- (s_j);
         \draw[blue,thick,->] (van1) -- (san1);
         \draw[blue,thick,->] (v_{k_1}) -- (san1);
         \draw[blue,thick,->] (v_{k_2}) -- (san2);
         \draw[blue,thick,->] (van2) -- (san2);
         \draw[thick] (v_{i_2}) -- (s_j);
  \end{tikzpicture}
  }
  \caption{Schematic diagram showing how propagation arrows shift after a placement.}
  \label{fig:shiftingarrows}
  \end{center}
 \end{figure}
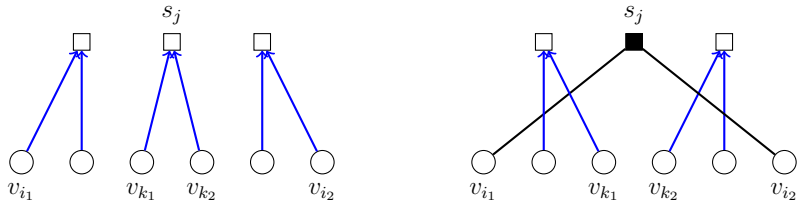
 
\begin{observation}\label{obs:shiftingarrows}
  Let $R=\{x_1,x_2\}$ be a request assigned to slot $s_x$.
  And let $y_1 \le y_2$ be the vertices (or vertex) whose propagation arrows point to $s_x$ before this request arrived. 
  Only propagation arrows connected to nodes between the leftmost vertex of $x_1$ and $y_1$ and the rightmost vertex of $x_2$ and $y_2$ will be shifted.
\end{observation}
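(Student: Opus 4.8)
The plan is to argue directly from \Cref{def:prop_arrows}, tracking how the two ordered lists $L_V$ and $L_S$ change when the request $\{x_1,x_2\}$ is assigned to $s_x$, and then showing that every entry lying strictly to the left of all affected positions, and every entry lying strictly to the right of all affected positions, keeps its partner. Equivalently, I will prove the contrapositive: if the arrow out of a vertex $v$ changes, then $\min(x_1,y_1)\le v\le\max(x_2,y_2)$.

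First I would fix notation. Since $L_S$ is sorted and contains every unfulfilled slot exactly twice, the two copies of $s_x$ occupy consecutive positions, say $p$ and $p+1$; by \Cref{def:prop_arrows} the entries of $L_V$ at positions $p$ and $p+1$ are precisely the arrow sources of $s_x$, and as $L_V$ is sorted these entries are $y_1 \le y_2$. Assigning $\{x_1,x_2\}$ to $s_x$ has exactly two effects: in $L_S$ both copies of $s_x$ (positions $p$ and $p+1$) are deleted, and in $L_V$ one copy of $x_1$ and one copy of $x_2$ are deleted, at positions $a<b$ — and the resulting sorted list $L_V'$, hence the new arrow assignment, does not depend on which copy is removed when a vertex occurs twice. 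Because $x_1\le x_2$ and $y_1\le y_2$, we have $\min(x_1,y_1)=\min\{x_1,x_2,y_1,y_2\}$ and $\max(x_2,y_2)=\max\{x_1,x_2,y_1,y_2\}$.

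Next I would record the elementary shift rules. In $L_V'$, an entry formerly at position $i$ sits at position $i$ if $i<a$, at $i-1$ if $a<i<b$, and at $i-2$ if $i>b$; in $L_S'$, an entry formerly at position $i$ sits at position $i$ if $i<p$ and at $i-2$ if $i>p+1$. Now take a vertex $v$ whose old arrow points to the slot $\sigma$ sharing its position $i$. If $v<\min(x_1,y_1)$, then $v<x_1$, so every copy of $v$ precedes every copy of $x_1$ in $L_V$, giving $i<a$; and $v<y_1$ gives $i<p$. Hence $v$ remains at position $i$ in $L_V'$ while $\sigma$ remains at position $i$ in $L_S'$, so $v$'s arrow is unchanged. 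Symmetrically, if $v>\max(x_2,y_2)$, then $v>x_2$ forces $i>b$ and $v>y_2$ forces $i>p+1$, so $v$ and $\sigma$ both drop by exactly two positions, and again $v$'s arrow is unchanged. This is exactly the claimed statement.

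I do not expect a real obstacle; the points needing care are that the two copies of $s_x$ are adjacent in $L_S$ (so the arrow sources of $s_x$ form a consecutive pair whose smaller member is $y_1$), and that one must place $v$ to the left of \emph{both} $x_1$ and $y_1$ — not just one of them — to conclude the arrow is untouched, which is precisely why the interval in the statement is taken jointly over the request pair and the arrow-source pair. The degenerate cases $y_1=y_2$ (a degree-$0$ vertex carrying both arrows to $s_x$) and $x_i=y_j$ need no separate treatment, since every inequality used above is strict.
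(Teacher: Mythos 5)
Your proof is correct, and it takes a genuinely different route from the paper's. The paper argues by counting \emph{inside} the interval: assigning $R$ to $s_x$ removes two vertex-side arrow endpoints (one each at $x_1$ and $x_2$) and one slot, hence two slot-side endpoints, from the interval, so the interval stays internally balanced and, since propagation arrows do not cross, nothing outside it needs to move. You instead verify invariance \emph{outside} the interval directly from \Cref{def:prop_arrows}, by computing how positions in the sorted lists $L_V$ and $L_S$ change under the deletions: entries to the left of all deleted positions keep their index, entries to the right drop by exactly two in both lists, and in either case the positional matching --- hence the arrow --- is preserved. Your version is more rigorous; in particular it makes explicit two facts the paper leaves implicit, namely that the two copies of $s_x$ are adjacent in $L_S$ (so the arrow sources of $s_x$ are exactly the consecutive pair $y_1\le y_2$) and that one must place $v$ strictly outside the \emph{joint} interval $[\min(x_1,y_1),\max(x_2,y_2)]$ to guarantee that both the $L_V$-index of $v$ and the $L_S$-index of its partner are unaffected. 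The paper's count is shorter but relies on the unstated step that a balanced interval admits an internal re-matching that leaves the outside untouched; your index computation proves that step rather than assuming it.
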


Observe that there are no propagation arrows connected to nodes
between $y_1$ and $y_2$ as otherwise these would be connected
to slots other than $s_x$ and produce crossings between propagation
arrows, which is impossible by definition. 
The observation can be seen through Figure~\ref{fig:shiftingarrows}. 
 
 \begin{proof}
  Let $t$ be the amount of propagation arrows attached to nodes in the interval between the leftmost vertex of $x_1$ and $y_1$ and the rightmost vertex of $x_2$ and $y_2$ before $R$ is assigned to $s_x$.
  After the placement, the number of propagation arrows in $x_1$ and $x_2$ is reduced by 1. 
  The number of slots that require two arrows has been reduced by 1 in $s_x$. 
  If $t=2$, then $x_1=y_1$, $x_2=y_2$ and the interval has no remaining propagation arrows, after $R$ is placed. 
  Otherwise, each available slot has to be matched to each available vertex and no additional propagation arrows from outside the interval are required because the placement of $R$ removes two propagation arrows and one slot. 
 \end{proof}

\begin{figure}[t]
	\begin{center}
		\subfigure{
			\begin{tikzpicture}[node distance=1cm,
			slot/.style={draw,rectangle},
			vertex/.style={draw,circle},
			scale=0.8,every node/.style={scale=0.8}
			]
			\foreach \t/\label in {0/v_2,1/v_3,2/v_4}{
				\node[vertex, label=below:$\label$] (w_\t) at (\t,0) {};
			}
			\node[slot,fill=black, label=above:$s_1$] (s_1) at (1,2) {};
			\node[slot, label=above:$s_2$] (s_2) at (2,2) {};

			\draw[blue,thick,->] (w_0) edge [bend left=20](s_2);
			\draw[blue,thick,->] (w_0) edge [bend right=20](s_2);
			
			\draw[thick] (s_1) edge (w_1);
			\draw[thick] (s_1) edge (w_2);
			
			\node[] (a) at (1,-1.5) {(a)};
			\end{tikzpicture}
		}
		\subfigure{
		\begin{tikzpicture}[node distance=1cm,
		slot/.style={draw,rectangle},
		vertex/.style={draw,circle},
		scale=0.8,every node/.style={scale=0.8}
		]
		\foreach \t/\label in {-1/v_1,0/v_2,1/v_3,2/v_4}{
			\node[vertex, label=below:$\label$] (w_\t) at (\t,0) {};
		}
		\node[slot,fill=black, label=above:$s_1$] (s_1) at (1,2) {};
		\node[slot, label=above:$s_2$] (s_2) at (2,2) {};
		\node[slot, label=above:$s_3$] (s_3) at (3,2) {};

		\draw[blue,thick,->] (w_0) edge (s_2);
		\draw[blue,thick,->] (w_-1) edge (s_2);
		\draw[blue,thick,->] (w_0) edge (s_3);
		
		\draw[thick] (s_1) edge (w_1);
		\draw[thick] (s_1) edge (w_2);
		\node[] (a) at (0.5,-1.5) {(b)};
		\end{tikzpicture}
		}
		\subfigure{
			\begin{tikzpicture}[node distance=1cm,
			slot/.style={draw,rectangle},
			vertex/.style={draw,circle},
			scale=0.8,every node/.style={scale=0.8}
			]
			\foreach \t/\label in {-1/v_1,0/v_2,1/v_3,2/v_4}{
				\node[vertex, label=below:$\label$] (w_\t) at (\t,0) {};
			}
			\node[slot,fill=black, label=above:$s_1$] (s_1) at (1,2) {};
			\node[slot, label=above:$s_2$] (s_2) at (2,2) {};

			\draw[blue,thick,->] (w_0) edge (s_2);
			\draw[blue,thick,->] (w_-1) edge (s_2);
			
			\draw[thick] (s_1) edge (w_1);
			\draw[thick] (s_1) edge (w_2);
			\node[] (a) at (0.5,-1.5) {(c)};
			\end{tikzpicture}
		}
		\caption{Types of crossings avoided by \Cref{alg:minallcrossings}, mentioned in \Cref{lem:PropagationCrossEdge}. The propagation arrows are drawn in blue and the edges already present in the graph are drawn in black.}
		\label{fig:afterCrossing}
	\end{center}
\end{figure}
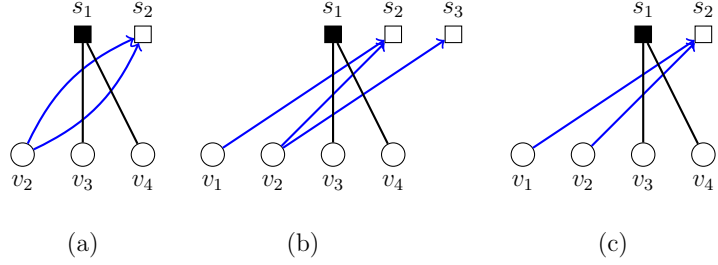

While \Cref{obs:shiftingarrows} is not specific to Algorithm~\ref{alg:minallcrossings}, we
can use it in the proofs to come. We continue with a lemma
that allows us to shorten a lot of case distinctions in the following proofs.

\begin{lemma}\label{lem:PropagationCrossEdge}
There is no instance during which two propagation arrows connected to a slot $s_2$ cross both edges adjacent to a fulfilled slot $s_1$ when using \Cref{alg:minallcrossings}.
\end{lemma}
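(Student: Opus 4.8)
The plan is to argue by contradiction about the moment such a crossing configuration is created. Suppose for the sake of contradiction that at some point during the execution of \Cref{alg:minallcrossings} there is a slot $s_2$ with two propagation arrows both crossing the two edges of a fulfilled slot $s_1$; this is exactly the situation in \Cref{fig:afterCrossing}~(a) (and its reflections). First I would observe that, since propagation arrows never cross one another (\Cref{def:prop_arrows} and the remark following it), having two arrows pointing into $s_2$ that both pass over the two edges of $s_1$ forces $s_1$ to lie strictly between $s_2$ and the source vertices of those two arrows; that is, both arrow-endpoints in $V$ lie on the same side of $s_1$, while $s_2$ lies on the other side, and $s_1$ is adjacent to two vertices that lie strictly between. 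I would pin down the combinatorial picture precisely: if the two arrows into $s_2$ emanate from vertices $u_1\le u_2$, then $s_1$'s two neighbours $v,v'$ satisfy $u_2 < v,v'$ (or symmetrically $v,v' < u_1$), and $s_2$ is on the far side of $s_1$ from $u_1,u_2$.

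Next I would trace back to the request whose assignment first produced this configuration. Either this is the request that filled $s_1$, or it is a request that fulfilled some vertex and thereby shifted arrows (by \Cref{obs:shiftingarrows}) into their current crossing position. In the first case — the algorithm is choosing a slot for a request $R=\{v,v'\}$ and is contemplating $s_1$ — I would compare the cost of assigning $R$ to $s_1$ with the cost of assigning $R$ to $s_2$ (or to the arrow endpoints' side). Because the two arrows into $s_2$ will, after the assignment to $s_1$, turn into two edge–arrow crossings (the arrows must be rerouted by \Cref{obs:shiftingarrows} but the crossing count with the new $s_1$-edges is what matters), whereas placing $R$ at a slot on the $u_1,u_2$ side of $s_1$ removes those crossings, the simulated count for the alternative slot is strictly smaller. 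This contradicts the greedy choice of \Cref{alg:minallcrossings}, which by construction picks the free slot minimising edge–edge plus edge–propagation crossings. The key arithmetic point to nail down is that rerouting the arrows does not create two compensating new crossings elsewhere — and here I would invoke \Cref{obs:shiftingarrows}, which localises the shift to the interval between the relevant vertices, so no crossings with edges outside that interval are introduced.

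For the second case, where the configuration arises not at the creation of $s_1$ but through an arrow shift caused by a later fulfilled vertex, I would show this case reduces to the first: a shift only moves arrow endpoints monotonically (again \Cref{obs:shiftingarrows}), and for two arrows to end up straddling both edges of $s_1$ after a shift, they must already have straddled at least one of them before, which by the same greedy argument applied at the earlier step was already suboptimal — so the bad configuration would have had to be created at some earliest request, and at that request the first-case analysis applies. I expect the main obstacle to be the bookkeeping in this case-two reduction: making precise that "the earliest moment the configuration appears" is well-defined and that at that moment the algorithm genuinely had a strictly cheaper alternative free slot available (one must check the alternative slot is free, which follows because the $u_1,u_2$ side of $s_1$ contains the unfulfilled slot targeted by those very arrows). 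Once that is handled, the contradiction with the $\argmin$ in \Cref{alg:minallcrossings} closes the proof.
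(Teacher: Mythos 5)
Your overall strategy---argue by contradiction, isolate the earliest request whose assignment creates the forbidden configuration, and contradict the greedy choice of \Cref{alg:minallcrossings} by exhibiting a strictly cheaper free slot---is the same as the paper's. Your case~(i), where the offending request is the one that fills $s_1$ itself, is handled correctly (and is a case the paper's figures arguably gloss over, since all of its ``before'' configurations already show $s_1$ fulfilled).

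The gap is in your case~(ii), which is in fact the main case. Your reduction rests on the claim that for two arrows to end up crossing both edges of $s_1$ after a shift, ``they must already have straddled at least one of them before.'' That is false. Since $s_1$ is fulfilled, it contributes no entries to the list $L_S$ of unfulfilled slots, so when a request is assigned to a free slot $s_0$ immediately to the left of $s_1$, the two arrows that pointed to $s_0$ relocate to the next free slot to the right, which can be a slot $s_2>s_1$: in a single step they go from crossing neither edge of $s_1$ to crossing both. This is precisely the situation the paper analyzes (its pre-shift configurations have the arrows pointing to $s_0<s_1$, or split between $s_0$ and $s_2$), and it cannot be pushed back to an earlier request, so your induction to ``the first-case analysis'' never gets off the ground. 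What is needed instead is a direct argument about the request $R_x$ that causes the rightward shift: by \Cref{obs:shiftingarrows} its vertices must lie to the right of the arrow sources (hence to the right of $v_2$) for the arrows to be pushed rightward at all, and then one compares the cost of placing $R_x$ at $s_0$ (or further left) against placing it at a free slot to the right of $s_1$, showing the latter is strictly cheaper---including the sub-case where one arrow already points to $s_2$ and only one arrow is pushed, in which $R_x$ may have one vertex to the left of $v_2$ and the counting is slightly different. Your proposal contains neither this placement comparison nor the one-arrow sub-case, so as written the central case of the lemma is not established.
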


Alternatively, the situations depicted in \Cref{fig:afterCrossing} will never occur if one uses \Cref{alg:minallcrossings}.
\begin{proof}
We prove the lemma by contradiction and assume that after \Cref{alg:minallcrossings} fulfills a request $R_x=\{x_1,x_2\}$ there are two propagation arrows crossing edges $(v_3,s_2)$ and $(v_4,s_2)$. 
    \Cref{fig:afterCrossing} shows three different situations how these crossings can occur: 
(a) Either both propagation arrows are connected to a single node $v_2 < v_3$ that cross the edges of $s_1$,
(b) there is a propagation arrow from two nodes $v_1 < v_2$ crossing the nodes of $s_1$ to a slot $s_2$ with $s_1 < s_2$ and another edge from $v_2$ to a slot $s_3$ with $s_1 < s_2 < s_3$ or
(c) there are the edges of (b) without the additional edge $(v_2,s_3)$.
Cases (a) and (b) are very similar, but in case (a) both propagation arrows from $v_2$ go to the same slot, whereas in case (b) they are split between $s_2$ and $s_3$.
In case (c) the vertex $v_2$ is already assigned an edge, thus it only has one remaining propagation arrow. We ignore this already present edge, as its precise nature makes no difference for the following argumentation.

We assume that the propagation arrows from $v_2$ (and possibly $v_1$) are the first ones that cross the edges of $s_1$ as described in the lemma after request $R_x$ has been fulfilled.
It is possible that there are vertices between $v_1$ and $v_2$ or between $v_2$ and $v_3$. However, if these vertices exist, they cannot have propagation arrows. Otherwise, $v_2$ (and possibly $v_1$) would not be responsible for the first two propagation arrows that cross $s_1$, but the propagation arrows of these other nodes.
We look at the first request $R_x$ whose assignment results in such a structure and how the graph looked like before serving $R_x$. 

Note that every slot has two propagation arrows pointing to it and after assigning a request to this slot, the propagation arrows pointing to that slot move to a neighboring free slot. 
Thus, there are four different configurations possible before the request $R_x$ is fulfilled, presented in \Cref{fig:beforeCrossing}: 
(a) Both arrows are connected to a single node $v_2 < v_3$ and a slot $s_0 < s_1$,
(b) the two arrows are from different nodes $v_2$ and $v_1$ and are connected to a slot $s_0 < s_1$,
(c) Both arrows are connected to $v_2$, one of them pointing to $s_0$ and one to $s_2$ with $s_0 < s_1 < s_2$
(d) the two arrows are from different nodes $v_1$ and $v_2$, one of them points to $s_0$ and one to $s_2$ with $s_0 < s_1 < s_2$. 

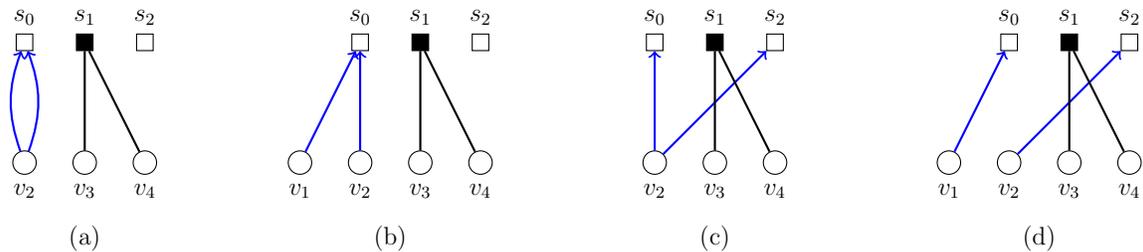
\begin{figure}
  \begin{center}
    \subfigure{
	    \begin{tikzpicture}[node distance=1cm,
	    slot/.style={draw,rectangle},
	    vertex/.style={draw,circle},
	    scale=0.8,every node/.style={scale=0.8}
	    ]
	    \foreach \t/\label in {0/v_2,1/v_3,2/v_4}{
		    \node[vertex, label=below:$\label$] (w_\t) at (\t,0) {};
	    }
	    \node[slot, label=above:$s_0$] (s_0) at (0,2) {};
	    \node[slot,fill=black, label=above:$s_1$] (s_1) at (1,2) {};
	    \node[slot, label=above:$s_2$] (s_2) at (2,2) {};

	    \draw[blue,thick,->] (w_0) edge [bend left=20](s_0);
	    \draw[blue,thick,->] (w_0) edge [bend right=20](s_0);
	    
	    \draw[thick] (s_1) edge (w_1);
	    \draw[thick] (s_1) edge (w_2);
	    
	    \node[] (a) at (1,-1.25) {(a)};
	    \end{tikzpicture}
    }
    \hspace*{1cm}
    \subfigure{
	    \begin{tikzpicture}[node distance=1cm,
	    slot/.style={draw,rectangle},
	    vertex/.style={draw,circle},
	    scale=0.8,every node/.style={scale=0.8}
	    ]
	    \foreach \t/\label in {-1/v_1,0/v_2,1/v_3,2/v_4}{
		    \node[vertex, label=below:$\label$] (w_\t) at (\t,0) {};
	    }
	    \node[slot, label=above:$s_0$] (s_0) at (0,2) {};
	    \node[slot,fill=black, label=above:$s_1$] (s_1) at (1,2) {};
	    \node[slot, label=above:$s_2$] (s_2) at (2,2) {};
	    
	    \draw[blue,thick,->] (w_0) edge (s_0);
	    \draw[blue,thick,->] (w_-1) edge (s_0);
	    
	    \draw[thick] (s_1) edge (w_1);
	    \draw[thick] (s_1) edge (w_2);
	    \node[] (a) at (0.5,-1.25) {(b)};
	    \end{tikzpicture}
    }
	\hspace*{1.25cm}
    \subfigure{
	    \begin{tikzpicture}[node distance=1cm,
	    slot/.style={draw,rectangle},
	    vertex/.style={draw,circle},
	    scale=0.8,every node/.style={scale=0.8}
	    ]
	    \foreach \t/\label in {0/v_2,1/v_3,2/v_4}{
		    \node[vertex, label=below:$\label$] (w_\t) at (\t,0) {};
	    }
	    \node[slot, label=above:$s_0$] (s_0) at (0,2) {};
	    \node[slot,fill=black, label=above:$s_1$] (s_1) at (1,2) {};
	    \node[slot, label=above:$s_2$] (s_2) at (2,2) {};

	    \draw[blue,thick,->] (w_0) edge [](s_0);
	    \draw[blue,thick,->] (w_0) edge [](s_2);
	    
	    \draw[thick] (s_1) edge (w_1);
	    \draw[thick] (s_1) edge (w_2);
	    
	    \node[] (a) at (1,-1.25) {(c)};
	    \end{tikzpicture}
    }
    \hspace*{1.25cm}
    \subfigure{
	    \begin{tikzpicture}[node distance=1cm,
	    slot/.style={draw,rectangle},
	    vertex/.style={draw,circle},
	    scale=0.8,every node/.style={scale=0.8}
	    ]
	    \foreach \t/\label in {-1/v_1,0/v_2,1/v_3,2/v_4}{
		    \node[vertex, label=below:$\label$] (w_\t) at (\t,0) {};
	    }
	    \node[slot, label=above:$s_0$] (s_0) at (0,2) {};
	    \node[slot,fill=black, label=above:$s_1$] (s_1) at (1,2) {};
	    \node[slot, label=above:$s_2$] (s_2) at (2,2) {};
	    
	    \draw[blue,thick,->] (w_-1) edge (s_0);
	    \draw[blue,thick,->] (w_0) edge (s_2);
	    
	    \draw[thick] (s_1) edge (w_1);
	    \draw[thick] (s_1) edge (w_2);
	    \node[] (a) at (0.5,-1.25) {(d)};
	    \end{tikzpicture}
    }
    \caption{Possible configuration before the request $R_x$ is added and the propagation arrows are shifted to $s_2$. The propagation arrows are drawn in blue and the edges already present in the graph are drawn in black.}
    \label{fig:beforeCrossing}
  \end{center}
\end{figure}

We know also by using Observation~\ref{obs:shiftingarrows}, that the assignment of $R_x$ will only shift the propagation arrows around $s_1$ if these arrows are part of the affected interval between the vertices of $R_x$ and the propagation arrows pointing to the slot assigned to $R_x$. 

Cases (a) and (b) have no previous arrows crossing with edges of $s_1$, thus, they require that two propagation arrows are shifted to the right hand side.
As we saw in Observation~\ref{obs:shiftingarrows}, this can only happen if $R_x$ is assigned to $s_0$ or to the left hand side of it and the vertices $x_1$ and $x_2$ are both to the right of $v_2$.
The propagation arrows of $x_1$ and $x_2$ point previously to a slot to the right of $s_0$, thus, assigning $R_x$ to $s_0$ will result in the two propagation arrows previously pointing to $s_0$ shifting to the right to fill up the slots left by the missing propagation arrows of $x_1$ and
$x_2$. These gaps are filled from the left hand side, which results in the two crossing propagation arrows shown in~\Cref{fig:afterCrossing}. 

Assume that Algorithm~\ref{alg:minallcrossings}, given case \ref{fig:beforeCrossing}(a) or (b) and a request $R_x$ with $v_2 < x_1 < x_2$, assigns $R_x$ to a free slot to the left hand side of $s_1$. 
\Cref{fig:wrongPlacement} shows that assigning $R_x$ more to the right results in fewer crossings, which is a contradiction to the 
procedure of the algorithm itself. 
If the vertices $x_1$ and $x_2$ do not coincide with $v_3$ and $v_4$, they are even more to the right hand side.
If this is the case, we get even more crossings if $R_x$ is assigned to the left hand side of $s_1$. 

\begin{figure}
\begin{center}
  \subfigure{
	  \begin{tikzpicture}[node distance=1cm,
	  slot/.style={draw,rectangle},
	  vertex/.style={draw,circle},
	  scale=0.8,every node/.style={scale=0.8}
	  ]
	  \foreach \t/\label in {0/v_2,1/v_3,2/v_4}{
		  \node[vertex, label=below:$\label$] (w_\t) at (\t,0) {};
	  }
	  \node[slot, label=above:$s_0$] (s_0) at (0,2) {};
	  \node[slot,fill=black, label=above:$s_1$] (s_1) at (1,2) {};
	  \node[slot, label=above:$s_2$] (s_2) at (2,2) {};

	  \draw[blue,thick,->] (w_0) edge [bend left=20](s_2);
	  \draw[blue,thick,->] (w_0) edge [bend right=20](s_2);
	  
	  \draw[thick] (s_1) edge (w_1);
	  \draw[thick] (s_1) edge (w_2);
	  
	  \draw[red, thick] (s_0) edge (w_1);
	  \draw[red, thick] (s_0) edge (w_2);
	  \end{tikzpicture}
  }
  \hspace*{1.25cm}
  \subfigure{
	  \begin{tikzpicture}[node distance=1cm,
	  slot/.style={draw,rectangle},
	  vertex/.style={draw,circle},
	  scale=0.8,every node/.style={scale=0.8}
	  ]
	  \foreach \t/\label in {0/v_2,1/v_3,2/v_4}{
		  \node[vertex, label=below:$\label$] (w_\t) at (\t,0) {};
	  }
	  \node[slot, label=above:$s_0$] (s_0) at (0,2) {};
	  \node[slot,fill=black, label=above:$s_1$] (s_1) at (1,2) {};
	  \node[slot, label=above:$s_2$] (s_2) at (2,2) {};

	  \draw[blue,thick,->] (w_0) edge [bend left=20](s_0);
	  \draw[blue,thick,->] (w_0) edge [bend right=20](s_0);
	  
	  \draw[thick] (s_1) edge (w_1);
	  \draw[thick] (s_1) edge (w_2);
	  
	  \draw[red, thick] (s_2) edge (w_1);
	  \draw[red, thick] (s_2) edge (w_2);
	  \end{tikzpicture}
  }
	\hspace*{1.25cm}
  \subfigure{
	  \begin{tikzpicture}[node distance=1cm,
	  slot/.style={draw,rectangle},
	  vertex/.style={draw,circle},
	  scale=0.8,every node/.style={scale=0.8}
	  ]
	  \foreach \t/\label in {-1/v_1,0/v_2,1/v_3,2/v_4}{
		  \node[vertex, label=below:$\label$] (w_\t) at (\t,0) {};
	  }
	  \node[slot, label=above:$s_0$] (s_0) at (0,2) {};
	  \node[slot,fill=black, label=above:$s_1$] (s_1) at (1,2) {};
	  \node[slot, label=above:$s_2$] (s_2) at (2,2) {};
	  
	  \draw[blue,thick,->] (w_0) edge (s_2);
	  \draw[blue,thick,->] (w_-1) edge (s_2);
	  
	  \draw[thick] (s_1) edge (w_1);
	  \draw[thick] (s_1) edge (w_2);
	  
	  \draw[red, thick] (s_0) edge (w_1);
	  \draw[red, thick] (s_0) edge (w_2);
	  \end{tikzpicture}
  }
  \hspace*{1.25cm}
  \subfigure{
	  \begin{tikzpicture}[node distance=1cm,
	  slot/.style={draw,rectangle},
	  vertex/.style={draw,circle},
	  scale=0.8,every node/.style={scale=0.8}
	  ]
	  \foreach \t/\label in {-1/v_1,0/v_2,1/v_3,2/v_4}{
		  \node[vertex, label=below:$\label$] (w_\t) at (\t,0) {};
	  }
	  \node[slot, label=above:$s_0$] (s_0) at (0,2) {};
	  \node[slot,fill=black, label=above:$s_1$] (s_1) at (1,2) {};
	  \node[slot, label=above:$s_2$] (s_2) at (2,2) {};
	  
	  \draw[blue,thick,->] (w_0) edge (s_0);
	  \draw[blue,thick,->] (w_-1) edge (s_0);
	  
	  \draw[thick] (s_1) edge (w_1);
	  \draw[thick] (s_1) edge (w_2);
	  
	  \draw[red, thick] (s_2) edge (w_1);
	  \draw[red, thick] (s_2) edge (w_2);
	  \end{tikzpicture}
  }
  \caption{Comparing the crossings of assigning $R_x$ to the left or right hand side of $s_1$ for the cases (a) and (b) from \Cref{fig:beforeCrossing}. The propagation arrows are blue, already present edges are black and the newly introduced edges, adjacent to the recently fulfilled request $R_x$, are red. }
  \label{fig:wrongPlacement}
\end{center}
\end{figure}
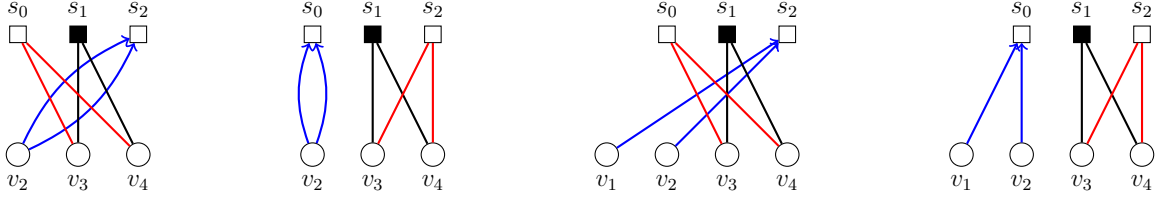

For the cases (c) and (d) from \Cref{fig:beforeCrossing} only one propagation arrow needs to be pushed to the right hand side. 
Thus, w.l.o.g. only $x_2$ has to be to the right hand side of $v_2$ and
the position of $x_1$ is arbitrary. 
Either $x_1$ is a vertex to the left hand side of $v_2$ (it is even possible that $x_1=v_1$) or it is to the right. 
The latter case is equivalent to the cases (a) and (b) in the sense that two more propagation arrows will cross over $s_1$ and a placement
to the right of $s_1$ will result in less crossings as we saw in \Cref{fig:wrongPlacement}. 

In the first case, on the other hand, $x_1$ is to the left of $v_2$, and we push only one more propagation arrow to the right
hand side of $s_1$. \Cref{fig:wrongPlacement2} shows that choosing the position $s_2$ to the right of $s_1$ results in fewer crossings. 
Just as with cases (a) and (b) we can assume that $x_2$ is the leftmost possible vertex to the right of $v_2$ and otherwise the number of avoided crossings with the placement to the left of $s_1$ only grows. 

Thus, \Cref{alg:minallcrossings} will not place a request such that two edges of a slot are crossed by two propagation arrows.
\end{proof}
	
\begin{figure}[h]
\begin{center}
  \subfigure{
	  \begin{tikzpicture}[node distance=1cm,
	  slot/.style={draw,rectangle},
	  vertex/.style={draw,circle},
	  scale=0.8,every node/.style={scale=0.8}
	  ]
	  \foreach \t/\label in {0/v_2,1/v_3,2/v_4}{
		  \node[vertex, label=below:$\label$] (w_\t) at (\t,0) {};
	  }
	  \node[slot, label=above:$s_0$] (s_0) at (0,2) {};
	  \node[slot,fill=black, label=above:$s_1$] (s_1) at (1,2) {};
	  \node[slot, label=above:$s_2$] (s_2) at (2,2) {};

	  \draw[blue,thick,->] (w_0) edge [bend left=20](s_2);
	  \draw[blue,thick,->] (w_0) edge [bend right=20](s_2);
	  
	  \draw[thick] (s_1) edge (w_1);
	  \draw[thick] (s_1) edge (w_2);
	  
	  \draw[red, thick] (s_0) edge (w_1);
	  \draw[red, thick] (s_0) edge (-1,0);
	  \end{tikzpicture}
  }
  \hspace*{1cm}
  \subfigure{
	  \begin{tikzpicture}[node distance=1cm,
	  slot/.style={draw,rectangle},
	  vertex/.style={draw,circle},
	  scale=0.8,every node/.style={scale=0.8}
	  ]
	  \foreach \t/\label in {0/v_2,1/v_3,2/v_4}{
		  \node[vertex, label=below:$\label$] (w_\t) at (\t,0) {};
	  }
	  \node[slot, label=above:$s_0$] (s_0) at (0,2) {};
	  \node[slot,fill=black, label=above:$s_1$] (s_1) at (1,2) {};
	  \node[slot, label=above:$s_2$] (s_2) at (2,2) {};

	  \draw[blue,thick,->] (w_0) edge [bend left=20](s_0);
	  \draw[blue,thick,->] (w_0) edge [bend right=20](s_0);
	  
	  \draw[thick] (s_1) edge (w_1);
	  \draw[thick] (s_1) edge (w_2);
	  
	  \draw[red, thick] (s_2) edge (w_1);
	  \draw[red, thick] (s_2) edge (-1,0);
	  \end{tikzpicture}
  }
\hspace*{1cm}
  \subfigure{
	  \begin{tikzpicture}[node distance=1cm,
	  slot/.style={draw,rectangle},
	  vertex/.style={draw,circle},
	  scale=0.8,every node/.style={scale=0.8}
	  ]
	  \foreach \t/\label in {-1/v_1,0/v_2,1/v_3,2/v_4}{
		  \node[vertex, label=below:$\label$] (w_\t) at (\t,0) {};
	  }
	  \node[slot, label=above:$s_0$] (s_0) at (0,2) {};
	  \node[slot,fill=black, label=above:$s_1$] (s_1) at (1,2) {};
	  \node[slot, label=above:$s_2$] (s_2) at (2,2) {};
	  
	  \draw[blue,thick,->] (w_-1) edge (s_2);
	  \draw[blue,thick,->] (w_0) edge (s_2);
	  
	  \draw[thick] (s_1) edge (w_1);
	  \draw[thick] (s_1) edge (w_2);
	  
	  \draw[red, thick] (s_0) edge (w_1);
	  \draw[red, thick] (s_0) edge (w_-1);
	  \end{tikzpicture}
  }
  \hspace*{1cm}
  \subfigure{
	  \begin{tikzpicture}[node distance=1cm,
	  slot/.style={draw,rectangle},
	  vertex/.style={draw,circle},
	  scale=0.8,every node/.style={scale=0.8}
	  ]
	  \foreach \t/\label in {-1/v_1,0/v_2,1/v_3,2/v_4}{
		  \node[vertex, label=below:$\label$] (w_\t) at (\t,0) {};
	  }
	  \node[slot, label=above:$s_0$] (s_0) at (0,2) {};
	  \node[slot,fill=black, label=above:$s_1$] (s_1) at (1,2) {};
	  \node[slot, label=above:$s_2$] (s_2) at (2,2) {};
	  
	  \draw[blue,thick,->] (w_-1) edge (s_0);
	  \draw[blue,thick,->] (w_0) edge (s_0);
	  
	  \draw[thick] (s_1) edge (w_1);
	  \draw[thick] (s_1) edge (w_2);
	  
	  \draw[red, thick] (s_2) edge (w_1);
	  \draw[red, thick] (s_2) edge (w_-1);
	  \end{tikzpicture}
  }
  \caption{Comparing the crossings for assigning $R_x$ to the left or right hand side of $s_1$, for the cases (c) and (d) from \Cref{fig:beforeCrossing}. The propagation arrows are blue, already present edges are black and the newly introduced edges, adjacent to $R_x$, are red. }
  \label{fig:wrongPlacement2}
\end{center}
\end{figure}
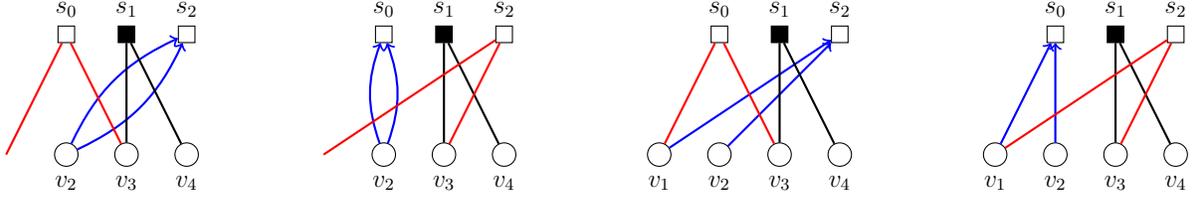

Lemma~\ref{lem:PropagationCrossEdge} forbids specific configurations of the propagation arrows during the course of
applying Algorithm~\ref{alg:minallcrossings} to a request sequence.
The following lemma uses a counting argument to guarantee that a specific request between two (far apart)
vertices must eventually appear in a specific setting. Such requests from vertices that are far apart, always guarantee
the appearance of unavoidable crossings as depicted in Figure~\ref{fig:node_exchange_uncritical} (f). The appearance of
such requests guarantees, in later proofs, the existence of such unavoidable crossings, which can be counted in a way that
bounds the competitive ratio.
	
\begin{lemma}\label{lem:UnavoidableHouses} 
Let there be two request $\{x_1, x_2\}$ and $\{y_1, y_2\}$ that are assigned to slots $s_x$ and $s_y$, 
with $x_1 < x_2 < y_1 < y_2$ and no free slot between $s_x$ and $s_y$. 
If there are two neighboring vertices $u,v$, with $x_2 \leq u < v \leq y_1$ and propagation arrows pointing to two different slots $s_l, s_r$, with $s_l < s_x < s_y < s_r$, and the request $\{u,v\}$ appears, then there must be a future request $\{a,b\}$, with $a \leq x_2$ and $y_1 \leq b$, which unavoidably crosses all edges of $u$ and $v$. 
\end{lemma}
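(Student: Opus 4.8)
The plan is to combine the forbidden-configuration constraint of \Cref{lem:PropagationCrossEdge} with a parity count over the lists $L_V$ and $L_S$, and then to close using the nested (``$2$-$2$'') case of \Cref{fig:node_exchange_uncritical}. First I fix notation: we may assume $s_x<s_y$, since otherwise the hypothesis $s_l<s_x<s_y<s_r$ is impossible. Because no free slot lies strictly between $s_x$ and $s_y$, every free slot is below $s_x$ or above $s_y$; write $S_L$ and $S_R$ for these two sets, so that $L_S$ is exactly the $2|S_L|$ copies of $S_L$-slots followed by the $2|S_R|$ copies of $S_R$-slots, with nothing in between. In particular the request $\{u,v\}$ must be assigned to some slot $s_{uv}\in S_L\cup S_R$.

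The first substantive step is a local-structure claim: just before $\{u,v\}$ is served, the only vertices with remaining edges lying strictly between $x_2$ and $y_1$ are $u$ and $v$ themselves. I would argue by contradiction: a still-deficient vertex $w$ with $x_2<w<u$ would carry a propagation arrow which, not crossing $u\to s_l$, points to a slot $\le s_l<s_x$ and hence crosses both edges of $s_x$; but the two copies of $s_l$ in $L_S$ are matched to two \emph{consecutive} entries of $L_V$, one of which is a copy of $u$, and the other can be neither a second copy of $u$ nor a copy of a vertex in $(x_2,u)$ --- in either case $s_l$ would acquire two arrows both crossing both edges of $s_x$, contradicting \Cref{lem:PropagationCrossEdge} --- so it is a copy of a vertex $\le x_2$, which leaves no room for $w$. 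The mirror argument rules out deficient vertices strictly between $v$ and $y_1$. Thus $L_V$ has the shape: first the $n_L$ copies of vertices $\le x_2$, then the copies of $u$, then of $v$, then the $n_R$ copies of vertices $\ge y_1$.

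Next comes the counting. Reading positions off the order-preserving matching $L_V\leftrightarrow L_S$: the copy of $u$ matched to $s_l\in S_L$ sits at position $n_L+1$, so $2|S_L|\ge n_L+1$; and the second copy of $u$ (which exists, since $u>x_2$ forces $u\notin\{x_1,x_2,y_1,y_2\}$, so $u$ has two copies) is matched to the slot at position $n_L+2$, which cannot belong to $S_L$ --- that slot would otherwise receive arrows from both $u$ and $v$, both crossing both edges of $s_x$, again contradicting \Cref{lem:PropagationCrossEdge}. Hence $2|S_L|=n_L+1$, so $n_L$ is odd; the mirror computation gives $n_R$ odd. (The degenerate possibility $u=x_2$, which by this very argument forces $v=y_1$ and $y_1=x_2+1$, is handled directly in the same spirit.) Now I serve $\{u,v\}$ at $s_{uv}$: the vertices $\le x_2$ still carry an odd number $n_L$ of missing edges, the vertices $\ge y_1$ an odd number $n_R$, and the only deficient vertices in between are $u$ and $v$, with one missing edge each. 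Since every degree must reach $2$, the remaining half-edges are paired by future requests; because $n_L$ is odd the left block cannot be closed off using only pairs internal to it, and besides $u$ and $v$ there are no vertices between $x_2$ and $y_1$ to ``bridge'' out of it. Carrying the parity forward together with the positions of the surviving $S_L$/$S_R$ slots --- and with how \Cref{obs:shiftingarrows} moves the arrows when $\{u,v\}$ and any subsequent bridging request are placed --- forces some future request to pair a vertex $a\le x_2$ with a vertex $b\ge y_1$, which is the desired request.

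Finally, unavoidability: $a<u$ strictly --- immediate if $u>x_2$, and if $u=x_2$ then $x_2$ is already saturated by $R_x$ and by $\{u,v\}$, so $a\neq x_2$ --- and symmetrically $b>v$. Hence $a<u<v<b$, so $\{a,b\}$ and $\{u,v\}$ form a nested pair, the ``$2$-$2$'' configuration of \Cref{fig:node_exchange_uncritical}(f), which (by the case analysis inside the proof of \Cref{lem:node_exchange_uncritical}) produces two crossings in \emph{both} relative orderings of $s_{ab}$ and $s_{uv}$; thus $\{a,b\}$ unavoidably crosses both edges of $\{u,v\}$. I expect the counting to be the main obstacle: converting the parity facts into a genuinely unavoidable request joining the two outer blocks, while cleanly handling the branch in which $u$ and $v$ themselves act as the two bridges (one toward each side) and the small degenerate configurations; the rest is bookkeeping layered on \Cref{lem:PropagationCrossEdge}, \Cref{obs:shiftingarrows}, and \Cref{lem:node_exchange_uncritical}.
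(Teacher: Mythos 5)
Your proposal is correct and follows essentially the same route as the paper's proof: use \Cref{lem:PropagationCrossEdge} to show the companion arrow at $s_l$ must originate at a vertex left of (or at) $x_2$, derive an odd count of remaining half-edges on each side of the filled block, and conclude that parity forces a future bridging request $\{a,b\}$ with $a<u<v<b$ whose crossings with $\{u,v\}$ are unavoidable. Your write-up is in fact more explicit than the paper's (which compresses the parity step into a remark about a ``mismatch'' pushed across the block and simply asserts the final unavoidability), but no new idea is introduced.
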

Figure~\ref{fig:houses} depicts the situation described in the statement  of Lemma~\ref{lem:UnavoidableHouses}.
\begin{figure}[h]
\begin{center}
 \begin{tikzpicture}[node distance=1cm,
 slot/.style={draw,rectangle},
 vertex/.style={draw,circle},
 scale=0.8,every node/.style={scale=0.8}
 ]
  \node[slot, label=above:$s_{\ell}$] (sl) at (0,2) {};
  \node[slot, label=above:$s_{x}$] (sx) at (1,2) {};
  \node[slot, label=above:$s_{y}$] (sy) at (2,2) {};
  \node[slot, label=above:$s_{r}$] (sr) at (3,2) {};
  \node[vertex, label=below:$x_1$] (x1) at (-1,0) {};
  \node[vertex, label=below:$x_2$] (x2) at (0,0) {};
  \node[vertex, label=below:$u$] (u) at (1,0) {};
  \node[vertex, label=below:$v$] (v) at (2,0) {};
  \node[vertex, label=below:$y_1$] (y1) at (3,0) {};
  \node[vertex, label=below:$y_2$] (y2) at (4,0) {};
  
  \draw[thick] (x1) -- (sx)
               (x2) -- (sx)
               (y1) -- (sy)
               (y2) -- (sy);
  \draw[blue,->] (u) -- (sl);
  \draw[blue,->] (v) -- (sr);
 \end{tikzpicture}

 \caption{Sketch of the situation described in the statement of Lemma~\ref{lem:UnavoidableHouses}.}
 \label{fig:houses}
\end{center}
\end{figure}
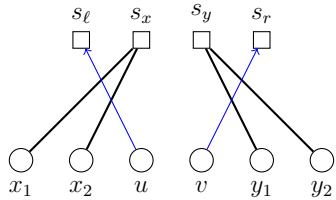
\begin{proof}
Our proof is a simple counting argument. 
The request $\{u,v\}$ removes two propagation arrows. 
One points to the left of the filled block between $s_x$ and $s_y$ and the other one points to the right of it. 
The request, depending on its placement, pushes one propagation arrow from one side of the fulfilled block between $s_x$ and $s_y$ to the other one. 

W.l.o.g. we assume that $\{u,v\}$ is placed on $s_l$. 
The second propagation arrow pointing to $s_l$ comes from $x_2$ (if $u \neq x_2$) or a vertex even more to the left. 
It is not possible that is comes from a vertex between $x_2$ and $v$ due to \Cref{lem:PropagationCrossEdge}. 
When the request $\{u,v\}$ is placed, it pushes this second propagation arrow to the slot $s_r$. 
This propagation arrow represents a mismatch between open slots and "open/remaining" edges. 
The number of "open/remaining" edges to the left of $u$ and to the right of $v$ is odd, but the slots always consume two of these "open/remaining" edges. 
This has to be compensated by some request $\{a,b\}$ that is placed right of $s_y$, where $a$ is to the left hand side of $u$ and $b$ is to the right hand side of $v$. 
This request crosses all edges of $u$ and $v$. 
\end{proof}

Where \Cref{lem:UnavoidableHouses,lem:PropagationCrossEdge} are applicable for specific configurations, the following lemma provides a tool that gives a set of edges or propagation arrows that are necessary to make a local configuration (e.g., a crossing of two requests) feasible in the context of the remaining graph. 

\begin{lemma}\label{lem:Equator}
	For every edge or propagation arrow, starting at a vertex $v_i$ of $V$ and pointing to a slot $s_j$ with $i < j$ (analogously $j<i$), there is one edge or propagation arrow pointing from a vertex $v_k$ to a slot $s_l$ with $i < k$ and $l \leq i$ (analogously $k<i$ and $i \leq l$). 
\end{lemma}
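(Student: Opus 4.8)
The plan is to reinterpret the partially fulfilled graph together with its propagation arrows as a genuinely $2$-regular bipartite (multi)graph between $V$ and $S$, and then run a flow-conservation counting argument across a vertical cut. First I would record the basic degree fact: by the definition of $L_V$, $L_S$ and \Cref{def:prop_arrows}, every vertex $v_i\in V$ is incident to exactly two objects among the real edges and the propagation arrows (two real edges if $v_i$ has degree $2$, one of each if it has degree $1$, two arrows if it has degree $0$), and likewise every slot $s_j\in S$ is incident to exactly two such objects (two real edges if $s_j$ is fulfilled, two arrows if it is unfulfilled --- there is no intermediate case, since a request always adds two edges at once). Hence the multiset $H$ of real edges and propagation arrows is $2$-regular on both sides and has size $2n$.

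Next I would fix the index $i$ and split both sides at the ``equator'' at position $i$: set $V_L=\{v_1,\dots,v_i\}$, $V_R=\{v_{i+1},\dots,v_n\}$, $S_L=\{s_1,\dots,s_i\}$, $S_R=\{s_{i+1},\dots,s_n\}$, and let $A,B,C,D$ be the number of objects of $H$ joining, respectively, $V_L$ to $S_L$, $V_L$ to $S_R$, $V_R$ to $S_L$, and $V_R$ to $S_R$. Counting incidences at $V_L$ gives $A+B=2i$, and counting incidences at $S_L$ gives $A+C=2i$, so $B=C$. The given edge or arrow from $v_i$ (which lies in $V_L$) to $s_j$ with $j>i$ (so $s_j\in S_R$) is one of the $B$ objects, hence $B\ge 1$, hence $C\ge 1$: there is an object of $H$ from some $v_k\in V_R$, i.e.\ $k>i$, to some $s_l\in S_L$, i.e.\ $l\le i$, which is exactly the claimed witness. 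For the ``analogous'' case, where the starting object goes from $v_i$ to $s_j$ with $j<i$, I would rerun the identical argument with the cut placed at position $i-1$: then $v_i$ lies in the right block and $s_j$ in the left block, so the starting object is counted by the right-to-left term, and the equality of the two crossing terms forces a left-to-right object from some $v_k$ with $k<i$ to some $s_l$ with $l\ge i$, as required. Since the witness crosses the cut in the opposite direction to the starting object, it is automatically a different object.

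The calculation itself is short, so there is no real ``hard part'' in it; the one point I would state carefully is the claim that the propagation arrows exactly complete the degree of both every vertex \emph{and} every slot to $2$, which is where $2$-regularity of the instance, the identity $|L_V|=|L_S|$, and the fact that slots are either untouched or fully fulfilled all get used, so that $H$ is honestly $2$-regular. Once this is pinned down, the cut-balance equality $B=C$ and the conclusion are immediate. Note in particular that neither the algorithm nor the non-crossing property of propagation arrows is needed here --- only the counting.
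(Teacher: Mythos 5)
Your proposal is correct and follows essentially the same route as the paper: a handshake/counting argument across the vertical cut at position $i$, using that both $\{v_1,\dots,v_i\}$ and $\{s_1,\dots,s_i\}$ are incident to exactly $2i$ objects (edges plus propagation arrows) to conclude that the two crossing counts are equal. Your packaging of the edges and arrows as a single $2$-regular bipartite multigraph is a slightly cleaner way to state the same degree bookkeeping the paper does with its sets $E_{AC},P_{AC},\dots$, and your observation that the non-crossing property of propagation arrows is not needed is consistent with the paper, which mentions it only as a side remark.
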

\begin{proof}
	We use a simple handshake argument and count the already present edges and the propagation arrows in the graph to prove the statement.
	
	At first, we separate the vertices into four sets, as depicted in \Cref{fig:Equator}. 
	The set $A$ contains the vertex $s_l$ and all vertices from $S$ that are to the left hand side of $s_l$. 
	The set $B$ contains the vertices from $S$ that are to the right hand side of $s_l$. 
	The set $C$ contains the vertex $v_i$ and all vertices from $V$ that are to the left hand side of $v_i$. 
	The last set, called $D$, contains the vertices from $V$ that are to the right hand side of $v_l$. 
	
	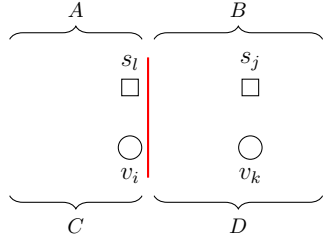
\begin{figure}[t]
		\begin{center}
			\begin{tikzpicture}[node distance=1cm,
			slot/.style={draw,rectangle},
			vertex/.style={draw,circle},
			scale=0.8,every node/.style={scale=0.8}
			]
			\node[vertex, label=below:$v_i$] (v1) at (0,0) {};
			\node[vertex, label=below:$v_k$] (v2) at (2,0) {};
			\node[slot, label=above:$s_l$] (s1) at (0,1) {};
			\node[slot, label=above:$s_j$] (s2) at (2,1) {};
			
			\draw[red, thick] (0.3,-0.5) edge (0.3,1.5);
			
			\draw[decorate,decoration={brace,amplitude=5pt,mirror}] (-2.0,-0.8) -- (0.2,-0.8) node [black,midway,yshift=-0.5cm] {\footnotesize $C$};
			\draw[decorate,decoration={brace,amplitude=5pt,mirror}] (0.4,-0.8) -- (3.2,-0.8) node [black,midway,yshift=-0.5cm] {\footnotesize $D$};
			
			\draw[decorate,decoration={brace,amplitude=5pt,mirror}] (0.2,1.8) -- (-2,1.8) node [black,midway,yshift=0.5cm] {\footnotesize $A$};
			\draw[decorate,decoration={brace,amplitude=5pt,mirror}] (3.2,1.8) -- (0.4,1.8) node [black,midway,yshift=0.5cm] {\footnotesize $B$};
			\end{tikzpicture}
			\caption{Depending on the vertex $v_i$, the vertices are split into four sets, $A,B,C$ and $D$.}
			\label{fig:Equator}
		\end{center}
	\end{figure}
	
	The vertices in the set $A$ have two incident edges or two incident propagation arrows. 
	These edges or propagation arrows start either at a vertex in $C$ or $D$. 
	We denote the set of edges that connect a vertex from $A$ with a vertex from $C$ as $E_{AC}$. 
	Analogously, we define the edge set $E_{AD}$. 
	We also split the propagation arrows, starting at the vertices from $V$ and ending at a vertex in $A$, into two sets, $P_{AC}$ and $P_{AD}$.  
	We can observe that 
	\begin{align}
	2|A| = E_{AC} + E_{AD} + P_{AC} + P_{AD} \label{eq:Equator1}
	\end{align} 
	must always be true. 
	
	Additionally, the sum of the edges and propagation arrows starting at a vertex in $C$ must be $2i$. 
	Or more formal, 
	\begin{align}
	2i= E_{AC} + E_{BC} + P_{AC} + P_{BC} \text{ .} \label{eq:Equator2}
	\end{align} 
	The number of vertices in the set $A$ must be $i$, because we choose the vertex $v_i$ at position $i$ as a reference point to define the set $A$. 
	Thus, we can combine \Cref{eq:Equator1} and \Cref{eq:Equator2} to obtain 
	\[
	0 = E_{AD} + P_{AD} - E_{BC} - P_{BC} \text{ .}
	\]

	Note, because propagation arrows never cross each other, either $P_{AD}$ or $P_{BC}$ is empty (it is also possible that both are empty). 
	Thus, every edge or propagation arrow crossing from one side to the other is compensated by an edge, crossing into the other direction. 
\end{proof}

With our structural properties and observations regarding the propagation arrows we can now start to analyze the critical crossings depicted in \Cref{fig:node_exchange_uncritical} (e) and (c). 
These crossings are critical in the sense that they have only avoidable crossings and no unavoidable ones. 
So, they decrease the performance of our algorithm and do not guarantee a constant competitive ratio like the other crossings depicted in \Cref{fig:node_exchange_uncritical}. 
In the following sections, we overcome this problem by showing that for each of these critical crossings there must exist some other request that unavoidably crosses one of the requests, involved in the critical crossing. 

\subsection{The 4-0 Crossings}
 
 Recall that, by Lemma~\ref{lem:node_exchange_uncritical}, 
 the optimal solution for a 2-regular instance of the online OSCM-2
 consists on minimizing crossings between every pair of requests. 
 Thus, we can look at a pair of requests and exhaustively classify them as depicted in 
 Figure~\ref{fig:node_exchange_uncritical}, and analyze the competitive ratio of an algorithm
 depending on how many of these types of crossings appear.
 In particular, if no 3-0 crossings (Figure~\ref{fig:node_exchange_uncritical}(c)) or
 4-0 crossings (Figure~\ref{fig:node_exchange_uncritical}(e)) were produced by an algorithm, 
 the algorithm would be 3-competitive at worst, as any sub-optimal placement would be
 trivially compensated by at least one unavoidable crossing.
 Thus, in order to analyze the competitive ratio of Algorithm~\ref{alg:minallcrossings},
 we only have to look at 3-0 and 4-0 crossings.

 Using Lemma~\ref{lem:PropagationCrossEdge} we can now prove that Algorithm~\ref{alg:minallcrossings} will not make too many mistakes when producing 4-0 crossings.
 First we prove that Algorithm~\ref{alg:minallcrossings} will never produce 4-0 crossings with gaps, i.e.,
 unfulfilled slots between the 2 slots generating the 4-0 crossing as depicted for instance in Figure~\ref{fig:40gaps}.
 
 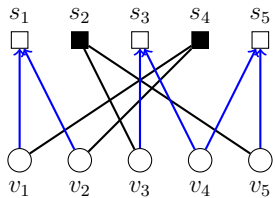
\begin{figure}[ht]
 \begin{center}
   \begin{tikzpicture}[node distance=1cm,
   slot/.style={draw,rectangle},
   vertex/.style={draw,circle},
   scale=0.8,every node/.style={scale=0.8}
   ]
      \foreach \t in {1,...,5}{
         \node[vertex, label=below:$v_\t$] (w_\t) at (\t,0) {};
      }
      \node[slot, label=above:$s_1$] (s_1) at (1,2) {};
      \node[slot,fill=black, label=above:$s_2$] (s_2) at (2,2) {};
      \node[slot, label=above:$s_3$] (s_3) at (3,2) {};
      \node[slot,fill=black, label=above:$s_4$] (s_4) at (4,2) {};
      \node[slot, label=above:$s_5$] (s_5) at (5,2) {};
      \draw[thick] (s_4) edge (w_1)
		   (s_4) edge (w_2)
		   (s_2) edge (w_5)
		   (s_2) edge (w_3);
		   
      \draw[blue,thick,->] (w_1) edge (s_1);
      \draw[blue,thick,->] (w_2) edge (s_1);
      \draw[blue,thick,->] (w_3) edge (s_3);
      \draw[blue,thick,->] (w_4) edge (s_3);
      \draw[blue,thick,->] (w_4) edge (s_5);
      \draw[blue,thick,->] (w_5) edge (s_5);
  \end{tikzpicture}
   \caption{A 4-0 crossing with a slot in between. These types of crossings are forbidden by Lemma~\ref{lem:40gaps}.}
   \label{fig:40gaps}
  \end{center}
 \end{figure}

 \begin{lemma}\label{lem:40gaps}
  Algorithm~\ref{alg:minallcrossings} never generates 4-0 crossings with gaps in between. More precisely,
  for each pair $s_i$, $s_j$ with $i< j$ assigned by Algorithm~\ref{alg:minallcrossings} that generate a 4-0 crossing,
  every $s_k$ with $i< k< j$ is already full.
 \end{lemma}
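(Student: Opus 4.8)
The plan is to argue by contradiction. Suppose that at the moment \Cref{alg:minallcrossings} \emph{completes} a 4-0 crossing — that is, assigns the second of the two requests $R^L=\{a_1,a_2\}$ and $R^R=\{b_1,b_2\}$ that form it, with $a_1<a_2<b_1<b_2$, into slots $s_i<s_j$ so that the left request $R^L$ sits at $s_j$ and the right request $R^R$ at $s_i$ — there is still a free slot $s_k$ with $i<k<j$. By the left–right symmetry of the problem and of \Cref{alg:minallcrossings} (up to tie-breaking, which does not change crossing counts) it suffices to treat the case in which $R^L$ is the request placed second; the case in which $R^R$ is placed second is the mirror image. So assume $R^R$ already occupies $s_i$, both $s_k$ and $s_j$ are free, and the algorithm assigns $R^L$ to $s_j$. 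The goal is to show that assigning $R^L$ to $s_k$ instead would have produced strictly fewer edge–edge plus edge–propagation-arrow crossings, contradicting the greedy rule of the algorithm.

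Observe first that the number of crossings between $R^L$ and $R^R$ is $4$ whether $R^L$ is placed at $s_k$ or at $s_j$, since both slots lie to the right of $s_i$ while $a_1<a_2<b_1<b_2$; this contribution therefore does not distinguish the two placements, and everything hinges on the crossings of $R^L$'s two edges with the remaining graph edges and with the propagation arrows, together with the rearrangement of the propagation arrows forced by the placement. Using the crossing criterion ``segments $(v,s)$ and $(v',s')$ cross iff $(v-v')(s-s')<0$'', one sees that moving $R^L$'s slot from $s_j$ leftwards to $s_k$ can only \emph{remove} crossings with any edge or arrow whose vertex endpoint lies to the right of $a_2$; the only crossings it can possibly \emph{create} are with edges or arrows whose vertex endpoint is at most $a_2$ and whose slot endpoint lies strictly between $s_k$ and $s_j$. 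The decisive ingredient is \Cref{lem:PropagationCrossEdge}: in the configuration in which $R^L$ sits at $s_j$, the two propagation arrows pointing to the still-free slot $s_k$ can neither both cross both edges of the fulfilled slot $s_i$ nor (by the left–right mirror of the lemma) both cross both edges of the fulfilled slot $s_j$. Plugging the two edges of $R^R$ at $s_i$ and the two edges of $R^L$ at $s_j$ into the crossing criterion, this forces one of these two arrows to originate at a vertex $\le a_2$ and the other at a vertex $\ge b_1>a_2$. Since an arrow from a vertex $>a_2$ to the slot $s_k<s_j$ is crossed by \emph{both} edges of $R^L$ while $R^L$ is at $s_j$, the placement at $s_j$ pays at least these two crossings, which disappear entirely once $R^L$ is moved to $s_k$ (the slot $s_k$ is then occupied and that arrow no longer exists).

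It remains to balance this gain of at least two against the crossings the move might create, and here I would use \Cref{obs:shiftingarrows} to pin down exactly which propagation arrows are shifted by the placement of $R^L$ (namely only those attached to vertices between $\min(a_1,\text{source})$ and $\max(a_2,\text{source})$ of the arrows previously aimed at $R^L$'s new slot), so that each newly created crossing ``on the left of $R^L$'' can be matched to a crossing that is simultaneously destroyed ``on the right'', in the spirit of the handshake argument of \Cref{lem:Equator}; together with the strict gain of the previous paragraph this gives the strict inequality and the contradiction. I expect this last balancing step — tracking the propagation arrows through the reassignment — to be the only genuinely laborious part, everything else being the mechanical sign check $(v-v')(s-s')<0$. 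An alternative route worth keeping in mind is to choose the 4-0-crossing-with-gap of smallest time stamp and to show that the obstructing edges (those with a left vertex endpoint and a middle slot endpoint, which are what spoils the naive ``moving left only helps'' argument) are themselves the edges of a request that, together with $R^R$ at $s_i$, already forms an \emph{earlier} 4-0 crossing with the same gap slot $s_k$, contradicting minimality; this replaces part of the bookkeeping but still needs the sub-case in which that obstructing request reaches past $b_1$ to be excluded, again via \Cref{lem:PropagationCrossEdge}.
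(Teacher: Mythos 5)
There is a genuine gap: the decisive step of your argument is deferred rather than carried out. Your setup is sound and essentially mirrors the paper's (contradiction with the greedy rule, comparing the chosen slot against the free gap slot $s_k$, correctly isolating which crossings differ between the two placements, and correctly using \Cref{lem:PropagationCrossEdge} to force one of the two arrows into $s_k$ to originate right of $a_2$, yielding two crossings that the gap placement saves). But the whole difficulty of the lemma lies in the remaining direction: the edges with vertex endpoint at most $a_2$ and slot endpoint strictly between $s_k$ and $s_j$, which genuinely make the gap slot \emph{worse}. Your main route dispatches these with ``I would use \Cref{obs:shiftingarrows}~\dots\ in the spirit of the handshake argument of \Cref{lem:Equator}~\dots\ I expect this to be the only genuinely laborious part.'' \Cref{lem:Equator} only gives a global balance of edges crossing a vertical cut; it does not localize the compensating edges to the slot window between the two placements, which is what you need for the crossing counts to actually cancel. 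As written, the inequality you need is asserted, not proved.

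The paper closes exactly this hole with the first-occurrence induction that you relegate to your ``alternative route'': taking the earliest 4-0 crossing with a gap, every obstructing edge is one half of a fulfilled request in the slot window, and its partner edge must attach to a vertex on the far side (at or left of $v_2$ in the paper's orientation) --- otherwise that request together with the one at $s_i$ would already form an earlier 4-0 crossing with the same gap slot, contradicting minimality. That partner edge then contributes two crossings to the bad placement, cancelling the obstruction one-for-one. The paper then finishes the propagation-arrow bookkeeping concretely: by \Cref{lem:PropagationCrossEdge} at most one arrow crosses into $s_k$ from the left before the request arrives, so the gap placement incurs no extra arrow crossings, while the other placement shifts (by \Cref{obs:shiftingarrows}) the two arrows of its slot onto $s_k$, creating four new crossings and giving the strict inequality. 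Your sketch also glosses over the arrow reassignment (``that arrow no longer exists'' --- it is shifted, not deleted, and you must track where). To make your proof complete you would need to promote your alternative route to the main argument and actually execute the minimality pairing, including the sub-case you yourself flag as unexcluded.
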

 
 \begin{proof}
  Let us assume that there are no 4-0 crossings with gaps in the graph yet.
  We prove this lemma by means of a contradiction.
  
  Let $\{v_1,v_2\}$ be the request assigned to slot $s_i$ by
  Algorithm~\ref{alg:minallcrossings}, and a new request $R = \{v_3,v_4\}$ is made
  where $v_1 < v_2 < v_3 < v_4$ without loss of generality.
  
  Let $s_j$ be a slot to the left of $s_i$ with the smallest crossing values for $R$ and
  let $s_k$ be the leftmost empty slot between $s_j$ and $s_i$.
  
  The only crossings that would make a placement in $s_k$ more unfavorable than a placement in $s_j$ are edges coming from 
  the right of $v_4$ to a slot between $s_j$ and $s_k$ as depicted in the left of Figure~\ref{fig:40gapscrossings}. 
  There cannot be any propagation arrows of this kind as we assume that all the slots between $s_j$ and $s_k$ are full. 
  
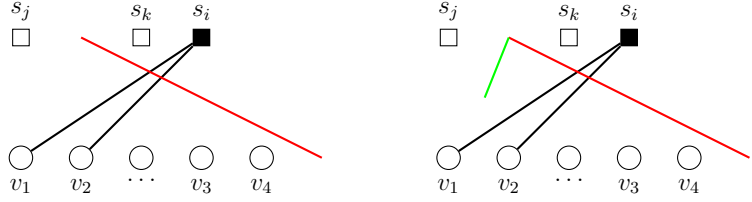
\begin{figure}[t]
    \begin{center}
    \subfigure{
        \begin{tikzpicture}[node distance=1cm,
        slot/.style={draw,rectangle},
        vertex/.style={draw,circle},
        scale=0.8,every node/.style={scale=0.8}
        ]
            \node[vertex, label=below:$v_1$] (v1) at (1,0) {};
            \node[vertex, label=below:$v_2$] (v2) at (2,0) {};
            \node[vertex, label=below:$\hdots$] (van) at (3,0) {};
            \node[vertex, label=below:$v_3$] (v3) at (4,0) {};
            \node[vertex, label=below:$v_4$] (v4) at (5,0) {};
            \node[slot, label=above:$s_j$] (si) at (1,2) {};
            \node[slot,fill=black, label=above:$s_i$] (sj) at (4,2) {};
            \node[slot, label=above:$s_k$] (sk) at (3,2) {};
            \draw[thick] (sj) edge (v1)
                        (sj) edge (v2);
            \draw[thick,red] (6,0) edge (2,2);
        \end{tikzpicture}
    }\hspace*{1cm}
    \subfigure{
        \begin{tikzpicture}[node distance=1cm,
        slot/.style={draw,rectangle},
        vertex/.style={draw,circle},
        scale=0.8,every node/.style={scale=0.8}
        ]
            \node[vertex, label=below:$v_1$] (v1) at (1,0) {};
            \node[vertex, label=below:$v_2$] (v2) at (2,0) {};
            \node[vertex, label=below:$\hdots$] (van) at (3,0) {};
            \node[vertex, label=below:$v_3$] (v3) at (4,0) {};
            \node[vertex, label=below:$v_4$] (v4) at (5,0) {};
            \node[slot, label=above:$s_j$] (si) at (1,2) {};
            \node[slot,fill=black, label=above:$s_i$] (sj) at (4,2) {};
            \node[slot, label=above:$s_k$] (sk) at (3,2) {};
            \draw[thick] (sj) edge (v1)
                        (sj) edge (v2);
            \draw[thick,red] (6,0) edge (2,2);
            \draw[thick,green] (1.6,1) edge (2,2);
        \end{tikzpicture}
    }
    \caption{Counting crossings in a 4-0 crossing with a gap in Lemma~\ref{lem:40gaps}.
    Only edges like the one depicted in red on top will  make a placement in $s_k$ more favorable than a placement in $s_j$.
    But for every red edge a green edge must exist or the graph is not free of 4-0 crossings.
    }
    \label{fig:40gapscrossings}
    \end{center}
\end{figure}

  For any edge coming from a vertex $v_{t_1}$ to the right of $v_4$ into slot $s_t$ with $j< t< k$
  there must be another edge coming from a vertex $v_{t_2}$ to the left of (or directly from) the vertex $v_2$. Otherwise
  we would have a 4-0 crossing with an empty slot, namely $v_1< v_2< v_{t_2}< v_{t_1}$ and the slots $s_t< s_k< s_i$,
  which would be a contradiction to the assumption that this is the first occurrence, as we can see in
  \Cref{fig:40gapscrossings}.
  Thus, this means that $v_{t_2}\leq v_2$.
  However, then the edge $v_{t_2}$ generates crossings only for the assignment of $R$ to $s_j$ and 
  not for the assignment to $s_k$, which means that for every crossing counting for $s_k$ there is at least one crossing
  counting for $s_j$.
  
  Finally we are only left to count the crossings for the propagation arrows going to $s_j$ with the placing in $s_k$
  and vice-versa as depicted in the three drawings of \Cref{fig:40gapscrossings2}.

\begin{figure}[h]
	\begin{center}
		\subfigure{
			\begin{tikzpicture}[node distance=1cm,
			slot/.style={draw,rectangle},
			vertex/.style={draw,circle},
			scale=0.7,every node/.style={scale=0.7}
			]
			\node[vertex, label=below:$v_1$] (v1) at (1,0) {};
			\node[vertex, label=below:$v_2$] (v2) at (2,0) {};
			\node[vertex, label=below:$\hdots$] (van) at (3,0) {};
			\node[vertex, label=below:$v_3$] (v3) at (4,0) {};
			\node[vertex, label=below:$v_4$] (v4) at (5,0) {};
			\node[slot, label=above:$s_j$] (si) at (1,2) {};
			\node[slot,fill=black, label=above:$s_i$] (sj) at (4,2) {};
			\node[slot, label=above:$s_k$] (sk) at (3,2) {};
			\draw[thick] (sj) edge (v1)
			(sj) edge (v2);
			\draw[thick,blue,->] (van) edge (sk);
			\draw[thick,blue,->](1.5,1) -- (sk);
			\draw[thick,blue,->](1,1) -- (si);
			\draw[thick,blue,->](0.5,1) -- (si);
			
			\end{tikzpicture}
		}\hspace*{0.35cm}
		\subfigure{
			\begin{tikzpicture}[node distance=1cm,
			slot/.style={draw,rectangle},
			vertex/.style={draw,circle},
			scale=0.7,every node/.style={scale=0.7}
			]
			\node[vertex, label=below:$v_1$] (v1) at (1,0) {};
			\node[vertex, label=below:$v_2$] (v2) at (2,0) {};
			\node[vertex, label=below:$\hdots$] (van) at (3,0) {};
			\node[vertex, label=below:$v_3$] (v3) at (4,0) {};
			\node[vertex, label=below:$v_4$] (v4) at (5,0) {};
			\node[slot, label=above:$s_j$] (si) at (1,2) {};
			\node[slot,fill=black, label=above:$s_i$] (sj) at (4,2) {};
			\node[slot, label=above:$s_k$] (sk) at (3,2) {};
			\draw[thick] (sj) edge (v1)
			(sj) edge (v2);
			\draw[thick,blue,->] (1.3,1.2) -- (sk);
			\draw[thick,blue,->](1.5,1) -- (sk);
			\draw[thick,blue,->](van) -- (5,2);
			\draw[thick,blue,->](1.6,0.8) -- (4.7,2);
			\draw[thick, red] (v3) edge (si)
			(v4) edge (si);
			
			\end{tikzpicture}
		}\hspace*{0.35cm}
		\subfigure{
			\begin{tikzpicture}[node distance=1cm,
			slot/.style={draw,rectangle},
			vertex/.style={draw,circle},
			scale=0.7,every node/.style={scale=0.7}
			]
			\node[vertex, label=below:$v_1$] (v1) at (1,0) {};
			\node[vertex, label=below:$v_2$] (v2) at (2,0) {};
			\node[vertex, label=below:$\hdots$] (van) at (3,0) {};
			\node[vertex, label=below:$v_3$] (v3) at (4,0) {};
			\node[vertex, label=below:$v_4$] (v4) at (5,0) {};
			\node[slot, label=above:$s_j$] (si) at (1,2) {};
			\node[slot,fill=black, label=above:$s_i$] (sj) at (4,2) {};
			\node[slot, label=above:$s_k$] (sk) at (3,2) {};
			\draw[thick] (sj) edge (v1)
			(sj) edge (v2);
			\draw[thick,blue,->] (van) -- (5,2);
			\draw[thick,blue,->](1.5,1) -- (4.7,2);
			\draw[thick,blue,->](1,1) -- (si);
			\draw[thick,blue,->](0.5,1) -- (si);
			\draw[thick, red] (v3) edge (sk)
			(v4) edge (sk);
			
			\end{tikzpicture}
		} 
		\caption{Lemma~\ref{lem:40gaps}: At most one propagation arrow
			crosses from the left of $v_2$ to $s_k$ by
			Lemma~\ref{lem:PropagationCrossEdge}.  }
		\label{fig:40gapscrossings2}
	\end{center}
\end{figure}
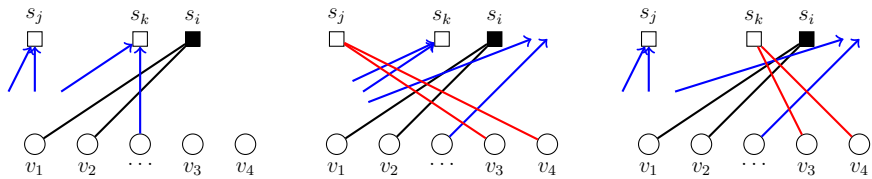
  
  Before we assign the request $\{v_3,v_4\}$, we know that by \Cref{lem:PropagationCrossEdge} only
  one propagation arrow can cross from the left of (or directly from) $v_2$ to the slot $s_k$. Thus, when assigning the request to the slot $s_k$ there are
  no extra crossings for the propagation arrows going to $s_j$. However, if we assign the request to slot $s_j$, the propagation arrows
  assigned to $s_j$ will now be transferred to $s_k$ as we saw in \Cref{obs:shiftingarrows}, creating four new crossings between these
  propagation arrows and the new edges. This results in a contradiction, as we have just seen that the placement in $s_j$ generates more
  crossings than the placement in $s_k$ which contradicts our assumption that $s_j$ has the smallest crossing values.
 \end{proof}
 
 We prove now that \Cref{alg:minallcrossings} only generates 4-0
 crossings when they are forced or in a very specific configuration.  
 We will prove this in two different lemmas.
 
 If we have a request for a pair of vertices, such that every available slot generates at least one 4-0 crossing,
 we call it a forced 4-0 crossing. 
 Observe, that it is possible that more than one 4-0 crossing is forced by the same request (See Figure~\ref{fig:forced40-stacking}).
 
   \begin{figure}[h]
 \begin{center}
   \begin{tikzpicture}[node distance=1cm,
   slot/.style={draw,rectangle},
   vertex/.style={draw,circle},
   scale=0.8,every node/.style={scale=0.8}
   ]
      \node[vertex] (va) at (1,0) {};
      \node[vertex] (vb) at (2,0) {};
      \node[vertex] (vc) at (5,0) {};
      \node[vertex] (vd) at (6,0) {};
      \node[vertex, label=below:$v_1$] (v1) at (3,0) {};
      \node[vertex, label=below:$v_2$] (v2) at (4,0) {};
      \node[slot, fill=black] (sab1) at (2,2) {};
      \node[slot, fill=black] (sab2) at (1,2) {};
      \node[slot, fill=black] (scd1) at (6,2) {};
      \node[slot, fill=black] (scd2) at (5,2) {};
      \node at (3.5,2) {$\hdots$};
      \node[slot,label=above:$s_{\ell}$] (s1) at (0,2) {};
      \node[slot,label=above:$s_{r}$] (s2) at (7,2) {};
      \draw[thick] (sab1) edge (va)
                   (sab1) edge (vb)
                   (scd1) edge (vd)
		   (scd1) edge (vc)
		   (sab2) edge (va)
                   (sab2) edge (vb)
                   (scd2) edge (vd)
		   (scd2) edge (vc);
      \draw[thick,blue,->] (v1)--(s1);
      \draw[thick,blue,->] (v2)--(s2);
  \end{tikzpicture}
   \caption{More than one 4-0 crossing might be forced by the same request}
   \label{fig:forced40-stacking}
  \end{center}
 \end{figure}
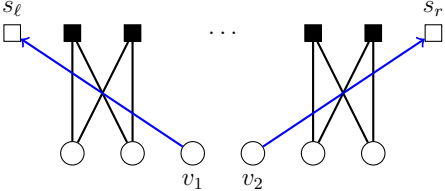

  \begin{lemma}\label{lem:forced40-stacking}
  If Algorithm~\ref{alg:minallcrossings} is used,
  for every forced 4-0 crossing there is at least
  one uniquely identifiable and unavoidable crossing.
 \end{lemma}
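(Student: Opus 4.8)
The plan is to show that every forced 4-0 crossing can be charged, \emph{injectively}, to an unavoidable crossing in the optimal solution. I would first pin down the shape of a forced 4-0. Suppose \Cref{alg:minallcrossings} assigns a request $R=\{v_1,v_2\}$ (with $v_1<v_2$) to a slot $s_R$ and this produces a 4-0 crossing with an already-placed request $R'$ on slot $s_{R'}$; without loss of generality the two endpoints of $R'$ both lie to the left of $v_1$, so that a 4-0 crossing forces $s_{R'}>s_R$ (the mirrored situation is symmetric). By \Cref{lem:40gaps} the algorithm never creates a 4-0 crossing with a gap, hence every slot strictly between $s_R$ and $s_{R'}$ is full; applying this to the rightmost such ``left-blocking'' request shows that $s_R$ is a free slot immediately followed by a maximal block $B$ of full slots containing $R'$, and that \Cref{alg:minallcrossings} could not place $R$ to the right of $B$ without again hitting a 4-0. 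Thus the right side of $B$ is either completely full or guarded by ``right-blocking'' requests whose endpoints all lie to the right of $v_2$: in either case $R$ is trapped and pushed into the left free region, clashing with every left-blocking request sitting inside $B$.

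Next I would produce the unavoidable crossing. In the main case right-blocking requests do exist. Just before $R$ is served, look at the propagation arrows of $v_1$ and $v_2$. Since $v_1$ and $v_2$ straddle the full block(s) separating the two free regions, and propagation arrows neither cross one another nor, by \Cref{lem:PropagationCrossEdge}, cross both edges of a full slot, the arrow(s) at $v_1$ point into the left free region and the arrow(s) at $v_2$ into the right one. Choosing $s_x$ to be the slot of a left-blocking request and $s_y$ the slot of a right-blocking request now puts us in the situation of \Cref{lem:UnavoidableHouses} with $(u,v)=(v_1,v_2)$ and the request $R$ in the role of $\{u,v\}$ (re-running the lemma's counting argument directly for $v_1,v_2$ if they are not adjacent): the request at $s_x$ has both endpoints $\le v_1$, the one at $s_y$ has both endpoints $\ge v_2$, there is a full gap-free block between $s_x$ and $s_y$, and the arrows point to opposite sides. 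Hence a future request $\{a,b\}$ with $a\le v_1\le v_2\le b$ appears that unavoidably crosses every edge incident to $v_1$ and to $v_2$, in particular the edges $(v_1,s_R)$ and $(v_2,s_R)$ created by $R$. In the remaining case, where the side of $B$ is already full, I would instead read the unavoidable crossings off the requests sitting inside $B$: a block that traps $R$ and leaves no admissible slot must contain requests that are identical or nested in the sense of \Cref{fig:node_exchange_uncritical}(a),(f), each of which carries at least one unavoidable crossing of type $1$-$1$ or $2$-$2$.

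It then remains to charge the (possibly several) forced 4-0 crossings caused by $R$ to distinct unavoidable crossings. Since the graph is $2$-regular, no pair of vertices is the endpoint set of more than two requests, so a ``stack'' of identical blocking requests has size at most two; if $B$ contains several distinct blocking requests the argument is iterated block by block. I would charge the forced 4-0 crossings of $R$ to crossings of the house $\{a,b\}$ (resp. of the in-block requests) with the edges in $\{(v_1,s_R),(v_2,s_R)\}$ first --- these edges are produced only by $R$, so no other trapped request ever competes for them --- and, if more charges are needed, with the remaining edges incident to $v_1,v_2$ under a fixed deterministic tie-break (e.g. always take the leftmost admissible house), which keeps the whole assignment injective even when the second request through $v_1$ or $v_2$ is trapped later.

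The step I expect to be the real obstacle is this last one: certifying that the charging is genuinely injective once one allows a single request to force a whole stack of 4-0 crossings while the same unavoidable house (or the same in-block crossing) could a priori be pointed at by more than one trapped request. The trap structure (via \Cref{lem:40gaps}), the propagation-arrow bookkeeping, and the appeal to \Cref{lem:UnavoidableHouses} are comparatively routine once the reductions above are in place.
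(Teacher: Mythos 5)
Your setup matches the paper's: for a forced 4-0 on a request $\{v_1,v_2\}$, Lemma~\ref{lem:PropagationCrossEdge} forces the propagation arrows of $v_1$ and $v_2$ to point to two distinct free slots $s_\ell<s_r$, and Lemma~\ref{lem:UnavoidableHouses} then yields a future ``house'' request straddling $v_1$ and $v_2$. (Your extra case where one side of the block is already full does not occur --- the paper derives from Lemma~\ref{lem:PropagationCrossEdge} that a forced 4-0 always has the two arrows pointing to two distinct free slots --- and your claim that a fully trapped block must contain 1-1 or 2-2 pairs is asserted rather than proved.) The genuine gap is exactly where you suspect it, but it is worse than a tie-breaking issue: your charging pool is the wrong one. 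You charge every forced 4-0 crossing of $\{v_1,v_2\}$ to crossings of the house with the at most four edges incident to $v_1$ and $v_2$. A single placement can force a 4-0 crossing with \emph{every} blocking pair among the edges it jumps over, and the number of such pairs is not bounded by a constant a priori, so a constant-size pool of unavoidable crossings cannot absorb all the avoidable ones injectively, no matter which deterministic tie-break you fix.

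The paper closes this by charging in the opposite direction and by counting rather than matching. It sets $L$ (resp.\ $R$) to be the set of edges crossed by a placement in $s_\ell$ (resp.\ $s_r$), uses the greedy choice of Algorithm~\ref{alg:minallcrossings} to obtain $|L|\le|R|$ (up to $+1$), and observes that the house request of Lemma~\ref{lem:UnavoidableHouses} unavoidably crosses at least one edge of every blocking pair in $L$ and in $R$ (except possibly one, compensated by its unavoidable crossings with the edges of $v_1$ and $v_2$). This yields at least $\tfrac{|L|}{2}+\tfrac{|R|}{2}\ge|L|$ unavoidable crossings against $2|L|$ avoidable ones, a supply that scales with the number of forced 4-0 crossings. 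The overlapping case --- a second forced 4-0 request $\{v_3,v_4\}$ enclosing the first --- is then handled not by a tie-break but by an explicit inequality chain ($|R'|\ge\tfrac{|L|}{2}+|L|$ and $|R'|\le|L'|+1$), iterated over every further overlap. You would need to import both of these quantitative steps to make your argument go through.
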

 
 \begin{proof}
 We will prove this using \Cref{lem:UnavoidableHouses}.
 If a request $\{v_1, v_2\}$ arrives in time step $t$ and every possible placement generates a 4-0 crossing, the propagation arrows of $v_1$ and $v_2$ have to point to two different slots before the request is served due to \Cref{lem:PropagationCrossEdge}. 
 We assume $v_1 < v_2$ and call the slot on the left hand side $s_\ell$ and the other one $s_r$,
 as sketched in Figure~\ref{fig:forced40-stacking}. 
 
 We denote the set of edges that are crossed by the request $\{v_1, v_2\}$ when it is placed in $s_\ell$ with $L$ and analogously we define the set of crossed edges $R$ for the slot $s_r$. 
 To be precise, the set $L$ contains the edges $(s_j,v_i)$ with $s_\ell < s_j < s_r$ and $v_i < v_1$ 
 (See Figure~\ref{fig:forced40-LR}). 
 Our algorithm will always choose the slot which results in the least amount of crossings. 
 Therefore, if our algorithm chooses (w.l.o.g.) the slot $s_\ell$, we know that positioning the request in slot $s_r$ results in at least the same number of crossings. 
 Thus, we know that $|L| \leq |R|$ (or $|L|\leq |R|+1$ if the edge other connected to $v_2$ is placed between
 $s_{\ell}$ and $s_r$ but the edge connected to $v_1$ is not). 
 
 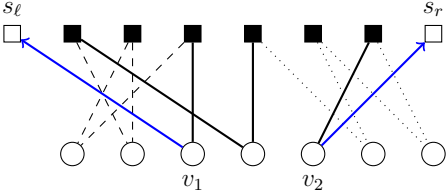
\begin{figure}[h]
 	\begin{center}
 		\begin{tikzpicture}[node distance=1cm,
 		slot/.style={draw,rectangle},
 		vertex/.style={draw,circle},
 		scale=0.8,every node/.style={scale=0.8}
 		]
 		\node[vertex] (va) at (1,0) {};
 		\node[vertex] (vb) at (2,0) {};
 		\node[vertex] (vc) at (6,0) {};
 		\node[vertex] (vd) at (7,0) {};
 		\node[vertex] (vmid) at (4,0) {};
 		\node[vertex, label=below:$v_1$] (v1) at (3,0) {};
 		\node[vertex, label=below:$v_2$] (v2) at (5,0) {};
 		\node[slot, fill=black] (sab1) at (3,2) {};
 		\node[slot, fill=black] (sab2) at (2,2) {};
 		\node[slot, fill=black] (scd1) at (6,2) {};
 		\node[slot, fill=black] (scd2) at (5,2) {};
 		\node[slot, fill=black] (smid) at (4,2) {};
 		\node[slot, fill=black] (sl1) at (1,2) {};
 		\node[slot, label=above:$s_{\ell}$] (s1) at (0,2) {};
 		\node[slot, label=above:$s_r$] (s2) at (7,2) {};
 		\draw[dashed] (sab1) edge (va)                    
 		(sab2) edge (va)
 		(sab2) edge (vb)
 		(sl1) edge (vb);
 		\draw[dotted] (smid) edge (vc)
 		(scd2) edge (vd)
 		(scd2) edge (vc)
 		(scd1) edge (vd);
 		\draw[thick] (smid) edge (vmid)
 		(sab1) edge (v1)
 		(scd1) edge (v2)
 		(sl1) edge (vmid);
 		
 		\draw[thick,blue,->] (v1)--(s1);
 		\draw[thick,blue,->] (v2)--(s2);
 		\end{tikzpicture}
 		\caption{The set of edges $L$ is dashed, and the set of edges $R$ is dotted.
 			The request $\{v_1,v_2\}$  will generate two crossings per dashed
 			edge if positioned in $s_{\ell}$ and two crossings per dotted edge respectively
 			if positioned in $s_r$. There is one unavoidable crossing for the edges going to 
 			vertices between $v_1$ and $v_2$ no matter the positioning.}
 		\label{fig:forced40-LR}
 	\end{center}
 \end{figure} 
 
 Because we are in the situation of a forced 4-0 crossing there are at least two edges in each set $L$ and $R$ that belong to the same request. 
 We look at the pair of edges $(s_i,v_{i_1})$ and $(s_i,v_{i_2})$ in $L$, with $v_{i_1}<v_{i_2}$ 
 with the smallest possible $v_{i_2}$, respectively
 the pair of edges $(s_j,v_{j_1})$ and $(s_j,v_{j_2})$ in $R$, with $v_{j_1}<v_{j_2}$ and 
 the largest possible $v_{j_1}$. 
 Applying now \Cref{lem:UnavoidableHouses}, we know that there will be a future request between at least
 $v_{i_2}$ and $v_{j_1}$, meaning that a future housing request will cross at least one of the edges
 - in $L$ and $R$ respectively - of every pair generating a 4-0 crossing except at most one.
 Observe also, that by \Cref{lem:PropagationCrossEdge} there cannot be any available edge slot between
 $v_{i_2}$ and $v_{j_1}$ other than $v_1$ and $v_2$, this will mather further in the proof.
 
 Note that every request, from which only one edge is in $L$ (or $R$), unavoidably crosses the request $\{v_1, v_2\}$ anyway. 
 Thus, at least $\frac{|L|}{2}+\frac{|R|}{2}(+1)$ edges are unavoidably crossed after the request
 in \Cref{lem:UnavoidableHouses}. Here, the unavoidable crossing between the 
 ``housing request'' and $v_1$ and $v_2$ compensate for the potentially missing crossings with the
 edges from $v_{i_2}$ and $v_{j_1}$. 
 Or in other words, for every slot between $s_\ell$ and $s_r$ there is at least one edge which is crossed unavoidably, except for the aforementioned exceptions. 
 
 The request $\{v_1, v_2\}$ crosses all edges in $L$ twice, if it is placed in $s_\ell$. 
 Thus, the number of avoidable crossings is $2|L|$ which is at most twice as large as the number of unavoidable crossings $\frac{|L|}{2}+\frac{|R|}{2} (+1)\geq |L|$. 
 
 The argument above works if there is only one forced 4-0 crossing for a set of requests before the 
 housing request from \Cref{lem:UnavoidableHouses} appears.
 In the following, we discuss why we can assign, for each set of potentially overlapping forced 4-0
 crossings, a uniquely identifiable unavoidable crossing. 
 Overlapping 4-0 crossings appear, 
 when both involved requests in a 4-0 crossing are again completely crossed by another third request 
 (See Figure~\ref{fig:forced40-overlapping}).
 
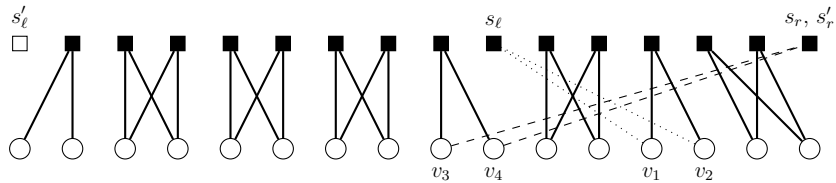
\begin{figure}[b]
 \begin{center}
   \begin{tikzpicture}[node distance=1cm,
   slot/.style={draw,rectangle},
   vertex/.style={draw,circle},
   scale=0.7,every node/.style={scale=0.7}
   ]
      \node[slot, label=above:$s'_{\ell}$] (s0) at (0,0) {};
      \node[slot, fill=black] (s1) at (1,0) {};
      \node[slot, fill=black] (s2) at (2,0) {};
      \node[slot, fill=black] (s3) at (3,0) {};
      \node[slot, fill=black] (s4) at (4,0) {};
      \node[slot, fill=black] (s5) at (5,0) {};
      \node[slot, fill=black] (s6) at (6,0) {};
      \node[slot, fill=black] (s7) at (7,0) {};
      \node[slot, fill=black] (s8) at (8,0) {};
      \node[slot, fill=black, label=above:$s_{\ell}$] (s9) at (9,0) {};
      \node[slot, fill=black] (s10) at (10,0) {};
      \node[slot, fill=black] (s11) at (11,0) {};
      \node[slot, fill=black] (s12) at (12,0) {};
      \node[slot, fill=black] (s13) at (13,0) {};
      \node[slot, fill=black] (s14) at (14,0) {};
      \node[slot, fill=black, label=above:$s_r\text{, } s'_r$] (s15) at (15,0) {};
      
      \node[vertex] (v0) at (0,-2) {};
      \node[vertex] (v1) at (1,-2) {};
      \node[vertex] (v2) at (2,-2) {};
      \node[vertex] (v3) at (3,-2) {};
      \node[vertex] (v4) at (4,-2) {};
      \node[vertex] (v5) at (5,-2) {};
      \node[vertex] (v6) at (6,-2) {};
      \node[vertex] (v7) at (7,-2) {};
      \node[vertex, label=below:$v_3$] (v8) at (8,-2) {};
      \node[vertex, label=below:$v_4$] (v9) at (9,-2) {};
      \node[vertex] (v10) at (10,-2) {};
      \node[vertex] (v11) at (11,-2) {};
      \node[vertex, label=below:$v_1$] (v12) at (12,-2) {};
      \node[vertex, label=below:$v_2$] (v13) at (13,-2) {};
      \node[vertex] (v14) at (14,-2) {};
      \node[vertex] (v15) at (15,-2) {};
      
      \draw[thick] (v0) edge (s1)
                   (v1) edge (s1)
                   (v2) edge (s2)
                   (v3) edge (s2)
                   (v2) edge (s3)
                   (v3) edge (s3)
                   (v4) edge (s4)
                   (v5) edge (s4)
                   (v4) edge (s5)
                   (v5) edge (s5)
                   (v6) edge (s6)
                   (v7) edge (s6)
                   (v6) edge (s7)
                   (v7) edge (s7)
                   (v8) edge (s8)
                   (v9) edge (s8)
                   (v10) edge (s10)
                   (v11) edge (s10)
                   (v10) edge (s11)
                   (v11) edge (s11)
                   (v12) edge (s12)
                   (v13) edge (s12)
                   (v14) edge (s14)
                   (v15) edge (s14)
                   (v14) edge (s13)
                   (v15) edge (s13);
              
      \draw[dashed] (v8) edge (s15)                    
                    (v9) edge (s15);
      \draw[dotted] (v12) edge (s9)
                    (v13) edge (s9);
  \end{tikzpicture}
   \caption{Two sets of 4-0 crossings overlap each other before a housing request appears.
   The first one is drawn with a dotted pair of edges and the second one is dashed.
   This particular scenario is not completely realistic for Algorithm~\ref{alg:minallcrossings} but could happen if the  overall
   graph is larger.}
   \label{fig:forced40-overlapping}
  \end{center}
 \end{figure}  
 
   \begin{figure}[h]
 	\begin{center}
 		\subfigure{
 			\begin{tikzpicture}[node distance=1cm,
 			slot/.style={draw,rectangle},
 			vertex/.style={draw,circle},
 			scale=0.7,every node/.style={scale=0.7}
 			]
 			\node[vertex] (va) at (1,0) {};
 			\node[vertex] (vb) at (2,0) {};
 			\node[vertex] (vc) at (5,0) {};
 			\node[vertex] (vd) at (6,0) {};
 			\node[vertex, label=below:$v_1$] (v1) at (3,0) {};
 			\node[vertex, label=below:$v_2$] (v2) at (4,0) {};
 			\node[slot, fill=black] (sab) at (2,2) {};
 			\node[slot, fill=black] (scd) at (5,2) {};
 			\node at (3.5,2) {$\hdots$};
 			\node[slot] (s1) at (1,2) {};
 			\node[slot] (s2) at (6,2) {};
 			\draw[thick] (sab) edge (va)
 			(sab) edge (vb)
 			(scd) edge (vd)
 			(scd) edge (vc);
 			\draw[thick,blue,->] (v1)--(s1);
 			\draw[thick,blue,->] (v2)--(s2);
 			\end{tikzpicture}}\hspace*{1cm}
 		\subfigure{
 			\begin{tikzpicture}[node distance=1cm,
 			slot/.style={draw,rectangle},
 			vertex/.style={draw,circle},
 			scale=0.7,every node/.style={scale=0.7}
 			]
 			\node[vertex] (va) at (1,0) {};
 			\node[vertex] (vb) at (2,0) {};
 			\node[vertex] (vc) at (5,0) {};
 			\node[vertex] (vd) at (6,0) {};
 			\node[vertex, label=below:$v_1$] (v1) at (3,0) {};
 			\node[vertex, label=below:$v_2$] (v2) at (4,0) {};
 			\node[vertex, label=below:$u_1$] (u1) at (0,0) {};
 			\node[vertex, label=below:$u_2$] (u2) at (7,0) {};
 			\node[slot, fill=black] (sab) at (2,2) {};
 			\node[slot, fill=black] (scd) at (5,2) {};
 			\node at (3.5,2) {$\hdots$};
 			\node[slot, fill=black] (s1) at (1,2) {};
 			\node[slot] (s2) at (6,2) {};
 			\draw[thick] (sab) edge (va)
 			(sab) edge (vb)
 			(scd) edge (vd)
 			(scd) edge (vc);
 			\draw[thick] (v1) -- (s1);
 			\draw[thick] (v2) -- (s1);
 			\draw[thick,blue,->] (u1)--(s2);
 			\draw[thick,blue,->] (u2)--(s2);
 			\end{tikzpicture}}
 		\caption{If there is no available slot without a 4-0 crossings, the propagation arrows
 			point to different sides, and a request $u_1,u_2$ must eventually exist.}
 		\label{fig:forced40}
 	\end{center}
 \end{figure}
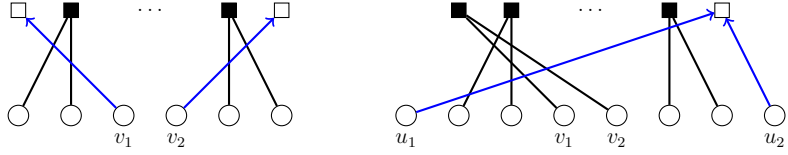  
 
 We call the request that generates the first forced 4-0 crossing after its placement $\{v_1, v_2\}$. 
 The algorithm had the decision to place it in the left slot $s_\ell$, crossing $|L|$ edges or in the right slot $s_r$, crossing $|R|$ edges. 
 
 Without loss of generality assume that the request $\{v_1, v_2\}$ was placed in
 $s_{\ell}$. In order to have an overlapping 4-0 request, we assume that the request of the second forced 4-0 crossing $\{v_3, v_4\}$ is to the left of the request $\{v_1, v_2\}$. 
 Because this request is also a forced 4-0 crossing, it can be placed in a slot to the left 
 $s_\ell^\prime < s_\ell$ or a slot to the right $s'_r \ge s_r$. 
 The slots must be more to the left (respectively, more to the right) because when the request $\{v_1, v_2\}$ arrives, all of the other vertices between $v_{i_2}$ and $v_{j_1}$ must be filled, as we already
 argued, thus $v_3$ and $v_4$ are to the left of
 $v_{i_2}$. 
 Together with the assumption that the second 4-0 crossing is forced, we get the restricted position for $v_3$ and $v_4$.
 
 Analogous to the previous case, let  $L^\prime$ be the set of edges $(v,s)$ with $v<v_3$
 and $s'_{\ell}<s<s'_r$ and let $R'$ be the set of edges $(v,s)$
 with  $v>v_4$ and $s'_{\ell}<s<s'_r$
 Note that, the edges of the first 4-0 request are now part of $R^\prime$. 
 Like already explained above, if the algorithm decides to place the request in $s_r^\prime$, 
 this implies that $|R^\prime| \leq |L^\prime|(+1)$ holds. 
 Moreover, $|R^\prime| \geq \frac{|L|}{2} + |L|$ holds,
 because $v_4<v_{i_2}$, which by definition means that half of the edges
 of $R$ are to the right of $v_4$ and, thus, part of $R'$. 
 Thus, applying again \Cref{lem:UnavoidableHouses}, 
 we know that there will be a future request that unavoidably crosses at least 
 \[\frac{|L^\prime|}{2}(+1)+\frac{|R^\prime|}{2} \geq \frac{|R^\prime|}{2}+\frac{|R^\prime|}{2} 
\geq \frac{|R^\prime|}{2}+ \frac{|R|}{4} + \frac{|L|}{2} \]
 edges. 
 The number of avoidable crossings is $2|L| + 2|R^\prime|\le |L|+|R|+2|R'|$,
 which if divided by $4$, for each possible $4-0$ crossing, means that
 \[\frac{|L|}{4}+\frac{|R|}{4}+\frac{|R'|}{2}\le\frac{|R^\prime|}{2}+ \frac{|R|}{4} + \frac{|L|}{2} \;.\] 
 Thus, we have for each of the avoidable 4-0 crossings at least one uniquely identifiable 
 unavoidable crossing. Observe that we can iterate this argument for every possible overlapping 4-0
 crossing. Moreover, if in the second case, the request $\{v_3,v_4\}$ was placed in $s'_{\ell}$
 the analogous counting argument still holds.
 \end{proof}
 
  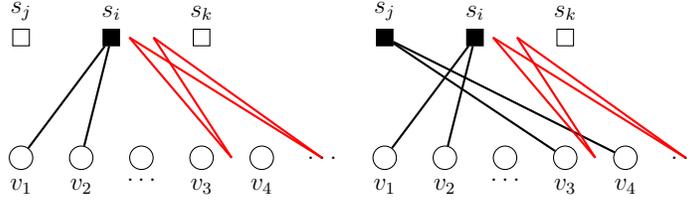
\begin{figure}
 \begin{center}
 \subfigure{
   \begin{tikzpicture}[node distance=1cm,
   slot/.style={draw,rectangle},
   vertex/.style={draw,circle},
   scale=0.8,every node/.style={scale=0.8}
   ]
      \node[vertex, label=below:$v_1$] (v1) at (1,0) {};
      \node[vertex, label=below:$v_2$] (v2) at (2,0) {};
      \node[vertex, label=below:$\hdots$] (van) at (3,0) {};
      \node[vertex, label=below:$v_3$] (v3) at (4,0) {};
      \node[vertex, label=below:$v_4$] (v4) at (5,0) {};
      \node[slot, label=above:$s_j$] (si) at (1,2) {};
      \node[slot,fill=black, label=above:$s_i$] (sj) at (2.5,2) {};
      \node[slot, label=above:$s_k$] (sk) at (4,2) {};
      \node at (6,0) {$\hdots$};
      \draw[thick] (sj) edge (v1)
		   (sj) edge (v2);
      \draw[red,thick] (6,0) -- (2.8,2);
      \draw[red,thick] (4.5,0) -- (2.8,2);
      \draw[red,thick] (6,0) -- (3.2,2);
      \draw[red,thick] (4.5,0) -- (3.2,2);
  \end{tikzpicture}}
   \subfigure{
   \begin{tikzpicture}[node distance=1cm,
   slot/.style={draw,rectangle},
   vertex/.style={draw,circle},
   scale=0.8,every node/.style={scale=0.8}
   ]
      \node[vertex, label=below:$v_1$] (v1) at (1,0) {};
      \node[vertex, label=below:$v_2$] (v2) at (2,0) {};
      \node[vertex, label=below:$\hdots$] (van) at (3,0) {};
      \node[vertex, label=below:$v_3$] (v3) at (4,0) {};
      \node[vertex, label=below:$v_4$] (v4) at (5,0) {};
      \node[slot, fill=black, label=above:$s_j$] (si) at (1,2) {};
      \node[slot,fill=black, label=above:$s_i$] (sj) at (2.5,2) {};
      \node[slot, label=above:$s_k$] (sk) at (4,2) {};
      \node at (6,0) {$\hdots$};
      \draw[thick] (sj) edge (v1)
		   (sj) edge (v2);
		   \draw[thick] (si) edge (v3)
		   (si) edge (v4);
      \draw[red,thick] (6,0) -- (2.8,2);
      \draw[red,thick] (4.5,0) -- (2.8,2);
      \draw[red,thick] (6,0) -- (3.2,2);
      \draw[red,thick] (4.5,0) -- (3.2,2);
  \end{tikzpicture}}
   \caption{If there is more than one slot positioned like the red ones (between  the slots $s_i$ and $s_k$
   with one vertex between $v_3$ and $v_4$, and one to the right of $v_4$ each), Algorithm \ref{alg:minallcrossings} may choose 
   slot $s_j$ generating a 4-0 crossing.}
   \label{fig:40exception}
  \end{center}
 \end{figure}  
 
 We just proved that forced 4-0 when using Algorithm~\ref{alg:minallcrossings},
 incur in one additional unavoidable crossing, this means that we can consider 4-0 crossings
 as if they were, in a sense 5-1 crossings instead, with a competitive ratio of 5 instead of
 being unbounded.
 However, this is not enough, there can be 4-0 crossings produced by Algorithm~\ref{alg:minallcrossings}
 that are not forced. In the following lemma we prove that non-forced 4-0 crossings are only produced
 by Algorithm~\ref{alg:minallcrossings} in a very specific configuration. Then we will proceed to look at 
 the number of uniquely identifiable unavoidable crossings of that configuration.
 
 \begin{lemma}\label{lem:preferno40}
  Given a request for a pair of vertices in a graph, whose 4-0
  crossings have either been forced (Figure~\ref{fig:forced40})
  or were served because any alternative placement would result in two 3-1
  crossings as sketched in Figure~\ref{fig:40exception}.
  If a slot is available which will not generate any 4-0 crossings 
  this slot will be selected by Algorithm~\ref{alg:minallcrossings}
  over any slot which will generate a 4-0 crossing, 
  unless there are two additional requests 
  resulting in two 3-1 crossings for the alternative placement 
  as depicted in Figure~\ref{fig:40exception}.
 \end{lemma}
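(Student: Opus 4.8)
The plan is a proof by contradiction that exploits the greedy rule of \Cref{alg:minallcrossings}. Suppose a request $R=\{v_1,v_2\}$ with $v_1<v_2$ arrives, a free slot whose use would create no 4-0 crossing (a ``no-4-0 slot'') is available, yet the algorithm assigns $R$ to a slot $B$ that does create a 4-0 crossing, and the configuration of \Cref{fig:40exception} does not occur. Since the algorithm minimizes the total number of edge--edge and edge--propagation-arrow crossings, the placement at $B$ costs at most the placement at any free slot; I will derive a contradiction by exhibiting a no-4-0 slot whose placement costs strictly fewer crossings (and which the algorithm therefore strictly prefers, also under its leftmost tie-break).

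First I would locate a good no-4-0 slot. The 4-0 crossing at $B$ is with a request $R'=\{w_1,w_2\}$ vertex-disjoint from $R$; after possibly mirroring the instance, assume $R'$ lies entirely to the left of $R$ and is assigned to a slot above $B$ (this is the orientation of \Cref{fig:40exception}). In the case we must treat, $B$ can have 4-0 partners on only one side: if $B$ additionally had a partner lying entirely to the right of $R$ at a slot below $B$, then no free slot could avoid both partners, the 4-0 crossing would be forced, and no no-4-0 slot would exist. Let $s'$ be the largest slot occupied by a left partner of $B$; by \Cref{lem:40gaps} all slots strictly between $B$ and $s'$ are full, and every no-4-0 slot must lie above $s'$ (a free slot in $(B,s']$ is either occupied or is again mis-ordered with some left partner). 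Take $A$ to be the leftmost free slot above $s'$; then the whole interval of slots $(B,A)$ is full, and all left partners of $B$ occupy slots in $(B,s']$.

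Next I would compare the two placements by a delta count. Moving $R$ from $B$ up to $A$ removes the mis-ordering with every left partner of $B$, saving four crossings apiece; its only possible new cost is crossings between the two edges of $R$ and the requests filling the slots of $(B,A)$ together with the propagation arrows passing over that region. A request $Q$ occupying a slot of $(B,A)$ that lay entirely to the right of $R$ would make $A$ itself create a 4-0 crossing, against the choice of $A$; hence every such $Q$ is nested with $R$, interleaved with $R$ with its left vertex below $v_1$, or interleaved with $R$ with its left vertex in $(v_1,v_2)$. A short case check of the six configurations of \Cref{fig:node_exchange_uncritical} shows that for the first two types the move does \emph{not} increase the $R$--$Q$ crossing count (nested pairs cross twice in either order; a request of the second type is actually mis-ordered at $B$ and well-ordered at $A$, so the move even saves two), while each request of the third type adds exactly two. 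Bounding the propagation-arrow part of the delta via \Cref{lem:PropagationCrossEdge} (at most one arrow crosses both edges of any filled slot) and using \Cref{lem:Equator} to cancel line-crossing edges and arrows in pairs, the net balance makes the move from $B$ to $A$ strictly beneficial --- contradicting the algorithm's choice --- unless there are at least two requests of the third type in $(B,A)$. Two such requests are precisely the two 3-1 crossings of \Cref{fig:40exception} for the placement at $A$; a single one only adds two crossings while the move already saved at least four, so genuinely two are needed. The mirror-image case (partner to the right, $A$ below it) is handled symmetrically.

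The step I expect to be the main obstacle is this delta count: determining, for every request and every propagation arrow incident to the full block $(B,A)$, exactly how its crossings with the two edges of $R$ change under the move, and showing that after subtracting the contributions controlled by \Cref{lem:PropagationCrossEdge} and \Cref{lem:Equator} --- and using the standing hypothesis that every 4-0 crossing produced so far was forced or served for the 3-1 reason, so that no hidden mis-ordered request pair is concealed inside the block --- the remaining balance favors $A$ except in precisely the \Cref{fig:40exception} configuration. The case distinction on where the outer vertices of the requests filling $(B,A)$ lie relative to $v_1$ and $v_2$, and on how many left partners and propagation arrows are involved, is where the real work lies, although each individual case is a short, direct crossing count.
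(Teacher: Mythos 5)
Your overall route is the same as the paper's: reduce via \Cref{lem:40gaps} to two candidate slots separated by a fully occupied block, classify the requests occupying that block by where their endpoints lie relative to the new request, compare the two placements by charging crossings to that classification, control the propagation arrows with \Cref{lem:PropagationCrossEdge}, and conclude that only two requests of the ``one endpoint between the new vertices, one to their right'' type can tip the balance, which is the configuration of \Cref{fig:40exception}. Framing this as a delta count under a swap rather than a direct comparison of the two placements is cosmetic, and your extra appeal to \Cref{lem:Equator} is not needed in the paper's version.

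However, the step you defer as ``the main obstacle'' is precisely where the paper's proof does its real work, and your uniform ``adds exactly two'' accounting for all third-type requests does not suffice. Such a request can occupy a slot either to the left of the 4-0 partner's slot (the paper's region $X$) or to its right (region $Y$), and only the latter kind realizes \Cref{fig:40exception}. A third-type request sitting in $X$ already forms a 4-0 crossing with the partner, so by the standing hypothesis it was itself forced or of the exception type; the paper then shows, using \Cref{lem:PropagationCrossEdge} and the compensating requests that the exception configuration forces to exist, that its contribution to the comparison is either impossible or fully cancelled, so it never counts toward the two requests needed to excuse the new 4-0 crossing. If instead you let two such $X$-requests (or one $X$- and one $Y$-request) absorb the four saved crossings, you would permit the algorithm to choose the 4-0 slot in a situation that is not the one allowed by the lemma's exception clause, and the statement would fail. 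The case analysis you postpone is therefore not routine bookkeeping; it is the content of the lemma.
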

 
 \begin{proof}
  Let us assume that we have a graph with the only 4-0 crossings appearing in the configurations of 
  \Cref{fig:forced40,fig:40exception}.
  Let $\{v_1,v_2\}$ be a request assigned to slot $s_i$.
  Let $\{v_3,v_4\}$ be a new request with $v_2<v_3$ without loss of generality.
  The new request can be assigned to a slot
  $s_k$ right of $s_i$ without generating new 4-0 crossings or to a slot $s_j$ to the left of $s_i$ as
  depicted in the first drawing of Figure~\ref{fig:preferno40}. We can assume by 
  Lemma~\ref{lem:40gaps} that $s_j$ is the rightmost available slot after $s_i$.
  
  As we did in Lemma~\ref{lem:40gaps}, we first count edge crossings and then count
  the propagation arrow crossings.
  
  In order to do this, we divide the relevant slots into two subsets.
  The subset $X$ contains the slots between $s_j$ and $s_i$. Recall that all the slots
  in this area are filled.
  The subset $Y$ contains the slots between $s_i$ and $s_k$, all of them are filled too.
  We also divide the vertices into three subsets.
  Any vertex to the left of $v_3$ belongs to subset $A$. Vertices between $v_3$ and $v_4$
  belong to subset $B$ and vertices to the right of $v_4$ belong to subset $C$.
  This division is depicted in Figure~\ref{fig:preferno40}.
  
  Only edges to slots in $X$ or $Y$ will generate crossings that count only for 
  one of the two placements. In particular any edge from $C$ to $X$ or $Y$ will generate two additional crossings
  for the placement in $s_k$ with respect to the placement in $s_j$, those 
  edges are depicted in red in the second drawing of Figure~\ref{fig:preferno40}.
  On the other hand any edge from $A$ to a slot in $X$ or $Y$ generates two additional crossings
  for for the placement in $s_j$ with respect to the placement in $s_k$. Those edges are depicted
  in green in the second drawing of Figure~\ref{fig:preferno40}. Finally, the edges from
  $B$ to $X$ or $Y$ are neutral with respect to both placements. This means that we only need to analyze
  previously placed requests in $X$ or $Y$ with one endpoint in $C$, as these are the only ones that will
  make a placement in $s_j$ more likely with respect to a placement in $s_j$.
  
  We now analyze all possible requests in $X$ with at least one endpoint in $C$.
  Recall that we assume that there is always a slot that does not force a 4-0
  crossing. This means that there cannot be a pair of edges from $C$ connected to the same slot in $X$ or $Y$.
  On the other hand if a pair of edges from $C$ and $B$ respectively go to a slot $X$
  (we call this request $CXB$), we have a previous 4-0 crossing
  in the graph. This means that
  either the request placed in $s_i$ generated a 4-0 crossing, or the request $CXB$ did,
  we distinguish these two cases.
  
  If the placement in $s_i$ generated the 4-0 crossing we argue that 
  there was a situation like in Figure~\ref{fig:preferno40}.
  If the 4-0 crossing was forced when $s_i$ was placed, this means that there was a request
  to the right of $v_1$, and it was fulfilled by a slot in $X$, but between this request and 
  $CXB$ there were at least 3 propagation arrows, in particular, from $v_1$, $v_2$ and $v_3$,
  so this situation is forbidden by Lemma~\ref{lem:PropagationCrossEdge}.
  If there was a situation Figure~\ref{fig:preferno40} involving the request $CXB$ and 
  $v_1,v_2$, then there must be two requests between $v_1$ and $v_2$ are in region $A$ and are fulfilled
  in the region of $X$. These two requests will completely counteract the crossing contributions of 
  the request $CXB$.
  
  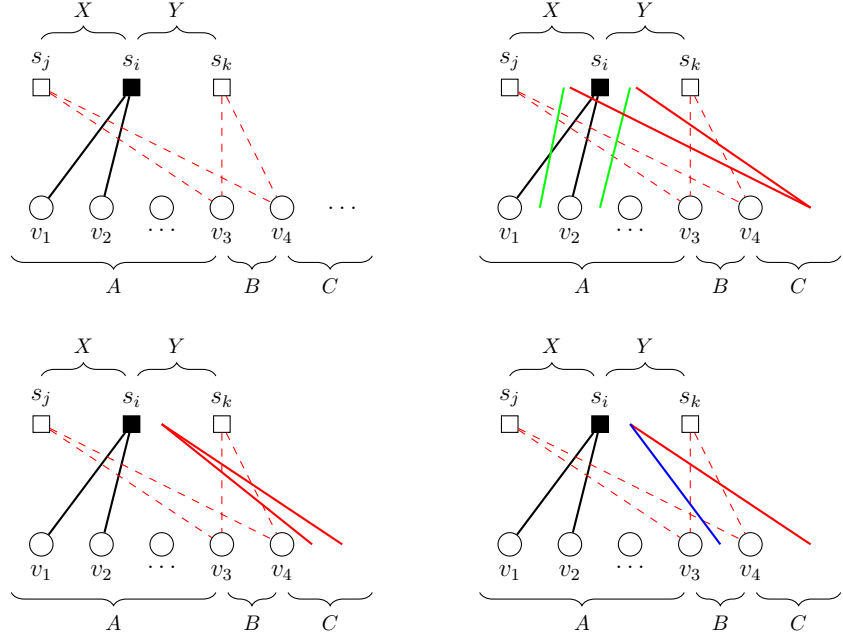
\begin{figure}[t]
  	\begin{center}
  		\subfigure{
  			\begin{tikzpicture}[node distance=1cm,
  			slot/.style={draw,rectangle},
  			vertex/.style={draw,circle},
  			scale=0.8,every node/.style={scale=0.8}
  			]
  			\node[vertex, label=below:$v_1$] (v1) at (1,0) {};
  			\node[vertex, label=below:$v_2$] (v2) at (2,0) {};
  			\node[vertex, label=below:$\hdots$] (van) at (3,0) {};
  			\node[vertex, label=below:$v_3$] (v3) at (4,0) {};
  			\node[vertex, label=below:$v_4$] (v4) at (5,0) {};
  			\node[slot, label=above:$s_j$] (si) at (1,2) {};
  			\node[slot,fill=black, label=above:$s_i$] (sj) at (2.5,2) {};
  			\node[slot, label=above:$s_k$] (sk) at (4,2) {};
  			\node at (6,0) {$\hdots$};
  			\draw[thick] (sj) edge (v1)
  			(sj) edge (v2);
  			\draw[dashed, red] (si) edge (v3)
  			(si) edge (v4)
  			(sk) edge (v3)
  			(sk) edge (v4);
  			\draw[decorate,decoration={brace,amplitude=5pt,mirror}] (0.5,-0.8) -- (3.9,-0.8) node [black,midway,yshift=-0.5cm] {\footnotesize $A$};
  			\draw[decorate,decoration={brace,amplitude=5pt,mirror}] (4.1,-0.8) -- (4.9,-0.8) node [black,midway,yshift=-0.5cm] {\footnotesize $B$};
  			\draw[decorate,decoration={brace,amplitude=5pt,mirror}] (5.1,-0.8) -- (6.5,-0.8) node [black,midway,yshift=-0.5cm] {\footnotesize $C$};
  			\draw[decorate,decoration={brace,amplitude=5pt}] (1,2.8) -- (2.4,2.8) node [black,midway,yshift=0.5cm] {\footnotesize $X$};
  			\draw[decorate,decoration={brace,amplitude=5pt}] (2.6,2.8) -- (3.9,2.8) node [black,midway,yshift=0.5cm] {\footnotesize $Y$};
  			\end{tikzpicture}
  		}\hspace*{1cm}
  		\subfigure{
  			\begin{tikzpicture}[node distance=1cm,
  			slot/.style={draw,rectangle},
  			vertex/.style={draw,circle},
  			scale=0.8,every node/.style={scale=0.8}
  			]
  			\node[vertex, label=below:$v_1$] (v1) at (1,0) {};
  			\node[vertex, label=below:$v_2$] (v2) at (2,0) {};
  			\node[vertex, label=below:$\hdots$] (van) at (3,0) {};
  			\node[vertex, label=below:$v_3$] (v3) at (4,0) {};
  			\node[vertex, label=below:$v_4$] (v4) at (5,0) {};
  			\node[slot, label=above:$s_j$] (si) at (1,2) {};
  			\node[slot,fill=black, label=above:$s_i$] (sj) at (2.5,2) {};
  			\node[slot, label=above:$s_k$] (sk) at (4,2) {};
  			\draw[thick] (sj) edge (v1)
  			(sj) edge (v2);
  			\draw[dashed, red] (si) edge (v3)
  			(si) edge (v4)
  			(sk) edge (v3)
  			(sk) edge (v4);
  			\draw[red,thick] (6,0) -- (3.1,2);
  			\draw[green,thick] (2.5,0) -- (3,2);
  			\draw[red,thick] (6,0) -- (2,2);
  			\draw[green,thick] (1.5,0) -- (1.9,2);
  			\draw[decorate,decoration={brace,amplitude=5pt,mirror}] (0.5,-0.8) -- (3.9,-0.8) node [black,midway,yshift=-0.5cm] {\footnotesize $A$};
  			\draw[decorate,decoration={brace,amplitude=5pt,mirror}] (4.1,-0.8) -- (4.9,-0.8) node [black,midway,yshift=-0.5cm] {\footnotesize $B$};
  			\draw[decorate,decoration={brace,amplitude=5pt,mirror}] (5.1,-0.8) -- (6.5,-0.8) node [black,midway,yshift=-0.5cm] {\footnotesize $C$};
  			\draw[decorate,decoration={brace,amplitude=5pt}] (1,2.8) -- (2.4,2.8) node [black,midway,yshift=0.5cm] {\footnotesize $X$};
  			\draw[decorate,decoration={brace,amplitude=5pt}] (2.6,2.8) -- (3.9,2.8) node [black,midway,yshift=0.5cm] {\footnotesize $Y$};
  			\end{tikzpicture}
  		}\\
  		\subfigure{
  			\begin{tikzpicture}[node distance=1cm,
  			slot/.style={draw,rectangle},
  			vertex/.style={draw,circle},
  			scale=0.8,every node/.style={scale=0.8}
  			]
  			\node[vertex, label=below:$v_1$] (v1) at (1,0) {};
  			\node[vertex, label=below:$v_2$] (v2) at (2,0) {};
  			\node[vertex, label=below:$\hdots$] (van) at (3,0) {};
  			\node[vertex, label=below:$v_3$] (v3) at (4,0) {};
  			\node[vertex, label=below:$v_4$] (v4) at (5,0) {};
  			\node[slot, label=above:$s_j$] (si) at (1,2) {};
  			\node[slot,fill=black, label=above:$s_i$] (sj) at (2.5,2) {};
  			\node[slot, label=above:$s_k$] (sk) at (4,2) {};
  			\draw[thick] (sj) edge (v1)
  			(sj) edge (v2);
  			\draw[dashed, red] (si) edge (v3)
  			(si) edge (v4)
  			(sk) edge (v3)
  			(sk) edge (v4);
  			\draw[red,thick] (6,0) -- (3,2);
  			\draw[red,thick] (5.5,0) -- (3,2);
  			\draw[decorate,decoration={brace,amplitude=5pt,mirror}] (0.5,-0.8) -- (3.9,-0.8) node [black,midway,yshift=-0.5cm] {\footnotesize $A$};
  			\draw[decorate,decoration={brace,amplitude=5pt,mirror}] (4.1,-0.8) -- (4.9,-0.8) node [black,midway,yshift=-0.5cm] {\footnotesize $B$};
  			\draw[decorate,decoration={brace,amplitude=5pt,mirror}] (5.1,-0.8) -- (6.5,-0.8) node [black,midway,yshift=-0.5cm] {\footnotesize $C$};
  			\draw[decorate,decoration={brace,amplitude=5pt}] (1,2.8) -- (2.4,2.8) node [black,midway,yshift=0.5cm] {\footnotesize $X$};
  			\draw[decorate,decoration={brace,amplitude=5pt}] (2.6,2.8) -- (3.9,2.8) node [black,midway,yshift=0.5cm] {\footnotesize $Y$};
  			\end{tikzpicture}
  		}\hspace*{1cm}
  		\subfigure{
  			\begin{tikzpicture}[node distance=1cm,
  			slot/.style={draw,rectangle},
  			vertex/.style={draw,circle},
  			scale=0.8,every node/.style={scale=0.8}
  			]
  			\node[vertex, label=below:$v_1$] (v1) at (1,0) {};
  			\node[vertex, label=below:$v_2$] (v2) at (2,0) {};
  			\node[vertex, label=below:$\hdots$] (van) at (3,0) {};
  			\node[vertex, label=below:$v_3$] (v3) at (4,0) {};
  			\node[vertex, label=below:$v_4$] (v4) at (5,0) {};
  			\node[slot, label=above:$s_j$] (si) at (1,2) {};
  			\node[slot,fill=black, label=above:$s_i$] (sj) at (2.5,2) {};
  			\node[slot, label=above:$s_k$] (sk) at (4,2) {};
  			\draw[thick] (sj) edge (v1)
  			(sj) edge (v2);
  			\draw[dashed, red] (si) edge (v3)
  			(si) edge (v4)
  			(sk) edge (v3)
  			(sk) edge (v4);
  			\draw[red,thick] (6,0) -- (3,2);
  			\draw[blue,thick] (4.5,0) -- (3,2);
  			\draw[decorate,decoration={brace,amplitude=5pt,mirror}] (0.5,-0.8) -- (3.9,-0.8) node [black,midway,yshift=-0.5cm] {\footnotesize $A$};
  			\draw[decorate,decoration={brace,amplitude=5pt,mirror}] (4.1,-0.8) -- (4.9,-0.8) node [black,midway,yshift=-0.5cm] {\footnotesize $B$};
  			\draw[decorate,decoration={brace,amplitude=5pt,mirror}] (5.1,-0.8) -- (6.5,-0.8) node [black,midway,yshift=-0.5cm] {\footnotesize $C$};
  			\draw[decorate,decoration={brace,amplitude=5pt}] (1,2.8) -- (2.4,2.8) node [black,midway,yshift=0.5cm] {\footnotesize $X$};
  			\draw[decorate,decoration={brace,amplitude=5pt}] (2.6,2.8) -- (3.9,2.8) node [black,midway,yshift=0.5cm] {\footnotesize $Y$};
  			\end{tikzpicture}
  		}
  		\caption{We have two possible placements for the request $v_3$, $v_4$ red edges contribute extra crossings to the placement in $s_k$
  			and green edges contribute extra crossings to the placement in $s_j$.  For the slots in $Y$ if a slot has both endpoints in $C$ it is a forced 4-0 crossing, but it can happen that one endpoint
  			is in $B$.
  		}
  		\label{fig:preferno40}
  	\end{center}
  \end{figure}

  If the request $CXB$ generated the 4-0 crossing,
  it also could not have been forced, as the propagation arrow from $v_4$ is between the two propagation
  arrows of the request and it also generates a situation forbidden by Lemma~\ref{lem:PropagationCrossEdge}.
  This means that also in this case there must have been a situation like in \Cref{fig:40exception}
  involving the request $CXB$. In this case, depicted in Figure~\ref{fig:double40}, 
  the two requests contributing to the situation in Figure~\ref{fig:40exception} will already be present.
  
  Finally, if a request has one endpoint in $C$ and one in $A$, this means that their crossings for the placements in 
  $s_j$ and $s_k$ compensate.


 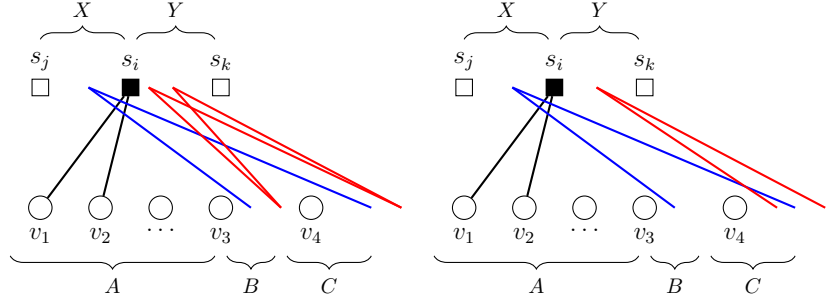
\begin{figure}[t]
 \begin{center}
  \subfigure{
  \begin{tikzpicture}[node distance=1cm,
      slot/.style={draw,rectangle},
      vertex/.style={draw,circle},
      scale=0.8,every node/.style={scale=0.8}]
      \node[vertex, label=below:$v_1$] (v1) at (1,0) {};
      \node[vertex, label=below:$v_2$] (v2) at (2,0) {};
      \node[vertex, label=below:$\hdots$] (van) at (3,0) {};
      \node[vertex, label=below:$v_3$] (v3) at (4,0) {};
      \node[vertex, label=below:$v_4$] (v4) at (5.5,0) {};
      \node[slot, label=above:$s_j$] (si) at (1,2) {};
      \node[slot,fill=black, label=above:$s_i$] (sj) at (2.5,2) {};
      \node[slot, label=above:$s_k$] (sk) at (4,2) {};
      \draw[thick] (sj) edge (v1)
		   (sj) edge (v2);
      \draw[blue,thick] (6.5,0) -- (1.8,2);
      \draw[blue,thick] (4.5,0) -- (1.8,2);
      \draw[red,thick] (7,0) -- (2.8,2);
      \draw[red,thick] (5,0) -- (2.8,2);
      \draw[red,thick] (7,0) -- (3.2,2);
      \draw[red,thick] (5,0) -- (3.2,2);
      \draw[decorate,decoration={brace,amplitude=5pt,mirror}] (0.5,-0.8) -- (3.9,-0.8) node [black,midway,yshift=-0.5cm] {\footnotesize $A$};
      \draw[decorate,decoration={brace,amplitude=5pt,mirror}] (4.1,-0.8) -- (4.9,-0.8) node [black,midway,yshift=-0.5cm] {\footnotesize $B$};
      \draw[decorate,decoration={brace,amplitude=5pt,mirror}] (5.1,-0.8) -- (6.5,-0.8) node [black,midway,yshift=-0.5cm] {\footnotesize $C$};
      \draw[decorate,decoration={brace,amplitude=5pt}] (1,2.8) -- (2.4,2.8) node [black,midway,yshift=0.5cm] {\footnotesize $X$};
      \draw[decorate,decoration={brace,amplitude=5pt}] (2.6,2.8) -- (3.9,2.8) node [black,midway,yshift=0.5cm] {\footnotesize $Y$};
  \end{tikzpicture}
  }
  \subfigure{
  \begin{tikzpicture}[node distance=1cm,
      slot/.style={draw,rectangle},
      vertex/.style={draw,circle},scale=0.8,every node/.style={scale=0.8}
      ]
      \node[vertex, label=below:$v_1$] (v1) at (1,0) {};
      \node[vertex, label=below:$v_2$] (v2) at (2,0) {};
      \node[vertex, label=below:$\hdots$] (van) at (3,0) {};
      \node[vertex, label=below:$v_3$] (v3) at (4,0) {};
      \node[vertex, label=below:$v_4$] (v4) at (5.5,0) {};
      \node[slot, label=above:$s_j$] (si) at (1,2) {};
      \node[slot,fill=black, label=above:$s_i$] (sj) at (2.5,2) {};
      \node[slot, label=above:$s_k$] (sk) at (4,2) {};
      \draw[thick] (sj) edge (v1)
		   (sj) edge (v2);
      \draw[blue,thick] (6.5,0) -- (1.8,2);
      \draw[blue,thick] (4.5,0) -- (1.8,2);
      \draw[red,thick] (6.2,0) -- (3.2,2);
      \draw[red,thick] (7,0) -- (3.2,2);
      \draw[decorate,decoration={brace,amplitude=5pt,mirror}] (0.5,-0.8) -- (3.9,-0.8) node [black,midway,yshift=-0.5cm] {\footnotesize $A$};
      \draw[decorate,decoration={brace,amplitude=5pt,mirror}] (4.1,-0.8) -- (4.9,-0.8) node [black,midway,yshift=-0.5cm] {\footnotesize $B$};
      \draw[decorate,decoration={brace,amplitude=5pt,mirror}] (5.1,-0.8) -- (6.5,-0.8) node [black,midway,yshift=-0.5cm] {\footnotesize $C$};
      \draw[decorate,decoration={brace,amplitude=5pt}] (1,2.8) -- (2.4,2.8) node [black,midway,yshift=0.5cm] {\footnotesize $X$};
      \draw[decorate,decoration={brace,amplitude=5pt}] (2.6,2.8) -- (3.9,2.8) node [black,midway,yshift=0.5cm] {\footnotesize $Y$};
  \end{tikzpicture}
  }
   \caption{If we have a request with one endpoint in $C$ and one in $B$ placed in $X$, a 4-0 request was already present in the graph.
   This means that we had a situation like Figure~\ref{fig:40exception} already with respect to that placement, and we either have
   a situation  like Figure~\ref{fig:40exception} with respect to $v_3$ and $v_4$ too (left picture) or we have a forced 4-0 crossing 
   (right picture), contradicting the assumption of Lemma~\ref{lem:preferno40}.
   }
   \label{fig:double40}
  \end{center}
 \end{figure}

  We thus only care for slots in $Y$ with at least one endpoint in $C$. 
  If a slot in $Y$ with one endpoint in $C$ has the other endpoint in $A$, the number of edge crossings will be higher for the placement in $s_j$ already.
  Moreover, there can not be a slot in $Y$ with two edges directed to $C$, as in the case for slots in $X$, this would contradict the assumption that we are not in the case of a forced 4-0 crossing,
  as depicted with two red edges in the third drawing of Figure~\ref{fig:preferno40}.
  We are only left with one case, if there is a fulfilled request in $Y$ with a vertex in $B$ and a vertex in $C$,
  as depicted in the fourth drawing of Figure~\ref{fig:preferno40}.
  This type of request generates two extra crossings for the placement in $s_k$ with respect to the placement in $s_j$.
  This is still not a problem if there is only one such request, as these crossings would still be offset by the 
  4 extra crossings of the
  placement in $s_j$. Moreover, if there is more than one such request we are in the case of Figure~\ref{fig:40exception},
  where a 4-0 placement is allowed.
  
  Finally, we are left to count propagation arrow crossings.
   \begin{figure}[h]
 \begin{center}
 \subfigure{
   \begin{tikzpicture}[node distance=1cm,
   slot/.style={draw,rectangle},
   vertex/.style={draw,circle},
   scale=0.8,every node/.style={scale=0.8}
   ]
      \node[vertex, label=below:$v_1$] (v1) at (1,0) {};
      \node[vertex, label=below:$v_2$] (v2) at (2,0) {};
      \node[vertex, label=below:$\hdots$] (van) at (3,0) {};
      \node[vertex, label=below:$v_3$] (v3) at (4,0) {};
      \node[vertex, label=below:$v_4$] (v4) at (5,0) {};
      \node[slot, label=above:$s_j$] (si) at (1,2) {};
      \node[slot,fill=black, label=above:$s_i$] (sj) at (2.5,2) {};
      \node[slot, label=above:$s_k$] (sk) at (4,2) {};
      \node at (6,0) {$\hdots$};
      \draw[thick] (sj) edge (v1)
		   (sj) edge (v2);
      \draw[thick,blue, ->] (0.5,0) -- (si);
       \draw[thick,blue, ->]                   (v3) edge (si)
                         (v4) edge (sk)
                         (5.5,0) -- (sk);
      \draw[decorate,decoration={brace,amplitude=5pt,mirror}] (0.5,-0.8) -- (3.9,-0.8) node [black,midway,yshift=-0.5cm] {\footnotesize $A$};
      \draw[decorate,decoration={brace,amplitude=5pt,mirror}] (4.1,-0.8) -- (4.9,-0.8) node [black,midway,yshift=-0.5cm] {\footnotesize $B$};
      \draw[decorate,decoration={brace,amplitude=5pt,mirror}] (5.1,-0.8) -- (6.5,-0.8) node [black,midway,yshift=-0.5cm] {\footnotesize $C$};
      \draw[decorate,decoration={brace,amplitude=5pt}] (1,2.8) -- (2.4,2.8) node [black,midway,yshift=0.5cm] {\footnotesize $X$};
      \draw[decorate,decoration={brace,amplitude=5pt}] (2.6,2.8) -- (3.9,2.8) node [black,midway,yshift=0.5cm] {\footnotesize $Y$};
  \end{tikzpicture}
  }\hspace*{0cm}
  \subfigure{
  \begin{tikzpicture}[node distance=1cm,
  slot/.style={draw,rectangle},
  vertex/.style={draw,circle},
  scale=0.8,every node/.style={scale=0.8}
  ]
      \node[vertex, label=below:$v_1$] (v1) at (1,0) {};
      \node[vertex, label=below:$v_2$] (v2) at (2,0) {};
      \node[vertex, label=below:$\hdots$] (van) at (3,0) {};
      \node[vertex, label=below:$v_3$] (v3) at (4,0) {};
      \node[vertex, label=below:$v_4$] (v4) at (5,0) {};
      \node[slot, label=above:$s_j$] (si) at (1,2) {};
      \node[slot,fill=black, label=above:$s_i$] (sj) at (2.5,2) {};
      \node[slot, label=above:$s_k$] (sk) at (4,2) {};
      \node at (6,0) {$\hdots$};
      \draw[thick] (sj) edge (v1)
		   (sj) edge (v2);
      \draw[thick,red] (si) edge (v3)
		   (si) edge (v4);   
      \draw[thick,blue, ->] (0.5,0) -- (sk);
       \draw[thick,blue, ->] (5.5,0) -- (sk);
      \draw[decorate,decoration={brace,amplitude=5pt,mirror}] (0.5,-0.8) -- (3.9,-0.8) node [black,midway,yshift=-0.5cm] {\footnotesize $A$};
      \draw[decorate,decoration={brace,amplitude=5pt,mirror}] (4.1,-0.8) -- (4.9,-0.8) node [black,midway,yshift=-0.5cm] {\footnotesize $B$};
      \draw[decorate,decoration={brace,amplitude=5pt,mirror}] (5.1,-0.8) -- (6.5,-0.8) node [black,midway,yshift=-0.5cm] {\footnotesize $C$};
      \draw[decorate,decoration={brace,amplitude=5pt}] (1,2.8) -- (2.4,2.8) node [black,midway,yshift=0.5cm] {\footnotesize $X$};
      \draw[decorate,decoration={brace,amplitude=5pt}] (2.6,2.8) -- (3.9,2.8) node [black,midway,yshift=0.5cm] {\footnotesize $Y$};
  \end{tikzpicture}
  }
  \hspace*{0cm}
  \subfigure{
  \begin{tikzpicture}[node distance=1cm,
  slot/.style={draw,rectangle},
  vertex/.style={draw,circle},
  scale=0.8,every node/.style={scale=0.8}
  ]
      \node[vertex, label=below:$v_1$] (v1) at (1,0) {};
      \node[vertex, label=below:$v_2$] (v2) at (2,0) {};
      \node[vertex, label=below:$\hdots$] (van) at (3,0) {};
      \node[vertex, label=below:$v_3$] (v3) at (4,0) {};
      \node[vertex, label=below:$v_4$] (v4) at (5,0) {};
      \node[slot, label=above:$s_j$] (si) at (1,2) {};
      \node[slot,fill=black, label=above:$s_i$] (sj) at (2.5,2) {};
      \node[slot, label=above:$s_k$] (sk) at (4,2) {};
      \node at (6,0) {$\hdots$};
      \draw[thick] (sj) edge (v1)
		   (sj) edge (v2);
      \draw[thick,red] (sk) edge (v3)
		   (sk) edge (v4);
      \draw[thick,blue, ->] (0.5,0) -- (si);
       \draw[thick,blue, ->] (5.5,0) -- (si);
      \draw[decorate,decoration={brace,amplitude=5pt,mirror}] (0.5,-0.8) -- (3.9,-0.8) node [black,midway,yshift=-0.5cm] {\footnotesize $A$};
      \draw[decorate,decoration={brace,amplitude=5pt,mirror}] (4.1,-0.8) -- (4.9,-0.8) node [black,midway,yshift=-0.5cm] {\footnotesize $B$};
      \draw[decorate,decoration={brace,amplitude=5pt,mirror}] (5.1,-0.8) -- (6.5,-0.8) node [black,midway,yshift=-0.5cm] {\footnotesize $C$};
      \draw[decorate,decoration={brace,amplitude=5pt}] (1,2.8) -- (2.4,2.8) node [black,midway,yshift=0.5cm] {\footnotesize $X$};
      \draw[decorate,decoration={brace,amplitude=5pt}] (2.6,2.8) -- (3.9,2.8) node [black,midway,yshift=0.5cm] {\footnotesize $Y$};
  \end{tikzpicture}
  }
   \caption{Only one propagation arrow might cross $s_i$ due to Lemma~\ref{lem:PropagationCrossEdge}, and at the rightmost it comes from $v_3$.
   }
   \label{fig:preferno40arrows}
  \end{center}
 \end{figure}
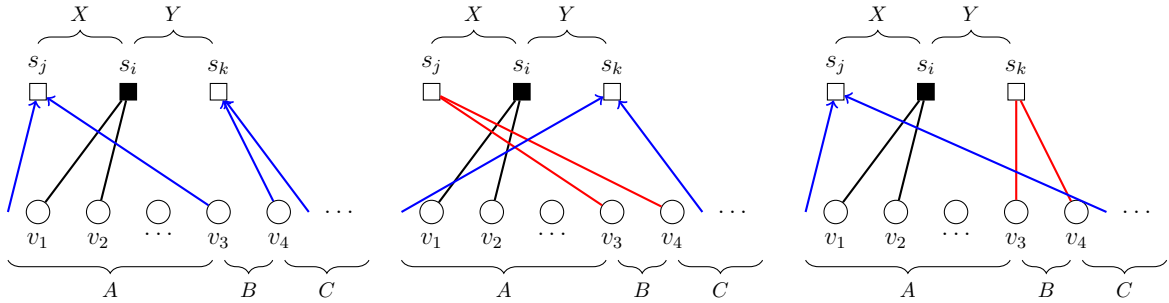 
  As depicted in Figure~\ref{fig:preferno40arrows}, the rightmost placement of the propagation arrows has the arrow from $v_3$ pointing to $s_i$
  and only the leftmost arrow from $C$ pointing to $s_k$. In the second and third pictures we see what happens to these arrows after a possible $s_j$ and
  $s_k$ placement. The number of crossings due to the propagation arrows stays the same. If the propagation arrows would be more to the left, the number of
  crossings in the 4-0 placement would only possibly increase, and the number of crossings for the $s_k$ placement would only possibly decrease. This means that if there is at most one slot in $Y$ with an endpoint in $B$ and an endpoint in $C$, 
  a placement in $s_k$ is prefered.
 \end{proof}

We now prove that the 4-0 crossings described in Lemma~\ref{lem:preferno40},
also have uniquely identifiable unavoidable crossings, just as we did in 
Lemma~\ref{lem:forced40-stacking} for the forced 4-0 crossings.
 
 \begin{lemma}\label{lem:noforced40UnavoidableCrossing}
  Any 4-0 crossing incurred by Algorithm~\ref{alg:minallcrossings},
  because any alternative placement would result in two 3-1
  crossings as sketched in Figure~\ref{fig:40exception}, 
  has two uniquely identifiable unavoidable crossings. 
 \end{lemma}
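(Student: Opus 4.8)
The plan is to re-use the configuration that \Cref{lem:preferno40} isolates. By that lemma such a 4-0 crossing can only arise as follows: a request $\{v_1,v_2\}$ (say $v_1<v_2$) is assigned to a slot $s_i$; a later request $\{v_3,v_4\}$ (say $v_3<v_4$ and, w.l.o.g., $v_2<v_3$) is assigned by \Cref{alg:minallcrossings} to a slot $s_j<s_i$, creating the 4-0 crossing; and there is a ``safe'' slot $s_k>s_i$ whose use would create no 4-0 crossing. By \Cref{lem:40gaps} every slot strictly between $s_j$ and $s_i$ is full, and, as in the setup of the proof of \Cref{lem:preferno40}, we may take $s_k$ so that the block $Y$ of slots strictly between $s_i$ and $s_k$ is full too. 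First I would revisit the crossing count performed in the proof of \Cref{lem:preferno40} (and its left--right mirror): moving $\{v_3,v_4\}$ from $s_j$ to $s_k$ removes $4$ crossings for every 4-0 crossing that the placement at $s_j$ produced, while each request lying in $Y$ with one endpoint strictly between $v_3$ and $v_4$ and the other strictly right of $v_4$ adds back $2$, and all remaining differences (edges reaching $Y$ from the left of $v_3$, previously placed requests reaching $X$, and the propagation arrows) never strictly favour $s_j$. Since the algorithm picked $s_j$, if the placement created $m\ge 1$ such 4-0 crossings there must be at least $2m$ of these ``interleaving'' requests $\{a,b\}$ with $v_3<a<v_4<b$.

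Next I would charge, to each of the $m$ 4-0 crossings, two of these interleaving requests, chosen so that the $2m$ choices are pairwise distinct. For such a request $\{a,b\}$ the pairs interleave ($v_3<a<v_4<b$), so $(\{v_3,v_4\},\{a,b\})$ is of type 3-1 in the classification of \Cref{fig:node_exchange_uncritical}(d); by \Cref{lem:node_exchange_uncritical} this pair contributes exactly one crossing in every optimal solution, and that is the unavoidable crossing we assign. The two requests charged to the 4-0 crossing under consideration thus yield its two unavoidable crossings, and these crossings already exist --- as a commitment between two placed requests --- at the moment $\{v_3,v_4\}$ is served.

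The remaining point, which I expect to be the real obstacle, is to verify that this charging is injective, so that the two crossings are genuinely \emph{uniquely identifiable} across all 4-0 crossings (both those handled in \Cref{lem:forced40-stacking} and those treated here). Each charged crossing is the unavoidable crossing of a definite \emph{pair} of requests: the request whose placement created the 4-0 crossing, and an interleaving request each of whose endpoints lies strictly to the right of the corresponding endpoint of the former. Two 4-0 crossings coming from the same placement are separated by the distinctness built in above; two coming from different placements cannot be charged the same pair, since that would force one of the two requests to have an endpoint simultaneously strictly to the left and strictly to the right of the other's. Finally, the crossings charged in \Cref{lem:forced40-stacking} always arise between a ``housing'' request and a request nested inside it (a 1-1, 2-1 or 2-2 pair in \Cref{fig:node_exchange_uncritical}), never between an interleaved pair, and hence are disjoint from the ones charged here. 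Making this last comparison completely precise is the only laborious part; everything else is the bookkeeping already carried out in the proof of \Cref{lem:preferno40}.
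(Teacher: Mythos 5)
Your proposal is correct and follows essentially the same route as the paper: the two unavoidable crossings are exactly the crossings between the newly placed request and the two interleaving (3-1 type) requests whose presence, by \Cref{lem:preferno40}, is forced whenever the algorithm accepts the 4-0 placement, which is what the paper's direct ``6 crossings versus 2 unavoidable ones'' count in \Cref{fig:40exceptionCR} amounts to. Your explicit charging and injectivity discussion is more detailed than the paper's single-sentence remark about not double counting, and the residual worry you flag about overlap with the crossings charged in \Cref{lem:forced40-stacking} is one the paper does not resolve any more rigorously than you do.
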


 \begin{proof}
  Observe, that if we only consider the crossings generated by the placement 
  of the request generating the 4-0 crossings we do not risk double counting
  unavoidable crossings in this case.
  In a configuration like depicted in Figure~\ref{fig:40exception}, where Algorithm~\ref{alg:minallcrossings}
  generates a 4-0 crossing, an
  optimal algorithm can place the same requests as depicted in the right side of Figure~\ref{fig:40exceptionCR}. 
  The placement of the new request with a 4-0 crossing by Algorithm~\ref{alg:minallcrossings}
  has 6 crossings with previously placed requests (Figure~\ref{fig:40exceptionCR} left) while the optimal placement for this request has only 2 crossings with previously placed requests 
  (Figure~\ref{fig:40exceptionCR} right). 
  These crossings are unavoidable. 
 \end{proof}

\begin{figure}[h]
	\begin{center}
		\subfigure{
			\begin{tikzpicture}[node distance=1cm,
			slot/.style={draw,rectangle},
			vertex/.style={draw,circle},
			scale=0.8,every node/.style={scale=0.8}
			]
			\node[vertex, label=below:$v_1$] (v1) at (1,0) {};
			\node[vertex, label=below:$v_2$] (v2) at (2,0) {};
			\node[vertex, label=below:$\hdots$] (van) at (3,0) {};
			\node[vertex, label=below:$v_3$] (v3) at (4,0) {};
			\node[vertex, label=below:$v_4$] (v4) at (5,0) {};
			\node[slot, fill=black] (si) at (1,2) {};
			\node[slot,fill=black] (sj) at (2.5,2) {};
			\node[slot] (sk) at (4,2) {};
			\node at (6,0) {$\hdots$};
			\draw[thick] (sj) edge (v1)
			(sj) edge (v2);
			\draw[thick] (si) edge (v3)
			(si) edge (v4);
			\draw[red,thick] (6,0) -- (2.8,2);
			\draw[red,thick] (4.5,0) -- (2.8,2);
			\draw[red,thick] (6,0) -- (3.2,2);
			\draw[red,thick] (4.5,0) -- (3.2,2);
			\end{tikzpicture}}
		\subfigure{
			\begin{tikzpicture}[node distance=1cm,
			slot/.style={draw,rectangle},
			vertex/.style={draw,circle},
			scale=0.8,every node/.style={scale=0.8}
			]
			\node[vertex, label=below:$v_1$] (v1) at (1,0) {};
			\node[vertex, label=below:$v_2$] (v2) at (2,0) {};
			\node[vertex, label=below:$\hdots$] (van) at (3,0) {};
			\node[vertex, label=below:$v_3$] (v3) at (4,0) {};
			\node[vertex, label=below:$v_4$] (v4) at (5,0) {};
			\node[slot] (si) at (1,2) {};
			\node[slot,fill=black] (sj) at (2.5,2) {};
			\node[slot, fill=black] (sred1) at (1.8,2) {};
			\node[slot,fill=black] (sred2) at (3.2,2) {};
			\node[slot, fill=black] (sk) at (4,2) {};
			\node at (6,0) {$\hdots$};
			\draw[thick] (sred1) edge (v1)
			(sred1) edge (v2);
			\draw[thick] (sj) edge (v3)
			(sj) edge (v4);
			\draw[red,thick] (6,0) -- (sred2);
			\draw[red,thick] (4.5,0) -- (sred2);
			\draw[red,thick] (6,0) -- (sk);
			\draw[red,thick] (4.5,0) -- (sk);
			\end{tikzpicture}}
		\caption{Algorithm \ref{alg:minallcrossings} has generated a 4-0 crossing. We also depict the optimal configuration for such a situation}
		\label{fig:40exceptionCR}
	\end{center}
\end{figure}
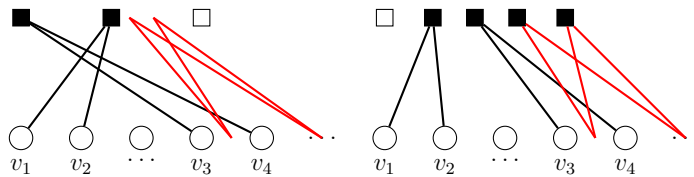 

 We can finally conclude, using \Cref{lem:noforced40UnavoidableCrossing,lem:forced40-stacking},
 that any 4-0 crossings incurred by Algorithm~\ref{alg:minallcrossings} have at least one unavoidable crossing.
 
\begin{theorem}\label{the:4-0Crossings}
Forced and non-forced 4-0 crossings incurred by Algorithm~\ref{alg:minallcrossings} 
have at least one unavoidable crossing.
\end{theorem}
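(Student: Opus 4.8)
The plan is to obtain \Cref{the:4-0Crossings} as a direct corollary of the three preceding results, namely the classification in \Cref{lem:preferno40} together with the two charging lemmas \Cref{lem:forced40-stacking} and \Cref{lem:noforced40UnavoidableCrossing}, plus a short argument that the unavoidable crossings produced by these two lemmas can be charged without collision. First I would invoke \Cref{lem:preferno40} to reduce to exactly two situations: every 4-0 crossing that \Cref{alg:minallcrossings} ever creates is either a \emph{forced} 4-0 crossing, or it is produced only because every alternative placement of the request in question would have caused the two 3-1 crossings of the configuration sketched in \Cref{fig:40exception}. Call the latter \emph{exceptional} 4-0 crossings. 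These two classes are disjoint and, by \Cref{lem:preferno40}, exhaustive, so a finite case split suffices.

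For each forced 4-0 crossing I would invoke \Cref{lem:forced40-stacking}, which already handles the worst case -- a stack of mutually overlapping forced 4-0 crossings arising before the housing request of \Cref{lem:UnavoidableHouses} materializes -- and assigns to every forced 4-0 crossing its own uniquely identifiable unavoidable crossing; by construction this witness is a crossing between a \emph{later} (housing) request and an edge incident to a vertex strictly inside the relevant filled block. For each exceptional 4-0 crossing I would invoke \Cref{lem:noforced40UnavoidableCrossing}, which gives even two uniquely identifiable unavoidable crossings, obtained by comparing the six crossings that \Cref{alg:minallcrossings} incurs at the very moment it serves the 4-0-causing request with the two crossings of the optimal placement of that same request, as in \Cref{fig:40exceptionCR}; these witnesses are crossings between the 4-0-causing request and requests placed strictly \emph{before} it, and are therefore charged to that request alone.

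The only point that genuinely needs care -- and the step I expect to be the main (if modest) obstacle -- is ruling out that a single unavoidable crossing is charged once as the witness of a forced 4-0 crossing and once as the witness of an exceptional one. Here I would use that the two schemes draw from structurally disjoint pools: a witness of the forced scheme is always a crossing involving a request that arrives \emph{after} the forced 4-0 pair, whereas a witness of the exceptional scheme is always a crossing between the (third) 4-0-causing request and a request that arrived \emph{before} it; in particular any such pair of crossing requests differs from any pair used by the other scheme, so no collision occurs. Combined with the intra-scheme uniqueness already established in \Cref{lem:forced40-stacking} and \Cref{lem:noforced40UnavoidableCrossing}, this yields an injection from the set of 4-0 crossings incurred by \Cref{alg:minallcrossings} into the set of unavoidable crossings, and hence every such 4-0 crossing -- forced or non-forced -- has at least one unavoidable crossing, as claimed.
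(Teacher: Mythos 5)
Your proposal matches the paper's treatment: the paper derives \Cref{the:4-0Crossings} directly as a corollary of \Cref{lem:preferno40}, \Cref{lem:forced40-stacking}, and \Cref{lem:noforced40UnavoidableCrossing}, exactly as you do, and gives no further proof. Your additional paragraph ruling out double-charging between the two schemes is a sensible explicit elaboration of what the paper leaves implicit in the phrase ``uniquely identifiable.''
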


\subsection{The 3-0 Crossings}

It remains to prove that Algorithm~\ref{alg:minallcrossings} only generates a 3-0 crossing -- depicted in 
\Cref{fig:node_exchange_uncritical} (c) -- if there is at least one unavoidable 
crossing for one of the two requests that are responsible for the 3-0 crossing. 
In general the proofs use case distinction in a similar way 
to the proofs from the previous section, handling the 4-0 crossings. 

Similarly as in the 4-0 case, we start by proving that Algorithm~\ref{alg:minallcrossings} 
never produces a 3-0 crossing with a gap.

 \begin{lemma}\label{lem:30gaps}
  Algorithm~\ref{alg:minallcrossings} never generates 3-0 crossings with gaps in between. More precisely,
  for each pair $s_j$, $s_i$ assigned by Algorithm~\ref{alg:minallcrossings} with $j< i$ 
  that generate a 3-0 crossing,
  every slot $s_k$ with $j< k< i$ is already full.
 \end{lemma}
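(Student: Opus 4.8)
The plan is to adapt the proof of Lemma~\ref{lem:40gaps} almost verbatim to the geometry of a 3-0 crossing. Recall that a 3-0 crossing arises exactly from a pair of requests sharing one vertex in the pattern $R_x=\{v_1,v_2\}$, $R_y=\{v_2,v_3\}$ with $v_1<v_2<v_3$, placed with $R_y$ on a slot to the left of the slot of $R_x$. Suppose for contradiction that Algorithm~\ref{alg:minallcrossings} produces, at some step, the \emph{first} 3-0 crossing with a gap; note also that by Lemma~\ref{lem:40gaps} there is at no point a 4-0 crossing with a gap. As in Lemma~\ref{lem:40gaps}, it suffices to treat the sub-case where the left request $R_x=\{v_1,v_2\}$ has already been assigned to a slot $s_i$ and the right request $R_y=\{v_2,v_3\}$ is the one being served now, being assigned to some slot $s_j<s_i$ with at least one empty slot strictly between $s_j$ and $s_i$; the other sub-case (the right request placed first, the left request served now and placed to its right) is the left--right mirror image and is handled analogously. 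Let $s_k$ be the leftmost empty slot with $s_j<s_k<s_i$, so that every slot strictly between $s_j$ and $s_k$ is full. I will show that assigning $R_y$ to $s_k$ instead of $s_j$ strictly decreases the number of edge--edge plus edge--arrow crossings, which contradicts the greedy choice of Algorithm~\ref{alg:minallcrossings} (recall that on ties the algorithm keeps the leftmost slot, so a \emph{strict} improvement at the more-right slot $s_k$ is exactly what is needed).

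For the edge--edge crossings, only edges pointing to a slot strictly between $s_j$ and $s_k$ can contribute differently to the two candidate placements, and all those slots are full, so only graph edges are involved. For an edge $(w,s_t)$ with $s_j<s_t<s_k$: if $w<v_2$ it produces two crossings with $R_y$ at $s_j$ and none with $R_y$ at $s_k$; if $v_2<w<v_3$ it is neutral; if $w=v_3$ it produces one extra crossing for the $s_k$-placement; and if $w>v_3$ it produces two extra crossings for the $s_k$-placement. It remains to show the last two ``bad'' cases are always outweighed. Let $R_t$ be the request on a full slot $s_t$ with $s_j<s_t<s_k$. If both endpoints of $R_t$ were $>v_2$, then $R_x=\{v_1,v_2\}$ (the left request of the pair $\{R_x,R_t\}$) sitting on $s_i$ while $R_t$ sits on $s_t<s_i$, with the empty slot $s_k$ in between, would be a 4-0 crossing with a gap, contradicting Lemma~\ref{lem:40gaps}; and $v_2$ itself cannot be an endpoint of $R_t$, since $v_2$ already carries the edge to $s_i$ and its only other edge is the one to $R_y$ being placed now. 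Hence $R_t$ has an endpoint $w'<v_2$, contributing $2$ to the $s_j$-count and $0$ to the $s_k$-count, and a short check over the four possible positions of the other endpoint of $R_t$ shows that, per request $R_t$, the $s_j$-count minus the $s_k$-count is always nonnegative (it is zero only when the other endpoint lies right of $v_3$). Thus the edge--edge crossing count for the placement at $s_j$ is at least that for the placement at $s_k$.

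For the edge--propagation-arrow crossings, which supply the strict gain, the argument mirrors the end of the proof of Lemma~\ref{lem:40gaps}. Before $R_y$ is served, Lemma~\ref{lem:PropagationCrossEdge} bounds by one the number of propagation arrows that reach $s_k$ from the region left of $v_2$, so assigning $R_y$ to $s_k$ creates no new crossing between the new edges and the arrows that were pointing to $s_j$; whereas, by Observation~\ref{obs:shiftingarrows}, assigning $R_y$ to $s_j$ forces both propagation arrows that pointed to $s_j$ to shift to the right across the two new edges at $s_j$, producing new arrow--edge crossings that are absent from the $s_k$-placement. Hence the $s_k$-placement is strictly cheaper than the $s_j$-placement, contradicting the algorithm's decision, so the claimed first 3-0 crossing with a gap cannot exist, and the lemma follows.

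I expect the main obstacle to be the edge--arrow step: one must track how the propagation arrows pointing to $s_j$, to $s_k$, and to the slots vacated by $v_2$ and $v_3$ are re-matched after the placement (Observation~\ref{obs:shiftingarrows}) and verify, in every configuration, that the $s_j$-placement genuinely incurs strictly more arrow--edge crossings than the $s_k$-placement, in particular when $v_2$ or $v_3$ already carries an edge; here Lemma~\ref{lem:PropagationCrossEdge} is what keeps the case analysis finite by bounding the arrows that can enter the gap region from the left. A secondary nuisance is the bookkeeping of the ``$w=v_3$'' case in the edge--edge step --- the one place where the shared-vertex structure of a 3-0 crossing genuinely differs from the disjoint structure of a 4-0 crossing --- and making sure the mirror sub-case really reduces to this one by reflection.
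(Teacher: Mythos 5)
Your overall strategy --- contradict the greedy choice of Algorithm~\ref{alg:minallcrossings} by showing the placement at the leftmost gap slot $s_k$ is strictly cheaper than the placement at $s_j$ --- is a legitimate route, and your edge--edge accounting is correct: the observation that a request in the gap region cannot have both endpoints right of $v_2$ (else Lemma~\ref{lem:40gaps} is violated) and cannot use $v_2$ itself is sound, and the per-request tally does come out nonnegative. It is, however, a different and considerably heavier route than the paper's: the paper does not compare crossing counts at all. It argues in two short steps that the gap placement would create a configuration forbidden by Lemma~\ref{lem:PropagationCrossEdge}: first, the remaining propagation arrow of $v_2$ cannot point to $s_j$ (or further left), because then both arrows into the empty gap slot would originate right of $v_2$ and would cross both edges of the fulfilled slot $s_i$; second, given that, the two arrows pointing to $s_j$ originate left of $v_2$, and placing $\{v_2,v_3\}$ on $s_j$ pushes both of them rightward across the two new edges, which is again the configuration Lemma~\ref{lem:PropagationCrossEdge} excludes. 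No counting is needed.

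The genuine gap in your write-up sits exactly at the step you defer as ``the main obstacle,'' and it is not a bookkeeping nuisance --- it is the entire content of the lemma. Your edge--edge comparison only yields a weak inequality (it is an equality, e.g., when $s_k=s_j+1$), so the strict improvement that contradicts the algorithm's leftmost-on-tie rule must come entirely from the arrow crossings. But your claim that ``assigning $R_y$ to $s_j$ forces \emph{both} propagation arrows that pointed to $s_j$ to shift to the right across the two new edges'' silently assumes that neither of those two arrows is the remaining arrow of $v_2$ itself; if $v_2$'s arrow pointed to $s_j$, the placement would consume it rather than shift it, at most one arrow would cross the new edges, and the claimed strict surplus need not materialize. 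You never rule this out, and the justification you do cite (``Lemma~\ref{lem:PropagationCrossEdge} bounds by one the number of arrows that reach $s_k$ from the region left of $v_2$'') is not the relevant application of that lemma: the problematic arrows are the ones entering $s_k$ from the \emph{right} of $v_2$, which cross both edges of $s_i$. Establishing that $v_2$'s arrow points to $s_k$ or further right --- precisely the first half of the paper's proof --- is the missing ingredient; once you have it, you should also note that the paper's second step already finishes the proof without any further comparison, so the full cost count you propose becomes redundant.
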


 \begin{proof}
  Let us assume that $v_1$ and $v_2$ is the pair of vertices adjacent to a filled slot $s_i$. 
  Let $v_2$ and $v_3$ be a pair of vertices from a new request, with $v_1 < v_2 < v_3$. 
  If Algorithm~\ref{alg:minallcrossings} creates a 3-0 crossing between the requests $\{v_1, v_2\}$ and $\{v_2,v_3\}$, it places the second request in a slot $s_j$ with $s_j < s_i$. 

  We can assume that $s_j$ is the rightmost available slot to the left of $s_i$, 
  due to the following observations. 
  If there exists a slot $s_i^\prime$ between $s_j$ and $s_i$, 
  we observe that the propagation arrow of $v_2$ cannot point to $s_j$, 
  when the request $\{v_2,v_3\}$ arrives. 
  Because then, the two propagation arrows pointing to $s_i^\prime$ 
  have to start at vertices to the right of $v_2$ and cross the edges of the request 
  $\{v_1, v_2\}$ which violates \Cref{lem:PropagationCrossEdge}. 
  This means that the propagation arrow of $v_2$ must point to $s_i'$ or to a slot to the right
  of $s_i$.
  But in this case, when the request $\{v_2,v_3\}$ is placed in $s_j$ it pushes the propagation arrows 
  that pointed to $s_j$, which must come from vertices to the left of $v_2$, to a slot to the right of $s_j$,
  in this case $s_i'$, crossing the edges of the newly placed request and
  violating, again, \Cref{lem:PropagationCrossEdge}. 
  Thus, $s_j$ must be the rightmost available slot to the left of $s_i$. 
 \end{proof}

 In the following lemma we explore the situation that the 3-0 crossing happens at the edge of the graph,
 that is, a placement on any remaining slot causes a 3-0 crossing.
 
 \begin{lemma}\label{lem:30edge}
  Given two requests $\{v_1, v_2\}$ and $\{v_2,v_3\}$ with $v_1<v_2<v_3$.
  Assume without loss of generality that Algorithm~\ref{alg:minallcrossings}
  creates a 3-0 crossing between
  these requests, with
  the first request for vertices $\{v_1,v_2\}$ being placed in slot $s_i$ and
  during the placement of the second request there is no available slot $s_k>s_i$.
  Then there is at least one uniquely identifiable unavoidable crossing with 
  the request $\{v_2,v_3\}$.
 \end{lemma}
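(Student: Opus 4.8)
The plan is to locate, in an arbitrary optimal drawing, a crossing incident to an edge of $\{v_2,v_3\}$ and to argue that this crossing can be booked against the present $3$-$0$ configuration only, so that summing over all $3$-$0$ crossings never reuses a charge.

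First I would show that in \emph{every} drawing the request $\{v_2,v_3\}$ has at least one crossing, via a parity count. Assume some drawing places $\{v_2,v_3\}$ in a slot $s_\beta$ with no crossing on either edge $(v_2,s_\beta)$ or $(v_3,s_\beta)$. Then nothing may enter the wedge they span: there is no vertex of $V$ strictly between $v_2$ and $v_3$; every vertex at a position at most that of $v_2$ sends both its edges to slots at positions at most that of $s_\beta$ (otherwise it would cross $(v_2,s_\beta)$ or $(v_3,s_\beta)$); and every vertex above $v_3$ sends both edges to slots at positions at least that of $s_\beta$. Counting the edge endpoints incident to the $\mathrm{pos}(s_\beta)$ leftmost slots then gives an even number coming from the vertices up to and including $v_2$, plus exactly one further endpoint, namely $(v_3,s_\beta)$, i.e., an odd total, which cannot equal the even number $2\,\mathrm{pos}(s_\beta)$ of endpoints those slots carry. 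Hence some edge crosses $\{v_2,v_3\}$; since an optimal drawing has every pair of requests in its optimal relative order (Lemma~\ref{lem:node_exchange_uncritical}), this crossing is unavoidable.

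The hypothesis that no free slot lies to the right of $s_i$ when $\{v_2,v_3\}$ arrives is then what makes the crossing \emph{uniquely identifiable}. Since all slots at positions at least $\mathrm{pos}(s_i)$ are already full at that moment (by Lemma~\ref{lem:30gaps} the slots between $s_j$ and $s_i$ are full too, so $s_j$ is the rightmost free slot), every remaining propagation arrow points strictly left of $s_i$; in particular the surviving arrow of $v_2$ and the arrow(s) of $v_3$ do. A handshake count in the spirit of Lemma~\ref{lem:Equator}, taken across the vertical line immediately left of $s_i$, then forces graph edges running from a slot at position at least $\mathrm{pos}(s_i)$ back to a vertex left of $v_2$, in number at least that of the arrows of $v_2,v_3$ crossing the line (those not already accounted for by the two edges of $\{v_1,v_2\}$). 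Using Lemma~\ref{lem:PropagationCrossEdge} to rule out degenerate positions of the far endpoints of these ``long'' edges, I would pick one such edge whose far endpoint lies at or beyond $v_3$ but is not $v_3$ itself and whose request is not $\{v_1,v_2\}$; its request is then nested with $\{v_2,v_3\}$ in the sense of Figure~\ref{fig:node_exchange_uncritical}(f) and, by Lemma~\ref{lem:node_exchange_uncritical}, contributes an unavoidable crossing to $\{v_2,v_3\}$. That crossing is pinned down by $s_i$, the request $\{v_1,v_2\}$, and that long edge, so it is not the crossing charged when $\{v_2,v_3\}$ later plays the left request of a different $3$-$0$ crossing. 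The sub-case in which a vertex does lie strictly between $v_2$ and $v_3$ is immediate: both of that vertex's edges cross $\{v_2,v_3\}$ and are unavoidable. (Alternatively, in the style of Lemma~\ref{lem:UnavoidableHouses}, one can name \emph{the} future request forced to straddle $v_2$ and $v_3$, which is identifiable by construction.)

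The main obstacle is this second step's case distinction: among the long edges guaranteed by the handshake count, one must always be able to select one whose far endpoint sits at a position at least $\mathrm{pos}(v_3)$ and differs from $v_3$ (a request sharing the vertex $v_3$ forces nothing), which requires exploiting that the count yields such an edge beyond the two edges of $\{v_1,v_2\}$, or falling back on a vertex between $v_2$ and $v_3$, or on the propagation arrow $v_3$ still owns after the placement. Keeping the bookkeeping correct in the two regimes — whether $v_3$ had degree $0$ or $1$ before the request, and whether $v_2$ sits left or right of $s_i$ — is the routine but delicate part.
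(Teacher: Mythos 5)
Your first step is a genuinely different argument from the paper's, and it is correct: the parity count over the $\mathrm{pos}(s_\beta)$ leftmost slots does show that in every complete drawing some edge crosses the request $\{v_2,v_3\}$, and applying this to an optimal drawing (in which, by \Cref{lem:node_exchange_uncritical}, every pair of requests may be assumed to sit in its optimal relative order) yields an unavoidable crossing incident to $\{v_2,v_3\}$. The paper instead argues locally: it compares the number $t$ of fulfilled slots to the right of $s_j$ with the number $b$ of vertices to the right of $v_2$, and in either case ($t>b$ or $t\le b$) uses \Cref{lem:PropagationCrossEdge} resp.\ \Cref{lem:Equator} to exhibit a concrete request with one endpoint strictly left of $v_2$ and one endpoint strictly right of $v_2$, distinct from $\{v_1,v_2\}$; such a request is of type 3-1, 2-1 or 2-2 relative to $\{v_2,v_3\}$ and therefore carries the unavoidable crossing. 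Your parity argument buys brevity and independence from the propagation-arrow machinery, but notice that it uses neither the hypothesis that no free slot lies right of $s_i$ nor the presence of $\{v_1,v_2\}$ in $s_i$: it actually proves that \emph{every} request in a $2$-regular instance is unavoidably crossed. That is true, but it cannot by itself tie a witness to this particular 3-0 configuration, which is what the later charging in \Cref{the:3-0Crossings} and \Cref{thm:FinalResult} relies on.

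That is where your second step comes in, and that is where the gap is. The handshake count does force edges running from slots at position at least $\mathrm{pos}(s_i)$ back to vertices left of $v_2$, but the edge $(v_1,s_i)$ of the request $\{v_1,v_2\}$ is itself such an edge and contributes nothing, since the pair $\{v_1,v_2\}$, $\{v_2,v_3\}$ is exactly the 3-0 pair with zero unavoidable crossings. The entire content of the lemma is to show that the count produces strictly more than what $\{v_1,v_2\}$ accounts for, and that the surplus request straddles $v_2$ rather than merely touching it; this is precisely the paper's $t>b$ versus $t\le b$ distinction, which you defer as ``the routine but delicate part.'' In addition, your selection criterion conflates the vertex endpoint of the long edge (which lies left of $v_2$) with the sibling vertex of its request, and demanding that the sibling lie at or beyond $v_3$ is stronger than necessary: any sibling strictly right of $v_2$ already gives a 3-1, 2-1 or 2-2 configuration with $\{v_2,v_3\}$ and hence an unavoidable crossing. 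In summary, step 1 is a clean alternative proof of the lemma's literal conclusion at the paper's own level of rigor, but the identification step you outline is not carried out, and it is the part where the paper's proof does its actual work.
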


\begin{proof}
 Let us assume that $v_1$ and $v_2$ is the pair of vertices adjacent to a filled slot $s_i$. 
  Let $v_2$ and $v_3$ be a pair of vertices from a new request, with $v_1 < v_2 < v_3$. 
  If Algorithm~\ref{alg:minallcrossings} creates a 3-0 crossing between the requests $\{v_1, v_2\}$ and $\{v_2,v_3\}$, it places the second request in a slot $s_j$ with $s_j < s_i$. 
  Recall that by Lemma~\ref{lem:30gaps}, $s_j$ must be the rightmost available slot to the left of $s_i$. 
  
  If there is no free slot $s_k$ with $s_k > s_i$, then all slots to the right of $s_j$ are filled. 
  Moreover the propagation arrows from $v_2$ and $v_3$ are 
  the two right most propagation arrows and both point to $s_j$, when the request $\{v_2,v_3\}$ arrives. 
  Let the number of fulfilled slots to the right of $s_j$ be $t$ and the number of vertices that are to the right of $v_2$ be $b$. 
  If the number of fulfilled slots is larger than the number of vertices on the bottom line ($t > b$), 
  there are at least two edges from vertices that are to the left of $v_2$ pointing to fulfilled slots 
  that are to the right of $s_j$. 
  It is not possible that a fulfilled slot right of $s_j$ is adjacent to two vertices that are to the left of $v_2$ due to \Cref{lem:PropagationCrossEdge}, because the propagation arrows of $v_2$ and $v_3$ would completely cross it. 
  Thus, the second adjacent vertex must be to the right of $v_2$, resulting in at least one unavoidable crossing for the request $\{v_2,v_3\}$. 
  If $t \leq b$, the slot $s_j$ is above the vertex $v_2$ or to the right of it. 
  In this case, the edge $\{v_1,s_i\}$ must be compensated by an edge that starts to the right of $v_2$ and points to a slot to the left of $s_j$. 
  It is not possible that the second edge of this slot comes from a vertex to the right of $v_2$, too (see \Cref{lem:Equator} or \Cref{lem:40gaps}). 
  Thus, when no slot $s_k$ exists, there must exist an edge that unavoidably crosses the request $\{v_2,v_3\}$. 
\end{proof}

What remains is an exhaustive case distinction analogous to the analysis done for the 4-0 crossings.
 
\begin{theorem}\label{the:3-0Crossings}
If Algorithm~\ref{alg:minallcrossings} creates a 3-0 crossing between
two requests $\{v_1, v_2\}$ and $\{v_2,v_3\}$, there is at least one
uniquely identifiable unavoidable crossing 
for at least one of the two requests. 
\end{theorem}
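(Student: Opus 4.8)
The plan is to mirror the structure of the 4-0 analysis: first dispose of the case where the 3-0 crossing sits at the boundary of the occupied region (this is exactly \Cref{lem:30edge}), and then treat the ``interior'' case by exploiting that \Cref{alg:minallcrossings} only produced the 3-0 crossing because every alternative placement was at least as expensive. As in \Cref{lem:30edge} we may assume without loss of generality that $\{v_1,v_2\}$ with $v_1<v_2$ is the request already sitting in $s_i$, and that $\{v_2,v_3\}$ with $v_2<v_3$ is the one the algorithm inserts into some slot $s_j<s_i$, thereby creating the three crossings (the opposite arrival order is symmetric). By \Cref{lem:30gaps}, $s_j$ is the rightmost free slot to the left of $s_i$ and every slot strictly between $s_j$ and $s_i$ is already full. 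If at that moment no free slot lies to the right of $s_i$, then \Cref{lem:30edge} directly supplies a uniquely identifiable unavoidable crossing with $\{v_2,v_3\}$ and we are done; so from now on assume such a slot exists and let $s_k$ be the leftmost one, so that, as in \Cref{lem:30gaps}, all slots strictly between $s_i$ and $s_k$ are full as well.

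Since the algorithm preferred $s_j$ to $s_k$ (and $s_j<s_k$, so a tie would be broken towards $s_j$), the number of edge--edge plus edge--propagation-arrow crossings of the $s_j$-placement is at most that of the $s_k$-placement; but inserting $\{v_2,v_3\}$ into $s_j$ already pays the three crossings with $\{v_1,v_2\}$, whereas inserting it into $s_k$ pays none of them. I would now carry out the same region bookkeeping as in the proofs of \Cref{lem:40gaps} and \Cref{lem:preferno40}: split the slots into the full block $X=(s_j,s_i)$ and the full block $Y=(s_i,s_k)$, and the vertices into those left of $v_2$, those strictly between $v_2$ and $v_3$, and those right of $v_3$. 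A short case check shows that edges and arrows from the left of $v_2$ into $X\cup Y$ only make the $s_j$-placement costlier, that edges with an endpoint strictly between $v_2$ and $v_3$ are neutral, and that the only contributions making the $s_k$-placement costlier than the $s_j$-placement are edges reaching from the right of $v_3$ into a slot of $X\cup Y$ (worth two crossings each), together with at most one propagation arrow of that type (the edge at $v_3$ itself, if present, contributing one more). Hence, to outweigh the three crossings with $\{v_1,v_2\}$, the graph must already contain leftward-reaching structure of this kind: at least one edge $(a,s_a)$ with $a>v_3$ and $s_a$ between $s_j$ and $s_k$, or its propagation-arrow analogue.

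The existence of this structure is what forces the unavoidable crossing. If the 3-0 is \emph{forced}, i.e.\ a previously placed request $\{v_3,v_4\}$ on a slot $s_m$ blocks the right side with no free slot between $s_i$ and $s_m$, I would apply \Cref{lem:UnavoidableHouses} to the block formed by $\{v_1,v_2\}$ and $\{v_3,v_4\}$ with $\{v_2,v_3\}$ in the role of the inner request $\{u,v\}$ --- using \Cref{lem:PropagationCrossEdge} to check that the vertices between $v_2$ and $v_3$ carry no obstructing arrows and that the remaining arrows of $v_2$ and $v_3$ straddle the block --- to obtain a future request $\{a,b\}$ with $a\le v_2$ and $b\ge v_3$ that unavoidably crosses both edges of $\{v_2,v_3\}$. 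If the 3-0 is not forced, the leftward-reaching edge $(a,s_a)$ spans from the right of $v_3$ to a slot on the left of $v_3$, so by \Cref{lem:Equator} it is balanced by an edge or arrow crossing the cut through $v_3$ in the opposite direction; chasing these witnesses (and again invoking \Cref{lem:UnavoidableHouses} when the compensating contributions are carried by arrows) produces an edge that must cross both edges of $\{v_2,v_3\}$ in every legal completion. In all cases this pins down one unavoidable crossing attributable to $\{v_2,v_3\}$ (or, in the nested situations, to $\{v_1,v_2\}$, which also suffices).

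The step I expect to be the main obstacle is the ``uniquely identifiable'' requirement: one must argue that the unavoidable crossing produced here was not already charged to a neighbouring 3-0 crossing or to one of the 4-0 crossings handled by \Cref{the:4-0Crossings}. As in \Cref{lem:noforced40UnavoidableCrossing}, the way to secure injectivity of the charging is to count only crossings created by the placement of $\{v_2,v_3\}$ itself, and to route the argument through the future ``housing'' request of \Cref{lem:UnavoidableHouses} rather than through crossings that a 4-0 lemma may already have spent. Keeping this injectivity consistent across the forced, non-forced and boundary cases --- while simultaneously tracking, via \Cref{obs:shiftingarrows} and \Cref{lem:PropagationCrossEdge}, how inserting $\{v_2,v_3\}$ perturbs the propagation arrows, which is what makes the $s_j$-versus-$s_k$ cost comparison in the second paragraph rigorous --- is the delicate part of the proof.
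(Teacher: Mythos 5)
Your setup (reduce to the interior case via \Cref{lem:30gaps} and \Cref{lem:30edge}, then do the $X$/$Y$ and $A$/$B$/$C$ bookkeeping to conclude that the structure favouring $s_j$ must consist of edges reaching from the right of $v_3$ into $X\cup Y$) matches the paper's proof. But there is a genuine gap in how you then extract the unavoidable crossing. Your forced case rests on the claim that \Cref{lem:PropagationCrossEdge} guarantees the remaining propagation arrows of $v_2$ and $v_3$ \emph{straddle} the filled block, so that \Cref{lem:UnavoidableHouses} applies. That is not what \Cref{lem:PropagationCrossEdge} gives you: it only rules out both arrows crossing the block (e.g.\ both pointing to $s_k$); it does \emph{not} rule out both arrows pointing to the same free slot $s_j$ on the left. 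In that configuration \Cref{lem:UnavoidableHouses} is inapplicable, since it explicitly requires the two arrows to point to two different slots on opposite sides of the filled block. This is precisely the case the paper spends most of its proof on, and it is resolved by a backward-in-time argument that your proposal does not contain: identify the last time step $t'$ at which a request pushed both arrows of $v_2$ and $v_3$ onto $s_j$; show this request cannot be $\{v_1,v_2\}$ itself (placing $\{v_1,v_2\}$ on $s_j$ would have been cheaper); show the push must go from right to left (a left-to-right push would already violate \Cref{lem:PropagationCrossEdge}); and conclude that the pushing request has one endpoint left of $v_2$ and the other either also left of $v_2$ (contradicting \Cref{lem:PropagationCrossEdge} at the end of $t'$) or right of $v_3$, in which case it unavoidably crosses both edges of $\{v_2,v_3\}$.

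Your treatment of the non-forced case via \Cref{lem:Equator} and ``chasing witnesses'' does not close this hole either: \Cref{lem:Equator} only supplies an edge \emph{or a propagation arrow} crossing the cut in the opposite direction, and such a witness need not be a realized edge, need not cross both edges of $\{v_2,v_3\}$, and may itself point to $s_j$ together with the arrows of $v_2$ and $v_3$. The paper does not use \Cref{lem:Equator} here at all; it runs the same $t'$ argument for the $s_y\in Y$ case (after a symmetry reduction to ``both arrows point to $s_j$''). Without the temporal argument, the central same-slot configuration is unaccounted for, so the proof as proposed does not go through.
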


\begin{proof}
  Let us assume that $v_1$ and $v_2$ is the pair of vertices adjacent to a filled slot $s_i$. 
  Let $v_2$ and $v_3$ be a pair of vertices from a new request, with $v_1 < v_2 < v_3$. 
  If Algorithm~\ref{alg:minallcrossings} creates a 3-0 crossing between the requests $\{v_1, v_2\}$ and $\{v_2,v_3\}$, it places the second request in a slot $s_j$ with $s_j < s_i$. 
  Recall that, by Lemma~\ref{lem:30gaps}, $s_j$ must be the rightmost available slot to the left of $s_i$. 
  Moreover, the case where there is no available slot $s_i < s_k$ is already covered by Lemma~\ref{lem:30edge}.
  
  Thus, in the following we assume our algorithm can choose between the slot $s_j$, resulting in a 3-0 crossing, and the slot $s_k$ which does not generate a 3-0 crossing ($s_j < s_i < s_k$). 
  This situation is depicted in \Cref{fig:preferno30} (a). 
  We look at the cases where \Cref{alg:minallcrossings} prefers a placement on $s_j$. 
  In the following we investigate which edges must exits to make the slot $s_j$ more preferable. 
  
  To count the crossings we divide the relevant slots into two subsets. 
  The subset $X$ contains the slots between $s_j$ and $s_i$ and the subset $Y$ contains the slots between $s_i$ and $s_k$. 
  We do not need to consider the area to the left of $s_j$ or to the right of $s_k$, because the number of crossings with edges incident to a slot in one of these areas is independent of the position of the request $\{v_2, v_3\}$. 
  We also divide the vertices on the bottom line into three different subsets. 
  The vertices to the left of $v_2$ form the set $A$. 
  The vertices between $v_2$ and $v_3$ are in the set $B$ and the vertex $v_3$ 
  and all vertices to its right form the set $C$.  
  
  \Cref{fig:preferno30} (b) shows which edges or propagation arrows cross 
  the new request only if placed in the the slot $s_j$ (in green) 
  and the ones that cross the new request only if placed in slot $s_k$ (in red). 
  An edge that is incident to a vertex between $v_2$ and $v_3$ (area $B$) does not favor a particular slot, because it crosses the request $\{v_2,v_3\}$ once, independent of its placement. 

  There must be edges or propagation arrows in our graph such that, 
  avoiding the 3-0 crossing results in at least three crossings in order to make a placement in 
  $s_j$ favorable. 
  Thus, requests like in \Cref{fig:preferno30} (c) or (d) 
  must be present in our graph in order to make the 3-0 crossing a feasible choice. 
  
  But, if a vertex between $v_2$ and $v_3$ exists (area $B$ is not empty), 
  its edges unavoidably cross the request $\{v_2, v_3\}$ two times, 
  fulfilling the statement of our lemma. 
  So, in the following, we can assume that there is no vertex between $v_2$ and $v_3$. 
  Therefore, the edges that make the 3-0 crossing more favorable must be incident to vertices from $C$. 
  The corresponding slot for these edges can be either in $X$ or $Y$.
  
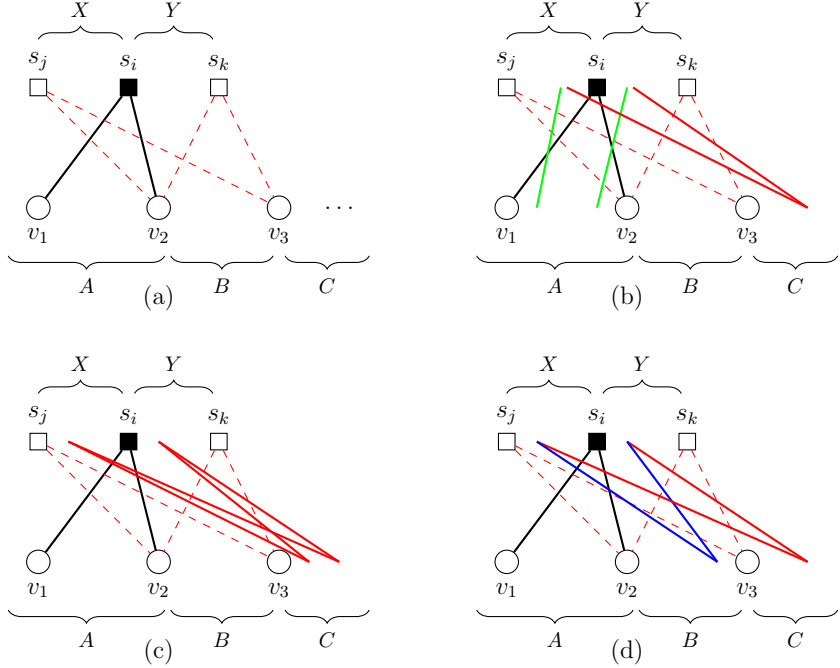
\begin{figure}[t]
	\centering
		\subfigure{
			\begin{tikzpicture}[node distance=1cm,
			slot/.style={draw,rectangle},
			vertex/.style={draw,circle},
			scale=0.8,every node/.style={scale=0.8}
			]
			\node[vertex, label=below:$v_1$] (v1) at (1,0) {};
			\node[vertex, label=below:$v_2$] (v2) at (3,0) {};
			\node[vertex, label=below:$v_3$] (v3) at (5,0) {};
			\node[slot, label=above:$s_j$] (si) at (1,2) {};
			\node[slot,fill=black, label=above:$s_i$] (sj) at (2.5,2) {};
			\node[slot, label=above:$s_k$] (sk) at (4,2) {};
			\node at (6,0) {$\hdots$};
			
			\draw[thick] (sj) edge (v1)
			(sj) edge (v2);
			
			\draw[dashed, red] (si) edge (v3)
			(si) edge (v2)
			(sk) edge (v3)
			(sk) edge (v2);
			
			\draw[decorate,decoration={brace,amplitude=5pt,mirror}] (0.5,-0.8) -- (3.1,-0.8) node [black,midway,yshift=-0.5cm] {\footnotesize $A$};
			\draw[decorate,decoration={brace,amplitude=5pt,mirror}] (3.2,-0.8) -- (4.9,-0.8) node [black,midway,yshift=-0.5cm] {\footnotesize $B$};
			\draw[decorate,decoration={brace,amplitude=5pt,mirror}] (5.1,-0.8) -- (6.5,-0.8) node [black,midway,yshift=-0.5cm] {\footnotesize $C$};
			
			\draw[decorate,decoration={brace,amplitude=5pt}] (1,2.8) -- (2.4,2.8) node [black,midway,yshift=0.5cm] {\footnotesize $X$};
			\draw[decorate,decoration={brace,amplitude=5pt}] (2.6,2.8) -- (3.9,2.8) node [black,midway,yshift=0.5cm] {\footnotesize $Y$};

			\node[] (a) at (3,-1.5) {(a)};
			\end{tikzpicture}
		}\hspace*{1cm}
		\subfigure{
			\begin{tikzpicture}[node distance=1cm,
			slot/.style={draw,rectangle},
			vertex/.style={draw,circle},
			scale=0.8,every node/.style={scale=0.8}
			]
			\node[vertex, label=below:$v_1$] (v1) at (1,0) {};
			\node[vertex, label=below:$v_2$] (v2) at (3,0) {};
			\node[vertex, label=below:$v_3$] (v3) at (5,0) {};
			\node[slot, label=above:$s_j$] (si) at (1,2) {};
			\node[slot,fill=black, label=above:$s_i$] (sj) at (2.5,2) {};
			\node[slot, label=above:$s_k$] (sk) at (4,2) {};
			
			\draw[thick] (sj) edge (v1)
			(sj) edge (v2);
			
			\draw[dashed, red] (si) edge (v3)
			(si) edge (v2)
			(sk) edge (v3)
			(sk) edge (v2);
			
			\draw[red,thick] (6,0) -- (3.1,2);
			\draw[green,thick] (2.5,0) -- (3,2);
			\draw[red,thick] (6,0) -- (2,2);
			\draw[green,thick] (1.5,0) -- (1.9,2);
			
			\draw[decorate,decoration={brace,amplitude=5pt,mirror}] (0.5,-0.8) -- (3.1,-0.8) node [black,midway,yshift=-0.5cm] {\footnotesize $A$};
			\draw[decorate,decoration={brace,amplitude=5pt,mirror}] (3.2,-0.8) -- (4.9,-0.8) node [black,midway,yshift=-0.5cm] {\footnotesize $B$};
			\draw[decorate,decoration={brace,amplitude=5pt,mirror}] (5.1,-0.8) -- (6.5,-0.8) node [black,midway,yshift=-0.5cm] {\footnotesize $C$};
			
			\draw[decorate,decoration={brace,amplitude=5pt}] (1,2.8) -- (2.4,2.8) node [black,midway,yshift=0.5cm] {\footnotesize $X$};
			\draw[decorate,decoration={brace,amplitude=5pt}] (2.6,2.8) -- (3.9,2.8) node [black,midway,yshift=0.5cm] {\footnotesize $Y$};
			
			\node[] (b) at (3,-1.5) {(b)};
			\end{tikzpicture}
		}\\
		\subfigure{
			\begin{tikzpicture}[node distance=1cm,
			slot/.style={draw,rectangle},
			vertex/.style={draw,circle},
			scale=0.8,every node/.style={scale=0.8}
			]
			\node[vertex, label=below:$v_1$] (v1) at (1,0) {};
			\node[vertex, label=below:$v_2$] (v2) at (3,0) {};
			\node[vertex, label=below:$v_3$] (v3) at (5,0) {};
			
			\node[slot, label=above:$s_j$] (si) at (1,2) {};
			\node[slot,fill=black, label=above:$s_i$] (sj) at (2.5,2) {};
			\node[slot, label=above:$s_k$] (sk) at (4,2) {};
			
			\draw[thick] (sj) edge (v1)
			(sj) edge (v2);
			\draw[dashed, red] (si) edge (v3)
			(si) edge (v2)
			(sk) edge (v3)
			(sk) edge (v2);
			
			\draw[red,thick] (6,0) -- (3,2);
			\draw[red,thick] (5.5,0) -- (3,2);
			
			\draw[red,thick] (6,0) -- (1.5,2);
			\draw[red,thick] (5.5,0) -- (1.5,2);
			
			\draw[decorate,decoration={brace,amplitude=5pt,mirror}] (0.5,-0.8) -- (3.1,-0.8) node [black,midway,yshift=-0.5cm] {\footnotesize $A$};
			\draw[decorate,decoration={brace,amplitude=5pt,mirror}] (3.2,-0.8) -- (4.9,-0.8) node [black,midway,yshift=-0.5cm] {\footnotesize $B$};
			\draw[decorate,decoration={brace,amplitude=5pt,mirror}] (5.1,-0.8) -- (6.5,-0.8) node [black,midway,yshift=-0.5cm] {\footnotesize $C$};
			
			\draw[decorate,decoration={brace,amplitude=5pt}] (1,2.8) -- (2.4,2.8) node [black,midway,yshift=0.5cm] {\footnotesize $X$};
			\draw[decorate,decoration={brace,amplitude=5pt}] (2.6,2.8) -- (3.9,2.8) node [black,midway,yshift=0.5cm] {\footnotesize $Y$};

			\node[] (c) at (3,-1.5) {(c)};
			\end{tikzpicture}
		}\hspace*{1cm}
		\subfigure{
			\begin{tikzpicture}[node distance=1cm,
			slot/.style={draw,rectangle},
			vertex/.style={draw,circle},
			scale=0.8,every node/.style={scale=0.8}
			]
			\node[vertex, label=below:$v_1$] (v1) at (1,0) {};
			\node[vertex, label=below:$v_2$] (v2) at (3,0) {};
			\node[vertex, label=below:$v_3$] (v3) at (5,0) {};
			
			\node[slot, label=above:$s_j$] (si) at (1,2) {};
			\node[slot,fill=black, label=above:$s_i$] (sj) at (2.5,2) {};
			\node[slot, label=above:$s_k$] (sk) at (4,2) {};
			
			\draw[thick] (sj) edge (v1)
			(sj) edge (v2);
			\draw[dashed, red] (si) edge (v3)
			(si) edge (v2)
			(sk) edge (v3)
			(sk) edge (v2);
			
			\draw[red,thick] (6,0) -- (3,2);
			\draw[blue,thick] (4.5,0) -- (3,2);
			
			\draw[red,thick] (6,0) -- (1.5,2);
			\draw[blue,thick] (4.5,0) -- (1.5,2);
			
			\draw[decorate,decoration={brace,amplitude=5pt,mirror}] (0.5,-0.8) -- (3.1,-0.8) node [black,midway,yshift=-0.5cm] {\footnotesize $A$};
			\draw[decorate,decoration={brace,amplitude=5pt,mirror}] (3.2,-0.8) -- (4.9,-0.8) node [black,midway,yshift=-0.5cm] {\footnotesize $B$};
			\draw[decorate,decoration={brace,amplitude=5pt,mirror}] (5.1,-0.8) -- (6.5,-0.8) node [black,midway,yshift=-0.5cm] {\footnotesize $C$};
			
			\draw[decorate,decoration={brace,amplitude=5pt}] (1,2.8) -- (2.4,2.8) node [black,midway,yshift=0.5cm] {\footnotesize $X$};
			\draw[decorate,decoration={brace,amplitude=5pt}] (2.6,2.8) -- (3.9,2.8) node [black,midway,yshift=0.5cm] {\footnotesize $Y$};

			\node[] (d) at (3,-1.5) {(d)};
			\end{tikzpicture}
		}
		\caption{			
			We have two possible placements for the request $v_2$, $v_3$. 
			Red edges contribute extra crossings to the placement in $s_k$ and green edges contribute extra crossings to the placement in $s_j$. 
			The blue edge cross with one edge of the request independent of its placement. 
		}
		\label{fig:preferno30}
\end{figure}
We start with the case that there is a filled slot $s_x \in X$ 
with both edges incident to the set $C$ and analyze it in more detail. 
  Assume that the request $\{v_2, v_3\}$ arrives in the time step $t$. 
  Thus, at the end of time step $t-1$ the slots $s_i$ and $s_x$ are filled. 
  If the propagation arrows of $v_2$ and $v_3$ point to two different slots ($s_j$ and $s_k$), we can apply \Cref{lem:UnavoidableHouses} and know that eventually there will be two unavoidable crossings
  between the request $\{v_2, v_3\}$ and the overarching request. 
  To see this more clearly, we point out explicitly how we can apply \Cref{lem:UnavoidableHouses}: 
  The request $\{v_2,v_3\}$ is the request $\{u,v\}$, the free slots $s_j$ and $s_k$ are the slots $s_l$ and $s_r$, the filled slots $s_i$ and $s_x \in X$ correspond to the slots $s_x$ and $s_y$. 
  
  If the propagation arrows of $v_2$ and $v_3$ point to the same slot \Cref{lem:PropagationCrossEdge} prohibits that the propagation arrows point towards $s_k$.
  Therefore, the only possible configuration at the end of time step $t-1$ is that both propagation arrows already point towards $s_j$. 
  This indicates that the placement of a previous request from a time step $t^\prime < t$ lead
  to this situation. 

  In the following we look at the time step $t^\prime$, in which the request arrived that pushed the propagation arrows - starting at $v_2$ and $v_3$ - to the slot $s_j$ for the last time. 
  Note that the request $\{v_1, v_2\}$, placed in $s_i$, cannot push the propagation arrows to $s_j$, because positioning it onto $s_j$ instead of $s_i$ creates fewer crossings. 
  To be precise, the three crossings between the propagation arrows of $v_2$ and $v_3$ and the edges of the request $\{v_1, v_2\}$ can be avoided by placing it in $s_j$ instead of $s_i$. This means, that when
  the request $\{v_1, v_2\}$ arrived, the propagation arrow configuration must have been different.
  Thus, the slot $s_i$ must be filled and there must be a different request that is responsible for the last push of the propagation arrows in time step $t^\prime$. 
  
  So, at the beginning of time step $t^\prime$, at least one of the propagation arrows of $v_2$ and $v_3$
  points to a free slot left of $s_j$ or to a free slot right of $s_j$. 
  But, if the propagation arrow of $v_2$ points to a free slot left of $s_j$, 
  there must be another propagation arrow, starting to the right of $v_2$, that points to $s_j$. 
  Thus, there are two propagation arrows, starting at a vertex to the right of $v_2$ 
  that cross the edges incident to $s_i$, violating \Cref{lem:PropagationCrossEdge}. 
  Thus, the request in time step $t^\prime$ cannot push the propagation arrows from left to right;
  it must push the propagation arrows from a free slot to the right of $s_j$ to the left, onto $s_j$. 
  Note, that this implies that all of the slots $s_x\in X$ are also already filled in the time step $t^\prime$, 
  because otherwise propagation arrows, coming from the right of $v_2$, would point to $s_x\in X$ 
  and cross both edges incident to $s_i$, violating again \Cref{lem:PropagationCrossEdge}. 
  
  To push the propagation arrows to the left, it is necessary that at least one vertex of the request is to the left of $v_3$. 
  And because there is no vertex between $v_2$ and $v_3$, it must also be to the left of $v_2$. 
  We differentiate between two different cases. 
  The other vertex of this request can be to the left of $v_2$ or to the right of $v_3$. 
  
  If the other vertex of the request is also to the left of $v_2$, 
  the propagation arrows of $v_2$ and $v_3$ cross both edges of this request at the end of $t^\prime$, 
  violating \Cref{lem:PropagationCrossEdge}. 
  If the second vertex is to the right of $v_3$ 
  we have a request that unavoidably crosses all edges of $v_2$ and $v_3$. 
  Thus, for every feasible configuration for the time step $t^\prime$, we have unavoidable crossings. 
%
%
%
%
%
	
  Now, we consider the case that there is a slot $s_y \in Y$ with two incident edges to vertices from $C$ and analyze it in more detail.  
  Assume that the request $\{v_2,v_3\}$ arrives in the time step $t$. 
  Thus, at the end of time step $t-1$ the slot $s_i$ and $s_y$ are filled. 
  We differentiate between two cases. 
  Either the propagation arrows of $v_2$ and $v_3$
  point to different slots, $s_j$ and $s_k$ respectively, or they point to the same slot. 

  If they point to different slots, we can apply \Cref{lem:UnavoidableHouses} just as in the previous
  case and know that there will be two unavoidable crossings between the edges from the request
  $\{v_2,v_3\}$ and the future overarching request. 
    
  In the following, we assume that the propagation arrows from $v_2$ and $v_3$
  point to the same slot, $s_j$ or $s_k$. 
  Note, if both point to $s_k$, the vertex $v_3$ must be adjacent to $s_y$,
  otherwise \Cref{lem:PropagationCrossEdge} is violated. 
  But in this case, the configuration in which the propagation arrows point to the same slot becomes symmetric. 
  Thus, in the following, we look w.l.o.g. at the case that both propagation arrows point to $s_j$. 
  
  Just as in the previous case, we look at the last time step 
  in which the propagation arrows are pushed to the slot $s_j$ and call it $t^\prime$. 
  At the start of time step $t^\prime$, the request $\{v_1, v_2\}$ must already be placed in the slot $s_i$,
  by the same argument as before. 
  It is, again, also not possible that the last time the propagation arrows are pushed is from left to right: 
  If the propagation arrow of $v_2$ points to a slot left of $s_j$, the propagation arrows pointing to 
  $s_j$ start at a vertex to the right of $v_2$ and violate \Cref{lem:PropagationCrossEdge}. 
  Thus, at least one propagation arrow from $v_3$ is pointing to a slot right of $s_j$, 
  at the start of $t^\prime$. 
  To push the propagation arrows to the left, 
  it is necessary that at least one vertex of the request is to the left of $v_3$.  
  The other vertex can be to the left of $v_2$ too or to the right of $v_3$ again. 
    
  If the other vertex of the request is also to the left of $v_2$, by the same argument as before,
  the propagation arrows of $v_2$ and $v_3$ cross both edges at the end of $t^\prime$, 
  violating \Cref{lem:PropagationCrossEdge}. 
  If the second vertex is to the right of $v_3$ it unavoidably crosses two times with the request $\{v_2, v_3\}$.
  
  Thus, we have proven, finally that for every feasible configuration leading to a 3-0 crossing there must be,
  at least by the end of the request sequence, two unavoidable crossings with the edges of the request 
  $\{v_2,v_3\}$.
\end{proof}

This theorem shows that 3-0 crossings incurred by Algorithm~\ref{alg:minallcrossings} only happen in conjunction
with two extra unavoidable crossings with the request generating the 3-0 crossing, this means, that any 3-0 
crossing is in effect a 5-2 crossing, which would be better than $5$-competitive.

\subsection{The Upper Bound}
 We can finally put all results together to conclude with an upper bound for the competitive ratio of
 Algorithm~\ref{alg:minallcrossings} to solve online slotted OSCM-$2$ on $2$-regular graphs.
 
\begin{theorem}\label{thm:FinalResult}
\Cref{alg:minallcrossings} solves the online slotted OSCM-$2$ on $2$-regular graphs with a competitive ratio of at most $5$.
\end{theorem}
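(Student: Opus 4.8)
The plan is to decompose both $\mathrm{ALG}$, the number of crossings produced by \Cref{alg:minallcrossings}, and $\mathrm{OPT}$, the number of unavoidable crossings, as sums over \emph{pairs} of requests, using \Cref{lem:node_exchange_uncritical} as the backbone. First I would observe that in any assignment every crossing is a crossing between the edges of two distinct requests, since the two edges of a single request meet in their common slot and hence do not cross; so the total number of crossings of any assignment is the sum, over all unordered pairs $P$ of requests, of the number of crossings it produces between $P$. For a fixed pair the vertex positions are fixed, so only the relative order of the two assigned slots matters, and by the exhaustive classification of \Cref{fig:node_exchange_uncritical} this number is one of two values $a_P \ge b_P$ determined by the type of $P$: one checks $b_P \in \{0,1,2\}$ and $a_P - b_P = 0$ for types $1$-$1$ and $2$-$2$, $=1$ for $2$-$1$, $=2$ for $3$-$1$, $=3$ for $3$-$0$, and $=4$ for $4$-$0$. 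By \Cref{lem:node_exchange_uncritical} an optimal assignment orders every pair optimally (and, conversely, any such assignment is optimal), so $\mathrm{OPT} = \sum_P b_P$ and the number of avoidable crossings of the algorithm is $\mathrm{ALG} - \mathrm{OPT} = \sum_{P\ \mathrm{sub\text{-}optimal}} (a_P - b_P)$, where the sum ranges over the pairs whose slot order \Cref{alg:minallcrossings} got wrong.

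The second step is a charging argument whose goal is $\mathrm{ALG} - \mathrm{OPT} \le 4\,\mathrm{OPT}$, which immediately yields the claimed bound $\mathrm{ALG} \le 5\,\mathrm{OPT}$. I would charge every avoidable crossing to unavoidable crossings so that each unavoidable crossing is charged at most $4$ times. Sub-optimally ordered pairs of type $1$-$1$ or $2$-$2$ create no avoidable crossings and are ignored. A sub-optimally ordered $2$-$1$ (resp.\ $3$-$1$) pair creates $1$ (resp.\ $2$) avoidable crossings, but that same pair still contributes $b_P = 1$ unavoidable crossing, lying between its own two requests, to which those $1$ or $2$ avoidable crossings are charged (this already subsumes the earlier remark that, absent $3$-$0$ and $4$-$0$ crossings, the algorithm is $3$-competitive). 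For a $3$-$0$ crossing, \Cref{the:3-0Crossings} supplies two uniquely identifiable unavoidable crossings with the request that produces it, and I charge the $3$ avoidable crossings to those two. For a $4$-$0$ crossing, \Cref{the:4-0Crossings} supplies one uniquely identifiable unavoidable crossing, and I charge all $4$ avoidable crossings to it — this is the case that forces the constant to be exactly $5$. Since the crossings handed out by \Cref{the:3-0Crossings,the:4-0Crossings} are uniquely identifiable, and the $b_P$-crossings used for $2$-$1$ and $3$-$1$ pairs lie strictly between those pairs' own requests, no unavoidable crossing receives charge from more than one rule, so its total charge is at most $\max\{1,2,4\} = 4$.

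Finally I would dispose of the degenerate case $\mathrm{OPT} = 0$: then no pair has $b_P \ge 1$, so every pair is of type $3$-$0$ or $4$-$0$, and if \Cref{alg:minallcrossings} had ordered any of them sub-optimally, \Cref{the:3-0Crossings} or \Cref{the:4-0Crossings} would exhibit an unavoidable crossing, contradicting $\mathrm{OPT} = 0$; hence $\mathrm{ALG} = 0$ as well and the ratio is vacuously at most $5$. The main obstacle I anticipate is not the arithmetic but making the ``uniquely identifiable'' bookkeeping airtight across types — in particular, arguing that the unavoidable crossing attached to a $4$-$0$ pair (a future ``housing'' crossing in the sense of \Cref{lem:UnavoidableHouses}) is never also the unavoidable crossing we charge for some other sub-optimally ordered pair, and handling overlapping $3$-$0$ and $4$-$0$ configurations. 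This is exactly what \Cref{lem:forced40-stacking,lem:noforced40UnavoidableCrossing} and the proofs of \Cref{the:4-0Crossings,the:3-0Crossings} are engineered to guarantee, so the remaining work is to verify that their uniqueness statements compose correctly when several types of sub-optimal pairs coexist in one instance.
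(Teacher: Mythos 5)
Your proposal is correct and follows essentially the same route as the paper's proof: decompose the crossing count over pairs of requests via \Cref{lem:node_exchange_uncritical}, observe that all pair types other than $3$-$0$ and $4$-$0$ are locally at worst $3$-competitive, and invoke \Cref{the:4-0Crossings,the:3-0Crossings} to pair each $4$-$0$ (resp.\ $3$-$0$) crossing with one (resp.\ two) uniquely identifiable unavoidable crossings, yielding the worst local ratio of $5$. Your write-up is in fact somewhat more explicit than the paper's about the charging bookkeeping and the degenerate $\mathrm{OPT}=0$ case, but the underlying argument is the same.
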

\begin{proof}
In order to calculate the competitive ratio of Algorithm~\ref{alg:minallcrossings}
we simply compare for every pair of requests, what the optimal placement compared
to the placement chosen by Algorithm~\ref{alg:minallcrossings} would be.

We exhaustively look at possible placements of pairs of requests, as depicted in Figure~\ref{fig:node_exchange_uncritical}.
Observe, that except for the 3-0 crossings
and 4-0 crossings, the rest of possible request pairs are no worse than 3-competitive regardless of
the algorithm used. Moreover, Theorem~\ref{the:4-0Crossings} ensures that for every 4-0 crossing
incurred by Algorithm~\ref{alg:minallcrossings}, there is at least one uniquely identifiable unavoidable
crossing, meaning that the number of crossings incurred by Algorithm~\ref{alg:minallcrossings} is 5, but 
optimally there must be at least 1 unavoidable crossing. Finally, Theorem~\ref{the:3-0Crossings} guarantees that there
are also two uniquely identifiable unavoidable crossings for every occurrence of a 3-0 crossing. Thus,
Algorithm~\ref{alg:minallcrossings} is at most 5-competitive.
\end{proof}
  
    \section{Conclusion}
    In this work we have shown that the general slotted OSCM-$k$ is
    not competitive for any $k\geq 2$, which led us to analyze the
    case of the slotted OSCM-$k$ on $2$-regular graphs. On this graph
    class, we have given a  construction which proved a lower bound on
    the competitive ratio of $4/3$.
    Algorithm~\ref{alg:minallcrossings}, which utilizes the
    information of the remaining space and unavoidable crossings in
    the graph in the form of our so-called \emph{propagation arrows},
    was proven to be at most $5$-competitive. This was done by limiting the
    number of total crossings generated by pairs of requests that do not
    cross one another in an optimal solution.

    There are several open questions which we were not able to answer
    in the scope of this work. First, there is still a considerable
    gap between the lower and upper bound of the competitive ratio
    that we have given. We assume that
    Algorithm~\ref{alg:minallcrossings} performs better than analyzed
    and that the upper bound can be made tighter.

    While Theorem~\ref{thm:oscmunrestricted} proves
    non-competitiveness on general graphs for any $k \geq 2$, the case
    of regular graphs with degree $3$ or higher is still open. 
    We suggest to analyze this graph class further.

\bibliography{bibliography}
\bibliographystyle{plain}
\end{document}